\newtheorem{theorem}{Theorem}
\newtheorem{assumption}{Assumption}
\newtheorem{definition}[theorem]{Definition}
\newtheorem{proposition}[theorem]{Proposition}
\newenvironment{proof}[1][Proof]{\noindent \textbf{#1.} }{\  \rule{0.5em}{0.5em}}
\newcommand{\bq}{\begin{equation*}}
\newcommand{\eq}{\end{equation*}}
\newcommand{\bqn}{\begin{equation}}
\newcommand{\eqn}{\end{equation}}
\newcommand{\bqq}{\begin{eqnarray*}}
\newcommand{\eqq}{\end{eqnarray*}}
\newcommand{\bqqn}{\begin{eqnarray}}
\newcommand{\eqqn}{\end{eqnarray}}
\begin{document}

\title{Representation of homothetic forward performance processes in
stochastic factor models via ergodic and infinite horizon BSDE\thanks{%
The authors would also like to thank the Editor, the Associate
Editor and two referees for their valuable suggestions which led to
a much improved version of the paper. The authors would like to
thank the Oxford-Man Institute for its hospitality and support of
their visits there during which most of this work was produced. The
work was presented at seminars and workshops at King's College
London, Princeton U., Shandong U., U. of Freiburg and UCSB. The
authors would like to thank the participants for their helpful
comments.}}
\author{Gechun Liang\thanks{%
Dept.\ of Mathematics, King's College, London and the Oxford-Man Institute,
University of Oxford; email: \texttt{gechun.liang@kcl.ac.uk}} \and Thaleia
Zariphopoulou\thanks{%
Depts.\ of Mathematics and IROM, The University of Texas at Austin, and the
Oxford-Man Institute, University of Oxford; email: \texttt{%
zariphop@math.utexas.edu}.} }
\date{First version: April 2014; This version: \today }
\maketitle

\begin{abstract}
In an incomplete market, with incompleteness stemming from stochastic
factors imperfectly correlated with the underlying stocks, we derive
representations of homothetic (power, exponential and logarithmic) forward
performance processes in factor-form using ergodic BSDE. We also develop a
connection between the forward processes and infinite horizon BSDE, and,
moreover, with risk-sensitive optimization. In addition, we develop a
connection, for large time horizons, with a family of classical homothetic
value function processes with random endowments.
\end{abstract}

\section{Introduction}

This paper contributes to the study of homothetic forward performance
processes, namely, of power, exponential and logarithmic type, in a
stochastic factor market model. Stochastic factors are frequently used to
model the predictability of stock returns, stochastic volatility and
stochastic interest rates (for an overview of the literature, we refer the
reader to the review paper \cite{Z-RADON}). Forward performance processes
were introduced and developed in \cite{MZ-first}, \cite{MZ0} and \cite%
{MZ-Carmona} (see, also, \cite{MZ-Kurtz}, \cite{MZ1} and \cite{MZ2}). They
complement the classical expected utility paradigm in which the utility is a
deterministic function chosen at a single point in time (terminal horizon).
The value function process is, in turn, constructed backwards in time, as
the Dynamic Programming Principle yields. As a result, there is limited
flexibility to incorporate updating of risk preferences, rolling horizons,
learning and other realistic \textquotedblleft forward in nature" features,
if one requires that time-consistency is being preserved at all times.
Forward investment performance criteria alleviate some of these shortcomings
and offer the construction of a genuinely dynamic mechanism for evaluating
the performance of investment strategies as the market evolves across
(arbitrary) trading horizons.

In \cite{MZ3} a stochastic PDE (cf. (\ref{SPDE}) herein) was proposed for
the characterization of forward performance processes in a market with It%
\^{o}-diffusion price processes. It may be viewed as the forward analogue of
the finite-dimensional classical Hamilton-Jacobi-Bellman (HJB) equation that
arises in Markovian models of optimal portfolio choice. Like the HJB\
equation, the forward SPDE is fully nonlinear and possibly degenerate. In
addition, however, it is ill-posed and its volatility coefficient is an
input that the investor chooses while, in the classical case, the
corresponding volatility is uniquely determined from the It\^{o}
decomposition of the value function process. These features result in
significant technical difficulties and, as a result, the use of the forward
SPDE\ for general It\^{o}-diffusion market dynamics has been limited.
Results for time-monotone processes (zero forward volatility) can be found
in \cite{MZ2},\textbf{\ }and a connection between the forward performance
process and optimal portfolios has been explored in \cite{ElKaroui} (see,
also \cite{ElKaroui2}). In semi-martingale markets, an axiomatic
construction for exponential preferences can be found in \cite{Gordan}.

When the market coefficients depend explicitly on stochastic
factors, as herein, there is more structure that can be explored by
seeking performance criteria represented as \textit{deterministic
functions} of these factors. As it was first noted in \cite{MZ3},
the SPDE\ reduces to a finite-dimensional HJB equation (see equation
(51) therein) that these functions are expected to satisfy. Still,
however, this HJB\ equation remains ill-posed and how to solve it is
an open problem.

For a single stochastic factor, two cases have been so far analyzed,
specifically, for the power and exponential cases. The power case was
treated in \cite{NZ} where the homotheticity reduces the forward HJB\ to a
semilinear PDE which is, in turn, linearized using a distortion
transformation. One then obtains a one-dim. ill-posed linear equation with
state dependent coefficients, which is solved using an extension of Widder's
theorem. The exponential case was studied in \cite{MZ-Kurtz} (see, also,
\cite{MZ0} and \cite{LSZ}) in the context of\textit{\ }forward exponential
indifference prices.

Multi-factor modeling of forward performance processes is considered in \cite%
{NT}, where the complete market setting is analyzed in detail. Because of
market completeness, the Legendre-Fenchel transformation linearizes the
forward SPDE,\ and a multi-dim. ill-posed linear equation with space/time
dependent coefficients arises. Its solutions are, in turn, characterized via
an extension of Widder's theorem developed by the authors. More recently,
multi-factors of different (slow and fast) scales in incomplete markets were
studied in \cite{SSZ}, and asymptotic expansions were derived for the
limiting regimes. Therein, the leading order terms are expressed as
time-monotone forward performances with appropriate stochastic
time-rescaling, resulting from averaging phenomena. The first order terms
reflect compiled changes in the investor's preferences based on market
changes and her past performance.

Herein, we initiate a study to generalize the existing results on forward
processes in factor-form allowing for market incompleteness, multi-stocks
and multi-stochastic factors. We first focus on homothetic processes (power,
exponential and logarithmic), for these are also the popular choices of risk
preferences in the classical setting.

For such cases, the homotheticity reduces the forward SPDE to an
ill-posed multi-dimensional semilinear PDE (cf. (\ref{f-eqn}),
(\ref{exponential-equation})), which however \textit{cannot }be
linearized. To our knowledge, no results exist to date for such
ill-posed equations. The main contribution herein is that we bypass
the difficulties generated by the ill-posedness by constructing
factor-form forward processes \textit{directly} from
\textit{Markovian }solutions of a family of \textit{ergodic BSDE}.
While the form of their driver is suggested by the operator
appearing in the ill-posed PDE, we use exclusively results from
ergodic equations to construct the forward solutions and not from
(forward) stochastic optimization. As a by-product, we use these
findings to construct a smooth solution to the ill-posed
multi-dimensional semilinear PDE. To our knowledge, this approach is
new. It is quite direct and requires mild assumptions on the
dynamics of the factors, essentially the ergodicity condition
(\ref{exponentialcondition}).

The second contribution is that we provide a connection with risk-sensitive
optimization and the constant appearing in the solution of the ergodic BSDE.
Thus, we provide a new interpretation, in the context of forward
optimization, of the classical results of \cite{Bielecki}, \cite{FS2} and
\cite{FS3} on the optimal growth rate of long-term utility maximization
problems.

In a different direction, we develop a connection of the homothetic forward
processes with \textit{infinite horizon BSDE}. Our contribution is
threefold. Firstly, we establish that the solutions of the latter are
themselves homothetic forward processes, albeit not Markovian. Secondly, we
show that as the parameter $\rho ,$ that appears naturally in these BSDE,
converges to zero, the relevant solutions will converge to their Markovian
ergodic counterparts. Thirdly, we use these infinite horizon BSDE\ to
establish a connection among the homothetic forward processes we construct
and classical analogues, specifically, finite-horizon value function
processes with an appropriately chosen terminal endowment. We show that
these value functions converge to the homothetic processes as the trading
horizon tends to infinity.

In the finite horizon setting,\textbf{\ }(quadratic)\textbf{\ }BSDE\ were
first studied in \cite{Kobylanski} and have been subsequently analyzed by a
number of authors. They constitute one of the most active areas of research
in financial mathematics, for they offer direct applications to risk
measures (\cite{Barrieu}), indifference prices (\cite%
{Imkeller1,Henderson3,MS}), and value functions for homothetic utilities (%
\cite{HU0}). Several extensions to the latter line of applications include,
among others, \cite{Morlais} and \cite{Becherer} where the results of \cite%
{HU0} were, respectively, generalized to a continuous martingale setting and
to jump-diffusions. We note that in the traditional framework, prices,
portfolios, risk measures and value functions are intrinsically constructed
\textquotedblleft backwards" in time and, thus, BSDE\ offer the ideal tool
for their analysis.

Despite the popularity of (quadratic) BSDE in the finite horizon setting,
neither their ergodic or infinite horizon counterparts have received much
attention to date. In an infinite dimensional setting, an ergodic Lipschitz
BSDE was introduced in \cite{HU2} for the solution of an ergodic stochastic
control problem; see also \cite{Sam, HU1, Richou}, and more recently \cite%
{Pham} and \cite{Hu11} for various extensions. The infinite horizon
quadratic BSDE was first solved in \cite{Briand0} by combining the
techniques used in \cite{Briand} and \cite{Kobylanski}.

To our knowledge, both types of ergodic and infinite horizon equations have
been so far motivated mainly from theoretical interest. Our results show,
however, that both types of equations are natural candidates for the
characterization of forward performance processes and their associated
optimal portfolios and wealths. It is worth mentioning that both the ergodic
and infinite horizon BSDE we consider actually turn out to be Lipschitz,
since one can show that the parts corresponding to the relevant processes $Z$
are bounded. In other words, the quadratic growth, which is the standard
assumption in the finite setting, does not play a crucial role. Indeed, as
we show in the Appendix, the existing results from the ergodic Lipschitz
BSDE \cite{HU2} and the infinite horizon Lipschitz BSDE \cite{Briand} can be
readily adapted to solve the forward equations at hand.

We conclude by mentioning that while we focus on forward processes
in factor-form, most of the results also apply for non-Markovian
forward processes (e.g. results in section 3.1.3). Furthermore, we
stress that a measure transformation (see, examples in 3.1.3) might
indicate that one can construct new homothetic forward processes
directly from the ones with zero volatility, thus making the results
herein redundant. However, this is \textit{not} the case. From the
one hand, changing measure corresponds to changing the risk premia,
which essentially amounts to changing the original market model.
Therefore, one does not produce any genuinely new forward processes
within the original market. More importantly, zero volatility
forward processes are decreasing in time and path-dependent with
regards to the stochastic factors. It is \textit{not} possible to
produce from them their Markovian counterparts using a measure
change transformation.

The paper is organized as follows. In section \ref{SectionModel}, we
introduce the market model, and review the notion of forward performance
process and the forward SPDE. In sections \ref{SectionPowerUtility}, \ref%
{SectionExpUtility} and \ref{SectionLogUtility}, we construct the
corresponding forward performance processes in factor-form, and the
associated optimal portfolios and wealth processes. In each section, we also
present the connection with an ill-posed semilinear PDE, as well as with the
(non-Markovian) solutions of the related infinite horizon BSDE and with
finite horizon counterparts. For the reader's convenience, we present the
technical background results on the ergodic and infinite horizon BSDE in the
Appendix.


\section{The stochastic factor model and its forward performance process}

\label{SectionModel}

The market consists of a riskless bond and $n$ stocks. The bond is taken to
be the numeraire and the individual (discounted by the bond) stock prices $%
S_{t}^{i},$ $t\geq 0,$ solve, for $i=1,...,n,$%
\begin{equation}
\frac{dS_{t}^{i}}{S_{t}^{i}}=b^{i}(V_{t})dt+\sum_{j=1}^{d}\sigma
^{ij}(V_{t})dW_{t}^{j},  \label{stock-SDE}
\end{equation}%
with $S_{0}^{i}>0.$ The process $W=(W^{1},\cdots ,W^{d})^{T}$ is a standard $%
d$-dimensional Brownian motion on a filtered probability space $(\Omega ,%
\mathcal{F},\mathbb{F}=\{ \mathcal{F}_{t}\}_{t\geq 0},\mathbb{P})$ satisfying
the usual conditions . The superscript $T$ denotes the matrix transpose.

The $d$-dimensional process $V=(V^{1},\cdots ,V^{d})$ models the stochastic
factors affecting the dynamics of stock prices, and its components are
assumed to solve, for $i=1,...,d,$%
\begin{equation}
dV_{t}^{i}=\eta ^{i}(V_{t})dt+\sum_{j=1}^{d}\kappa ^{ij}dW_{t}^{j},
\label{factor-SDE}
\end{equation}%
with $V_{0}^{i}\in \mathbb{R}$.

We introduce the following model assumptions.

\begin{assumption}
\label{Assumption1}

i) The market coefficients $b(v)=(b^{i}(v))$ and $\sigma (v)=(\sigma
^{ij}(v)),$ $1\leq i\leq n,1\leq j\leq d,$ $v\in \mathbb{R}^{d},$ are
uniformly bounded and the volatility matrix $\sigma (v)$ has full row rank $%
n $.

ii) The market price of risk vector $\theta (v),$ $v\in \mathbb{R}^{d},$
defined as the solution to the equation $\sigma (v)\theta (v)=b(v)$ and
given by $\theta (v)=\sigma (v)^{T}[\sigma (v)\sigma (v)^{T}]^{-1}b(v),$ is
uniformly bounded and Lipschitz continuous.
\end{assumption}

\begin{assumption}
\label{Assumption3}

The drift coefficients of the stochastic factors satisfy the dissipative
condition
\begin{equation}
(\eta (v)-\eta (\bar{v}))^{T}(v-\bar{v})\leq -C_{\eta }|v-\bar{v}|^{2},
\label{dissipative}
\end{equation}%
for any $v,\bar{v}\in \mathbb{R}^{d}$ and a constant $C_{\eta }$ large
enough. The volatility matrix $\kappa =(\kappa ^{ij})$, $1\leq i,j\leq d$,
is a constant matrix with $\kappa \kappa ^{T}$ positive definite and
normalized to $|\kappa |=1$.
\end{assumption}

The \textquotedblleft large enough"\ property of the above constant $C_{\eta
}$ will be refined later on when we introduce another auxiliary constant $%
C_{v}$ (cf. (\ref{driver0}) and example in Section \ref{SectionExample})
related to the drivers of the upcoming BSDE.


The dissipative condition (\ref{dissipative}) implies that the stochastic
factor process $V$ admits a unique invariant measure, and it is, thus,
ergodic. Indeed, a direct application of Gronwall's inequality yields that $%
V $ satisfy, for any $v,\bar{v}\in \mathbb{R}^{d},$ the \textit{exponential}
\textit{ergodicity} condition
\begin{equation}
|V_{t}^{v}-V_{t}^{\bar{v}}|^{2}\leq e^{-2C_{\eta }t}|v-\bar{v}|^{2},
\label{exponentialcondition}
\end{equation}%
where the superscript $v$ denotes the dependence on the initial condition.

Inequality (\ref{exponentialcondition})\textbf{\emph{\ }}states that any two
distinct\textbf{\emph{\ }}paths of the process $V$ will converge to each
other exponentially fast. We note that (\ref{exponentialcondition}) is the
only condition needed to be satisfied by the stochastic factors. Any
diffusion process satisfying inequality (\ref{exponentialcondition}) may
serve as a stochastic factor vector.

Next, we consider an investor who starts at time $t=0$ with initial
endowment $x$ and trades among the $\left( n+1\right) $ assets. We denote by
$\tilde{\pi}=(\tilde{\pi}^{1},\cdots ,\tilde{\pi}^{n})^{T}$ the proportions
of her total (discounted by the bond) wealth in the individual stock
accounts. Assuming that the standard self-financing condition holds and
using (\ref{stock-SDE}), we deduce that her (discounted by the bond) wealth
process solves
\begin{equation*}
dX_{t}^{\pi }=\sum \limits_{i=1}^{n}\tilde{\pi}_{t}^{i}X_{t}^{\pi }\frac{%
dS_{t}^{i}}{S_{t}^{i}}=X_{t}^{\pi }\tilde{\pi}_{t}^{T}\left(
b(V_{t})dt+\sigma (V_{t})dW_{t}\right) ,
\end{equation*}%
with $X_{0}=x\in \mathbb{D},$ where the set $\mathbb{D}\subseteq \mathbb{R}$
denotes the wealth admissibility domain.

For mere convenience, we will be working throughout with the trading
strategies \textit{rescaled by the volatility}, namely,
\begin{equation}
\pi _{t}^{T}=\tilde{\pi}_{t}^{T}\sigma (V_{t}).  \label{policy-normalized}
\end{equation}%
Then, the wealth process solves
\begin{equation}
dX_{t}^{\pi }=X_{t}^{\pi }\pi _{t}^{T}(\theta (V_{t})dt+dW_{t}).
\label{wealth-process}
\end{equation}%
For any $t\geq 0$, we denote by $\mathcal{A}_{{\left[ 0,t\right] }}$ the set
of admissible strategies in the trading interval $[0,t],$ given by
\begin{equation}
\mathcal{A}_{{\left[ 0,t\right] }}=\{(\pi _{u})_{u\in \lbrack 0,t]}:\pi \in
\mathcal{L}_{BMO}^{2}[0,t],\  \pi _{u}\in \Pi \text{ and }X_{u}^{\pi }\in
\mathbb{D},\text{ }u\in \left[ 0,t\right] \}.  \label{admissibility}
\end{equation}%
The set $\Pi \subseteq \mathbb{R}^{d}$ is closed and convex, and the space $%
\mathcal{L}_{BMO}^{2}[0,t]$ defined as%
\begin{equation*}
\mathcal{L}_{BMO}^{2}[0,t]=\  \left \{ (\pi _{u})_{u\in \lbrack 0,t]}:\pi \
\text{is}\  \mathbb{F}\text{-progressively\ measurable and}\right.
\end{equation*}%
\begin{equation*}
\left. ess\sup_{\tau }E_{\mathbb{P}}\left( \left. \int_{\tau }^{t}|\pi
_{u}|^{2}du\right \vert \mathcal{F}_{\tau }\right) <\infty ,\  \text{for any}%
\  \mathbb{F}\text{-stopping time}\  \tau \in \lbrack 0,t]\right \} .
\end{equation*}%
The above integrability condition is also called the \textit{BMO-condition},
since for any $\pi \in \mathcal{L}_{BMO}^{2}[0,t]$,
\begin{equation*}
ess\sup_{\tau \in \lbrack 0,t]}E_{\mathbb{P}}\left( \left. \int_{\tau
}^{t}\pi _{u}^{T}dW_{u}\right \vert \mathcal{F}_{\tau }\right)
^{2}=ess\sup_{\tau \in \lbrack 0,t]}E\left( \left. \int_{\tau }^{t}|\pi
_{u}|^{2}du\right \vert \mathcal{F}_{\tau }\right) <\infty ,
\end{equation*}%
and, hence, the stochastic integral $\int_{0}^{s}\pi _{u}^{T}dW_{u},$ $s\in %
\left[ 0,t\right] ,$ is a \textit{BMO}-martingale.

In turn, we define the set of \textit{admissible strategies} for all $t\geq
0 $ as $\mathcal{A}:=\cup _{t\geq 0}\mathcal{A}_{[0,t]}$.

Next, we review the notion of forward performance process, introduced and
developed in \cite{MZ0}-\cite{MZ3}. Variations and relaxations of the
original definition can be also found in \cite{Berrier}, \cite{ElKaroui},
\cite{Henderson-Hobson} and \cite{NT}.

\begin{definition}
\label{def} A process $U\left( x,t\right) ,$ $\left( x,t\right) \in \mathbb{%
D\times }\left[ 0,\infty \right) $ is a forward performance process if

i) for each $x\in \mathbb{D},$ $U\left( x,t\right) $ is $\mathbb{F}$%
-progressively measurable;

ii)\ for each $t\geq 0$, the mapping $x\mapsto U(x,t)$ is strictly
increasing and strictly concave;

iii)\ for any $\pi \in \mathcal{A}$ and $0\leq t\leq s,$
\begin{equation}
E_{\mathbb{P}}\left( U(X_{s}^{\pi },s)|\mathcal{F}_{t}\right) \leq U\left(
X_{t}^{\pi },t\right) ,  \label{supermartingale}
\end{equation}%
and there exists an optimal portfolio $\pi ^{\ast }\in \mathcal{A}$ such
that, for $0\leq t\leq s,$
\begin{equation}
E_{\mathbb{P}}\left( U_{s}(X_{s}^{\pi ^{\ast }},s)|\mathcal{F}_{t}\right)
=U\left( X_{t}^{\pi ^{\ast }},t\right) .  \label{martingale}
\end{equation}
\end{definition}

As mentioned earlier, it was shown in \cite{MZ3} that $U\left( x,t\right) $
is associated with an ill-posed fully nonlinear SPDE, which plays the role
of the Hamilton-Jacobi-Bellman equation in the classical finite-dimensional
setting. Formally, this forward SPDE is derived by first assuming that $%
U\left( x,t\right) $ admits the It\^{o} decomposition
\begin{equation*}
dU(x,t)=b(x,t)dt+a(x,t)^{T}dW_{t},
\end{equation*}%
for some $\mathbb{F}$-progressively measurable processes $a(x,t)$ and $%
b(x,t),$ and that all involved quantities have enough regularity so that the
It\^{o}-Ventzell formula can be applied to $U(X_{s}^{\pi },s),$ for all
admissible $\pi .$ The requirements (\ref{supermartingale}) and (\ref%
{martingale}) then yield that, for a chosen volatility process $a\left(
x,t\right) ,$ the drift $b\left( x,t\right) $ must have a specific form.

In the setting herein, the forward performance SPDE\ takes the form
\begin{align}
dU(x,t)=& \left( -\frac{1}{2}x^{2}U_{xx}(x,t)dist^{2}\left( \Pi ,-\frac{%
\theta (V_{t})U_{x}(x,t)+a_{x}\left( x,t\right) }{xU_{xx}(x)}\right) \right.
\notag \\
& \left. +\frac{1}{2}\frac{|\theta (V_{t})U_{x}(x,t)+a_{x}\left( x,t\right)
|^{2}}{U_{xx}(x,t)}\right) dt+a(x,t)^{T}dW_{t},  \label{SPDE}
\end{align}%
where $dist\left( \Pi ,x\right) $ represents the distance function from $%
x\in \mathbb{R}^{d}$ to $\Pi $. Furthermore, if a strong solution to (\ref%
{wealth-process}) exists, say $X_{t}^{\pi ^{\ast }},$ when the feedback
policy
\begin{equation}
\pi _{t}^{\ast }=Proj_{\Pi }\left( -\frac{\theta (V_{t})U_{x}(X_{t}^{\pi
^{\ast }},t)}{X_{t}^{\pi ^{\ast }}U_{xx}(X_{t}^{\pi ^{\ast }},t)}-\frac{%
a_{x}(X_{t}^{\pi ^{\ast }},t)}{X_{t}^{\pi ^{\ast }}U_{xx}(X_{t}^{\pi ^{\ast
}},t)}\right) ,  \label{optimal-feedback}
\end{equation}%
is used, then the control process $\pi _{t}^{\ast }$ is optimal. We note
that these arguments are formal and a general verification theorem is still
lacking.

Herein, we bypass these difficulties and construct homothetic forward
performance processes in factor-form using directly the Markovian solutions
of associated ergodic BSDE. The SPDE\ is merely used to guess the
appropriate form of the latter.

\section{Power case}

\label{SectionPowerUtility}

We start with the construction of forward performance processes that are
homogeneous of degree $\delta \in (0,1)$, and have the factor-form%
\begin{equation}
U\left( x,t\right) =\frac{x^{\delta }}{\delta }e^{f(V_{t},t)},
\label{power-general}
\end{equation}%
where $f:\mathbb{R}^{d}\times \lbrack 0,\infty )\rightarrow \mathbb{R}$ is a
(deterministic) function to be specified. For this range of $\delta ,$ the
admissible wealth domain is taken to be $D=\mathbb{R}_{+}.$

Using the form (\ref{power-general}) and the SPDE (\ref{SPDE}), we deduce
that $f$ must satisfy, for $\left( v,t\right) \in \mathbb{R}^{d}\times
\lbrack 0,\infty ),$ the semilinear\textbf{\ }PDE
\begin{equation}
f_{t}+\frac{1}{2}Trace\left( \kappa \kappa ^{T}\nabla ^{2}f\right) +\eta
(v)^{T}\nabla f+F(v,\kappa ^{T}\nabla f)=0,  \label{f-eqn}
\end{equation}%
with
\begin{equation}
F(v,z):=-\frac{1}{2}\delta (1-\delta )dist^{2}\left( \Pi ,\frac{z+\theta (v)%
}{1-\delta }\right) +\frac{1}{2}\frac{\delta }{1-\delta }|z+\theta (v)|^{2}+%
\frac{1}{2}|z|^{2}.  \label{driver-formal}
\end{equation}%
The above equation, however, is ill-posed with no known solutions to date.
On the other hand, as we demonstrate below, the process $f\left(
V_{t},t\right) $ can be actually constructed directly from the Markovian
solution of an ergodic BSDE whose driver is of the above form (cf. (\ref%
{driver-power})).

\subsection{Construction via ergodic BSDE}

We firstly introduce the underlying ergodic BSDE and provide the main
existence and uniqueness result for Markovian solutions. For the reader's
convenience, we present the proof in the Appendix.

\begin{proposition}
\label{Theorem1_EQBSDE}

Assume that the market price of risk vector $\theta \left( v\right) $
satisfies Assumption 1.ii and let the set $\Pi $ be as in (\ref%
{admissibility}). Then, the ergodic BSDE
\begin{equation}
dY_{t}=(-F(V_{t},Z_{t})+\lambda )dt+Z_{t}^{T}dW_{t},  \label{EQBSDE1}
\end{equation}%
with the driver $F(\cdot ,\cdot )$ defined as
\begin{equation}
F(V_{t},Z_{t}):=-\frac{1}{2}\delta (1-\delta )dist^{2}\left( \Pi ,\frac{%
Z_{t}+\theta (V_{t})}{1-\delta }\right) +\frac{1}{2}\frac{\delta }{1-\delta }%
|Z_{t}+\theta (V_{t})|^{2}+\frac{1}{2}|Z_{t}|^{2},  \label{driver-power}
\end{equation}%
admits a unique Markovian solution $(Y_{t},Z_{t},\lambda ),$ $t\geq 0.$

Specifically, there exist a unique $\lambda \in \mathbb{R}$ and functions $y:%
\mathbb{R}^{d}\rightarrow \mathbb{R}$ and $z:\mathbb{R}^{d}\rightarrow
\mathbb{R}^{d}$ such that $\left( Y_{t},Z_{t}\right) =\left( y\left(
V_{t}\right) ,z\left( V_{t}\right) \right) $. The function $y(\cdot )$ is
unique up to a constant and has at most linear growth, and $z(\cdot )$ is
bounded with $|z(\cdot )|\leq \frac{C_{v}}{C_{\eta }-C_{v}}$, where $C_{\eta
}$ and $C_{v}$ are as in (\ref{dissipative}) and (\ref{driver0}),
respectively.
\end{proposition}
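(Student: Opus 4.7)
The plan is to follow the now-standard two-step strategy for ergodic BSDE à la Fuhrman--Hu--Tessitore and Hu--Madec--Richou, adapted to our quadratic-looking driver. First I would introduce, for each $\alpha>0$, the auxiliary infinite-horizon discounted BSDE
\begin{equation*}
Y_t^\alpha = Y_T^\alpha + \int_t^T \bigl(F(V_s,Z_s^\alpha) - \alpha Y_s^\alpha\bigr)\,ds - \int_t^T (Z_s^\alpha)^T dW_s,
\end{equation*}
and prove that it admits a unique Markovian solution of the form $Y_t^\alpha = v^\alpha(V_t)$, $Z_t^\alpha = \zeta^\alpha(V_t)$. The key observation that makes the whole argument work despite the quadratic growth of $F$ in $z$ is that $F$ is locally Lipschitz with linear growth in $z$ and Lipschitz in $v$ (the constant $C_v$ of (\ref{driver0})); consequently, once we show a priori that $\zeta^\alpha$ is uniformly bounded, we are in the Lipschitz regime and can invoke the Appendix results cited from \cite{HU2}, \cite{Briand}.

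The crux of the argument is therefore the uniform-in-$\alpha$ gradient estimate $|\zeta^\alpha|\le C_v/(C_\eta-C_v)$. I would obtain it by the coupling method: write, for two initial data $v,\bar v$, the BSDE satisfied by the difference $v^\alpha(V_t^v)-v^\alpha(V_t^{\bar v})$, apply the basic estimate with the exponential ergodicity (\ref{exponentialcondition}) and the Lipschitz bound $|F(v,z)-F(\bar v,z)|\le C_v|v-\bar v|$, and conclude that
\begin{equation*}
|v^\alpha(v)-v^\alpha(\bar v)| \le \frac{C_v}{C_\eta-C_v}\,|v-\bar v|,
\end{equation*}
uniformly in $\alpha$. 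A nonlinear Feynman--Kac/gradient identification then transfers this Lipschitz constant into the claimed sup-norm bound on $\zeta^\alpha$. Note that this is exactly where the \textquotedblleft $C_\eta$ large enough\textquotedblright{} hypothesis kicks in: we need $C_\eta>C_v$ so that the coupling contracts the Lipschitz seminorm. The same estimate also gives $\alpha|v^\alpha(v)-v^\alpha(0)|$ bounded linearly in $|v|$, and $\alpha v^\alpha(0)$ bounded in $\alpha$ by a standard comparison using boundedness of $\theta$.

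Next I would pass to the limit $\alpha\downarrow 0$. Set $\bar v^\alpha(v):=v^\alpha(v)-v^\alpha(0)$ and $\lambda^\alpha:=\alpha v^\alpha(0)$. The uniform Lipschitz bound on $v^\alpha$ gives equicontinuity and linear growth of the family $\bar v^\alpha$; combined with boundedness of $\lambda^\alpha$ and the sup-bound on $\zeta^\alpha$, a diagonal/Arzel\`a--Ascoli extraction yields a subsequence along which $\bar v^\alpha\to y$ locally uniformly, $\zeta^\alpha\to z$ in a suitable sense (e.g.\ weakly in $L^2_{\mathrm{loc}}$), and $\lambda^\alpha\to\lambda$. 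Standard stability results for BSDE (using BMO control of $Z$) allow one to pass to the limit in the equation and obtain the ergodic BSDE (\ref{EQBSDE1}) with $(Y_t,Z_t)=(y(V_t),z(V_t))$; the bound on $z$ is preserved in the limit.

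Finally, for uniqueness, suppose $(y^1,z^1,\lambda^1)$ and $(y^2,z^2,\lambda^2)$ are two Markovian solutions with $z^i$ bounded. Form the difference, apply Girsanov under the measure that linearizes the Lipschitz (on the relevant bounded range of $z$) driver, and use the exponential ergodicity to show that $(\lambda^1-\lambda^2)t + (y^1-y^2)(V_t^v)$ must be bounded uniformly in $t$ for every $v$; this forces $\lambda^1=\lambda^2$ and then $y^1-y^2\equiv \mathrm{const}$, whence $z^1=z^2$. The most delicate part of the whole argument is the coupling/gradient estimate producing the explicit bound $C_v/(C_\eta-C_v)$, since everything else—existence for the $\alpha$-problem, limit passage, uniqueness—then reduces to routine Lipschitz BSDE technology relegated to the Appendix.
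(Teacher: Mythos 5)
Your overall architecture --- a discounted infinite-horizon approximation, a coupling/Girsanov estimate yielding the bound $C_v/(C_\eta-C_v)$, a diagonal extraction as the discount parameter tends to zero, and uniqueness via Girsanov linearization --- is the same as the paper's, but there is a genuine gap at the step you yourself call the crux: the handling of the quadratic growth of $F$ in $z$. Your plan is to first solve the $\alpha$-discounted equation by citing the Lipschitz results of Briand--Hu and Fuhrman--Hu--Tessitore, and then run the coupling to get $|\zeta^\alpha|\le C_v/(C_\eta-C_v)$, ``after which we are in the Lipschitz regime.'' This is circular. The driver $F$ is only locally Lipschitz in $z$ (cf. (\ref{driver1})), and its Lipschitz constant in $v$ is $C_v(1+|z|)$, not $C_v$ as you state (cf. (\ref{driver0})); consequently, neither the existence theory you invoke for the discounted equation nor the Girsanov change of measure in your coupling argument (whose kernel is controlled by the $z$-increment of $F$, hence by $1+|Z^{\alpha,v}_t|+|Z^{\alpha,\bar v}_t|$) is justified before you already know the very bound on $\zeta^\alpha$ you are trying to prove.

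The paper closes this loop with a device missing from your proposal: it truncates the driver at exactly the target level, replacing $Z$ by $q(Z)$ with $q$ as in (\ref{truncation}), so that the truncated driver is globally Lipschitz in $(v,z)$ with constants (\ref{driver3})--(\ref{driver4}). All steps --- existence for the finite-horizon discounted equation (\ref{QBSDE}), its limit to the infinite-horizon equation, the Girsanov coupling with a bounded kernel, and the identification $Z^{\rho,v}_t=\kappa^T\nabla y^\rho(V^v_t)$ via the associated elliptic PDE --- are carried out for the truncated equation (\ref{EQBSDE3}); the coupling combined with (\ref{exponentialcondition}) then gives a Lipschitz constant for $y^\rho$ bounded by $\frac{C_\eta C_v}{(C_\eta-C_v)(\rho+C_\eta)}\le \frac{C_v}{C_\eta-C_v}$, hence $|Z^{\rho,v}_t|\le C_v/(C_\eta-C_v)$, which is exactly the truncation level, so $q(Z)=Z$ a posteriori and the truncation is harmless. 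Note also that the matching of constants depends on retaining the factor $(1+|z|)$ in (\ref{driver0}): after truncation the $v$-Lipschitz constant becomes $\frac{C_\eta C_v}{C_\eta-C_v}$, and your simplified bound $C_v|v-\bar v|$ would not reproduce the stated estimate. With the truncation (or an equivalent localization) inserted before your coupling step, the remainder of your outline --- the diagonal limit with $\rho_n y^{\rho_n}(v_0)\to\lambda$, stability of the BSDE, and uniqueness via linearization once $z$ is known to be bounded --- does match the paper's argument.
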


We next present one of the main results.

\begin{theorem}
\label{Theorem2_ForwardUtility}

Let $(Y_{t},Z_{t},\lambda )=(y(V_{t}),z(V_{t}),\lambda ),t\geq 0,$ be the
unique Markovian solution of (\ref{EQBSDE1}). Then,

i) the process $U(x,t),$ $\left( x,t\right) \in \mathbb{R}_{+}\times \left[
0,\infty \right) ,$ given by
\begin{equation}
U(x,t)=\frac{x^{\delta }}{\delta }e^{y(V_{t})-\lambda t}\text{ ,}
\label{PowerForwardUtility}
\end{equation}%
is a power forward performance process with volatility%
\begin{equation}
a\left( x,t\right) =\frac{x^{\delta }}{\delta }e^{y(V_{t})-\lambda
t}z(V_{t}).  \label{power-volatility}
\end{equation}

ii) The optimal portfolio weights $\pi _{t}^{\ast }$ and the associated
wealth process $X_{t}^{\ast }$ (cf. (\ref{policy-normalized}) and (\ref%
{wealth-process})) are given, respectively, by
\begin{equation}
\pi _{t}^{\ast }=Proj_{\Pi }\left( \frac{z(V_{t})+\theta (V_{t})}{1-\delta }%
\right) \text{ \ and }X_{t}^{\ast }=X_{0}\mathcal{E}\left( \int_{0}^{\cdot
}(\pi _{s}^{\ast })^{T}(\theta (V_{s})ds+dW_{s})\right) _{t}.
\label{PowerOptimalStrategy}
\end{equation}
\end{theorem}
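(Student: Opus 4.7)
The plan is to verify the three conditions of Definition \ref{def} directly by applying It\^o's formula to $U(X_t^\pi,t)=\frac{(X_t^\pi)^\delta}{\delta}e^{y(V_t)-\lambda t}$ along any admissible wealth $X_t^\pi$, using the dynamics of $Y_t=y(V_t)$ supplied by the ergodic BSDE (\ref{EQBSDE1}). Conditions (i) and (ii) are immediate: $V_t$ is $\mathbb{F}$-adapted, the factor $e^{y(V_t)-\lambda t}$ is strictly positive and $\mathcal{F}_t$-measurable, and $x\mapsto x^\delta/\delta$ is strictly increasing and strictly concave on $\mathbb{R}_+$ for $\delta\in(0,1)$. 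All the substance of the theorem lies in establishing (\ref{supermartingale}) and (\ref{martingale}).

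The core step is an It\^o product computation followed by a completion of squares. Writing $dY_t=(-F(V_t,Z_t)+\lambda)\,dt+Z_t^T dW_t$ and $dX_t^\pi=X_t^\pi\pi_t^T(\theta(V_t)\,dt+dW_t)$, and combining the two, one obtains
\begin{equation*}
\frac{dU(X_t^\pi,t)}{U(X_t^\pi,t)}=\Bigl(-F(V_t,Z_t)+\tfrac12|Z_t|^2+\delta\pi_t^T(\theta(V_t)+Z_t)-\tfrac12\delta(1-\delta)|\pi_t|^2\Bigr)dt+(Z_t+\delta\pi_t)^T dW_t.
\end{equation*}
Substituting the explicit form (\ref{driver-power}) of $F$ and setting $\xi_t:=\frac{Z_t+\theta(V_t)}{1-\delta}$, the drift reorganizes, after completing squares, into
\begin{equation*}
\tfrac12\delta(1-\delta)\bigl(\mathrm{dist}^2(\Pi,\xi_t)-|\xi_t-\pi_t|^2\bigr).
\end{equation*}
Because $\delta(1-\delta)>0$ and $\pi_t\in\Pi$, this quantity is $\leq 0$ with equality precisely at $\pi_t^\ast=\mathrm{Proj}_\Pi(\xi_t)$, which coincides with the candidate feedback in (\ref{PowerOptimalStrategy}). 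This immediately identifies the volatility coefficient $a(x,t)=\frac{x^\delta}{\delta}e^{y(V_t)-\lambda t}z(V_t)$ in (\ref{power-volatility}) by matching the Brownian term.

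To convert the drift inequality into (\ref{supermartingale}), I would localize by a sequence $\tau_n\uparrow\infty$ that turns the stochastic integral $\int_0^\cdot U(X_s^\pi,s)(Z_s+\delta\pi_s)^T dW_s$ into a true martingale; the stopped inequality $E[U(X_{s\wedge\tau_n}^\pi,s\wedge\tau_n)\mid\mathcal{F}_t]\leq U(X_{t\wedge\tau_n}^\pi,t\wedge\tau_n)$, combined with positivity of $U$ and Fatou's lemma, yields the supermartingale property for every admissible $\pi$. For equality at $\pi^\ast$, the bound $|z(\cdot)|\leq C_v/(C_\eta-C_v)$ from Proposition \ref{Theorem1_EQBSDE} together with the boundedness of $\theta$ makes $\pi^\ast$ itself bounded, hence admissible; $X^\ast$ is then the Dol\'eans--Dade exponential displayed in (\ref{PowerOptimalStrategy}), and BMO/John--Nirenberg estimates applied to the bounded integrand $Z_s+\delta\pi_s^\ast$ deliver the uniform integrability needed to pass to the limit $n\to\infty$ with equality, producing (\ref{martingale}). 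The main obstacle I anticipate is exactly this integrability upgrade: controlling $U(X_s^{\pi^\ast},s)$ uniformly in $s$ and $n$ despite $y(\cdot)$ having only linear growth and $X^\ast$ living in $\mathbb{R}_+$; this step will hinge on combining the explicit BMO bound on $z(\cdot)$, the linear growth of $y(\cdot)$, and the exponential ergodicity (\ref{exponentialcondition}) of $V$.
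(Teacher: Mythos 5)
Your proposal is correct, and its core algebra is the same as the paper's (the completion of squares that shows the drift equals $\tfrac12\delta(1-\delta)\bigl(\mathrm{dist}^2(\Pi,\xi_t)-|\xi_t-\pi_t|^2\bigr)\leq 0$ is exactly the paper's inequality $F^{\pi}\leq F$), but you convert it into the (super)martingale statements by a different device. The paper writes $(X_s^{\pi})^{\delta}e^{Y_s-\lambda s}$ as $(X_t^{\pi})^{\delta}e^{Y_t-\lambda t}$ times an exponential, changes measure to $\mathbb{Q}^{\pi}$ with density $\mathcal{E}\bigl(\int_0^{\cdot}(\delta\pi_u+Z_u)^T dW_u\bigr)$ --- legitimate because $\pi\in\mathcal{L}^2_{BMO}$ and $z(\cdot)$ is bounded, so the exponent is a BMO martingale and its stochastic exponential is uniformly integrable --- and then observes that $E_{\mathbb{Q}^{\pi}}\bigl[\exp\bigl(\int_t^s(F^{\pi}-F)du\bigr)\mid\mathcal{F}_t\bigr]\leq 1$, with equality at $\pi^{\ast}$. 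You instead note that $U(X^{\pi}_{\cdot},\cdot)$ is a nonnegative local supermartingale and invoke localization plus Fatou; this is more elementary for the inequality direction, since it does not even use the BMO property of $\pi$, whereas the paper's Girsanov argument handles inequality and equality in one uniform framework (and sets up the structure reused in the risk-sensitive result, Proposition \ref{propositionLambda}). For the equality at $\pi^{\ast}$ both arguments coincide in substance: $Z_t+\delta\pi^{\ast}_t$ is bounded, so $U(X^{\pi^{\ast}}_s,s)=U(X^{\pi^{\ast}}_t,t)\,\mathcal{E}\bigl(\int_t^{\cdot}(Z_u+\delta\pi^{\ast}_u)^T dW_u\bigr)_s$ with a true (class $\mathcal{D}$) martingale on the right; note that the ``integrability obstacle'' you anticipate is not really there, precisely because the linear growth of $y$ and the behaviour of $X^{\ast}$ never need to be controlled separately --- their combination is this stochastic exponential. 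One small imprecision: the volatility $a(x,t)$ in (\ref{power-volatility}) is not read off from the Brownian term of $dU(X^{\pi}_t,t)$ (whose coefficient is $U\cdot(Z_t+\delta\pi_t)$), but from the It\^{o} decomposition of $t\mapsto U(x,t)$ at fixed $x$, which gives $dU(x,t)=U(x,t)\bigl(-F(V_t,z(V_t))+\tfrac12|z(V_t)|^2\bigr)dt+U(x,t)z(V_t)^T dW_t$, exactly as the paper does in its last step; this is a one-line fix, not a gap.
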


\begin{proof}
It is immediate that the process $U(x,t)$ is $\mathbb{F}$-progressively
measurable, strictly increasing and strictly concave in $x,$ and homogeneous
of degree $\delta .$ To show that it also satisfies requirements \textit{%
(ii) }and \textit{(iii)} of Definition 1, we will establish that, for $0\leq
t\leq s$, if $\pi \in \mathcal{A}$,
\begin{equation*}
E_{\mathbb{P}}\left( \frac{(X_{s}^{\pi })^{\delta }}{\delta }%
e^{Y_{s}-\lambda s}|\mathcal{F}_{t}\right) \leq \frac{(X_{t}^{\pi })^{\delta
}}{\delta }e^{Y_{t}-\lambda t},
\end{equation*}%
while for $\pi ^{\ast }$ given by (\ref{PowerOptimalStrategy}),
\begin{equation*}
E_{\mathbb{P}}\left( \frac{(X_{s}^{\pi ^{\ast }})^{\delta }}{\delta }%
e^{Y_{s}-\lambda s}|\mathcal{F}_{t}\right) =\frac{(X_{t}^{\pi ^{\ast
}})^{\delta }}{\delta }e^{Y_{t}-\lambda t}.
\end{equation*}%
To this end, the wealth equation (\ref{wealth-process})\ and It\^{o}'s
formula yield%
\begin{equation*}
(X_{s}^{\pi })^{\delta }=(X_{t}^{\pi })^{\delta }\exp \left(
\int_{t}^{s}\delta \left( \pi _{u}^{T}\theta (V_{u})-\frac{1}{2}|\pi
_{u}|^{2}\right) du+\int_{t}^{s}\delta \pi _{u}^{T}dW_{u}\right) .
\end{equation*}%
On the other hand, from the ergodic BSDE (\ref{EQBSDE1}), we have
\begin{equation}
Y_{s}-\lambda s=Y_{t}-\lambda
t-\int_{t}^{s}F(V_{u},z(V_{u}))du+\int_{t}^{s}z(V_{u})^{T}dW_{u}.
\label{Y-solution}
\end{equation}%
Combining the above yields
\begin{equation*}
(X_{s}^{\pi })^{\delta }e^{Y_{s}-\lambda s}=\ (X_{t}^{\pi })^{\delta
}e^{Y_{t}-\lambda t}\exp \left( \int_{t}^{s}\left( \delta \left( \pi
_{u}^{T}\theta (V_{u})-\frac{1}{2}|\pi _{u}|^{2}\right)
-F(V_{u},z(V_{u}))\right) du\right.
\end{equation*}%
\begin{equation*}
+\left. \int_{t}^{s}\left( \delta \pi _{u}^{T}+z(V_{u})^{T}\right)
dW_{u}\right) .
\end{equation*}%
Therefore,%
\begin{equation*}
E_{\mathbb{P}}\left( (X_{s}^{\pi })^{\delta }e^{Y_{s}-\lambda s}|\mathcal{F}%
_{t}\right)
\end{equation*}%
\begin{equation*}
=\ (X_{t}^{\pi })^{\delta }e^{Y_{t}-\lambda t}E_{\mathbb{P}}\left( \exp
\left( \int_{t}^{s}\left( \delta \left( \pi _{u}^{T}\theta (V_{u})-\frac{1}{2%
}|\pi _{u}|^{2}\right) -F(V_{u},z(V_{u}))\right) du\right. \right.
\end{equation*}%
\begin{equation*}
\left. \left. +\int_{t}^{s}\left( \delta \pi _{u}^{T}+z(V_{u})^{T}\right)
dW_{u}\right \vert \mathcal{F}_{t}\right) .
\end{equation*}%
Next, for $s\geq 0$ and $\pi \in \mathcal{A}$, we define a probability
measure, say $\mathbb{Q}^{\pi },$ by introducing the Radon-Nikodym density
process $\mathcal{Z}_{u},u\in \left[ 0,s\right] ,$
\begin{equation}
\mathcal{Z}_{u}=\left. \frac{d\mathbb{Q}^{\pi }}{d\mathbb{P}}\right \vert _{%
\mathcal{F}_{u}}=\mathcal{E}(N)_{u}\text{ \ with \ }N_{u}=\int_{0}^{u}\left(
\delta \pi _{t}^{T}+Z_{t}^{T}\right) dW_{t}.  \label{missing}
\end{equation}%
We recall that both processes $\pi _{u}$ and $z(V_{u})$, $u\in \left[ 0,s%
\right] ,$ satisfy the \textit{BMO}-condition (up to time $s$). Therefore,
the process $N_{u},$ $u\in \left[ 0,s\right] ,$ is a\textit{\ BMO}%
-martingale and, in turn, $\mathcal{E}(N)$ is in Doob's class $\mathcal{D}$
and, thus, uniformly integrable. In turn,%
\begin{equation*}
E_{\mathbb{P}}\left( \left. \exp \left( \int_{t}^{s}\left( F^{\pi
}(V_{u},z(V_{u}))-F(V_{u},z(V_{u}))\right) du\right) \frac{\mathcal{Z}_{s}}{%
\mathcal{Z}_{t}}\right \vert \mathcal{F}_{t}\right)
\end{equation*}%
\begin{equation*}
=\ E_{\mathbb{Q}^{\pi }}\left( \left. \exp \left( \int_{t}^{s}\left( F^{\pi
}(V_{u},z(V_{u}))-F(V_{u},z(V_{u}))\right) du\right) \right \vert \mathcal{F}%
_{t}\right) ,
\end{equation*}%
where
\begin{equation*}
F^{\pi }(V_{t},z(V_{t})):=-\frac{1}{2}\delta (1-\delta )|\pi
_{t}|^{2}+\delta \pi _{t}^{T}(z(V_{t})+\theta (V_{t}))+\frac{1}{2}%
|z(V_{t})|^{2}
\end{equation*}%
\begin{equation*}
=-\frac{1}{2}\delta (1-\delta )\left \vert \pi _{t}-\frac{z(V_{t})+\theta
(V_{t})}{1-\delta }\right \vert ^{2}+\frac{1}{2}\frac{\delta }{1-\delta }%
|z(V_{t})+\theta (V_{t})|^{2}+\frac{1}{2}|z(V_{t})|^{2}.
\end{equation*}%
Using that $F^{\pi }(V_{t},z(V_{t}))\leq F(V_{t},z(V_{t}))$, we easily
deduce that
\begin{equation*}
E_{\mathbb{P}}\left( (X_{s}^{\pi })^{\delta }e^{Y_{s}-\lambda s}|\mathcal{F}%
_{t}\right) \leq (X_{t}^{\pi })^{\delta }e^{Y_{t}-\lambda t}.
\end{equation*}%
Moreover, for $\pi =\pi ^{\ast }$ as in (\ref{PowerOptimalStrategy}), $%
F^{\pi ^{\ast }}(V_{t},z(V_{t}))=F(V_{t},z(V_{t}))$ and, thus,
\begin{equation*}
{E}_{\mathbb{P}}\left( (X_{s}^{\pi ^{\ast }})^{\delta }e^{Y_{s}-\lambda s}|%
\mathcal{F}_{t}\right) =(X_{t}^{\pi ^{\ast }})^{\delta }e^{Y_{t}-\lambda t}.
\end{equation*}%
To show (\ref{power-volatility}), we recall the SPDE\ (\ref{SPDE})\ and
observe that representation (\ref{PowerForwardUtility}) yields
\begin{equation*}
dU(x,t)=U(x,t)(-F(V_{t},z(V_{t}))+\frac{1}{2}%
|z(V_{t})|^{2})dt+U(x,t)z(V_{t})^{T}dW_{t}.
\end{equation*}%
The rest of the proof follows easily.
\end{proof}

\subsubsection{Connection with risk-sensitive optimization}

We provide an interpretation of the constant $\lambda ,$ appearing in the
representation of the forward performance process (\ref{PowerForwardUtility}%
), as the solution of the risk-sensitive control problem (\ref%
{ErgodicControlProblem}). It turns out that the constant $\lambda $ is also
the optimal growth rate of the long-term utility maximization problem as
considered in \cite{Bielecki}, \cite{FS2} and \cite{FS3} (see (\ref%
{ErgodicControlProblem1}) below).

\begin{proposition}
\label{propositionLambda} Let $T>0$ and ${\pi }\in \mathcal{A}$, and define
the probability measure ${\mathbb{P}}^{{\pi }}$ using the Radon-Nikodym
density process $\mathcal{Z}_{u}$, $u\in \lbrack 0,T]$,
\begin{equation}
\mathcal{Z}_{u}=\left. \frac{d{\mathbb{P}}^{{\pi }}}{d\mathbb{P}}\right
\vert _{\mathcal{F}_{u}}=\mathcal{E}\left( \int_{0}^{\cdot }\delta {\pi }%
_{u}^{T}dW_{u}\right) _{u}.  \label{RN_density_2}
\end{equation}%
and the stochastic functional
\begin{equation*}
L(V_{s},{\pi }_{s}):=-\frac{1}{2}\delta (1-\delta )|{\pi }_{s}|^{2}+\delta
\theta (V_{s})^{T}{\pi }_{s},
\end{equation*}
for $s\in \left[ 0,T\right] .$

Let $(y(V_{t}),z\left( V_{t}\right) ,\lambda ),$ $t\geq 0,$ be the unique
Markovian solution of the ergodic BSDE (\ref{EQBSDE1}) and $X^{\pi }$
solving the wealth equation (\ref{wealth-process}). Then, $\lambda $ is the
long-term growth rate of the risk-sensitive control problem
\begin{equation}
\lambda =\sup_{{\pi }\in \mathcal{A}}\limsup_{T\uparrow \infty }\frac{1}{T}%
\ln E_{{\mathbb{P}}^{{\pi }}}\left( e^{\int_{0}^{T}L(V_{s},{\pi }%
_{s})ds}\right) ,  \label{ErgodicControlProblem}
\end{equation}%
or, alternatively,
\begin{equation}
\lambda =\sup_{\pi \in \mathcal{A}}\limsup_{T\uparrow \infty }\frac{1}{T}\ln
E_{\mathbb{P}}\left( \frac{(X_{T}^{\pi })^{\delta }}{\delta }\right) .
\label{ErgodicControlProblem1}
\end{equation}%
For both problems (\ref{ErgodicControlProblem})\ and (\ref%
{ErgodicControlProblem1}), the associated optimal control process $\pi
_{t}^{\ast }$, $t\geq 0$, is as in (\ref{PowerOptimalStrategy}).
\end{proposition}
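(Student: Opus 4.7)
The plan has three parts: (a) reduce the two variational problems to a single object, (b) use the martingale identity at $\pi^{*}$ to achieve the value $\lambda$, and (c) use the supermartingale inequality to cap every admissible $\pi$ at $\lambda$.

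For part (a), I apply It\^o's formula to $(X_{T}^{\pi})^{\delta}$ using (\ref{wealth-process}) to get
$(X_{T}^{\pi})^{\delta}=X_{0}^{\delta}\exp\bigl(\int_{0}^{T}[\delta\pi_{u}^{T}\theta(V_{u})-\tfrac{\delta}{2}|\pi_{u}|^{2}]du+\int_{0}^{T}\delta\pi_{u}^{T}dW_{u}\bigr)$. Recognising the stochastic exponential in terms of the density $\mathcal{Z}_{T}$ of (\ref{RN_density_2}) rewrites this as $(X_{T}^{\pi})^{\delta}=X_{0}^{\delta}\mathcal{Z}_{T}\exp(\int_{0}^{T}L(V_{u},\pi_{u})du)$. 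Taking $\mathbb{P}$-expectation and applying Girsanov yields $E_{\mathbb{P}}[(X_{T}^{\pi})^{\delta}/\delta]=(X_{0}^{\delta}/\delta)E_{\mathbb{P}^{\pi}}[\exp(\int_{0}^{T}L(V_{u},\pi_{u})du)]$, so dividing by $T$, taking $\log$, and passing to $\limsup$ identifies (\ref{ErgodicControlProblem}) with (\ref{ErgodicControlProblem1}), optimisers included.

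For parts (b) and (c), I revisit the identity derived in the proof of Theorem \ref{Theorem2_ForwardUtility}:
\begin{equation*}
(X_{T}^{\pi})^{\delta}e^{y(V_{T})-\lambda T}=X_{0}^{\delta}e^{y(V_{0})}\mathcal{E}(N^{\pi})_{T}\exp\Bigl(\int_{0}^{T}[F^{\pi}-F](V_{u},z(V_{u}))\,du\Bigr),
\end{equation*}
with $N^{\pi}=\int_{0}^{\cdot}(\delta\pi_{u}+z(V_{u}))^{T}dW_{u}$ and $F^{\pi}\le F$ pointwise. At $\pi=\pi^{*}$ from (\ref{PowerOptimalStrategy}) the last factor equals $1$, giving the exact representation $(X_{T}^{\pi^{*}})^{\delta}=X_{0}^{\delta}e^{y(V_{0})-y(V_{T})+\lambda T}\mathcal{E}(N^{*})_{T}$; boundedness of $\pi^{*}$ and $z$ makes $N^{*}$ a BMO martingale and $\mathcal{E}(N^{*})$ a uniformly integrable density, so changing measure to $\mathbb{Q}^{\pi^{*}}$ yields $E_{\mathbb{P}}[(X_{T}^{\pi^{*}})^{\delta}]=X_{0}^{\delta}e^{y(V_{0})+\lambda T}E_{\mathbb{Q}^{\pi^{*}}}[e^{-y(V_{T})}]$. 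For general $\pi\in\mathcal{A}$, the BMO hypothesis and Kazamaki's criterion make $\mathcal{E}(N^{\pi})$ a UI martingale, and $F^{\pi}\le F$ leads to the analogous upper bound $E_{\mathbb{P}}[(X_{T}^{\pi})^{\delta}]\le X_{0}^{\delta}e^{y(V_{0})+\lambda T}E_{\mathbb{Q}^{\pi}}[e^{-y(V_{T})}]$.

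The crux, and what I expect to be the main obstacle, is showing that $\tfrac{1}{T}\ln E_{\mathbb{Q}^{\pi}}[e^{-y(V_{T})}]\to 0$. Since $y$ is of linear growth by Proposition \ref{Theorem1_EQBSDE}, it suffices to prove $E_{\mathbb{Q}^{\pi}}[e^{\alpha|V_{T}|}]$ is subexponential in $T$. Under $\mathbb{Q}^{\pi}$ the factor solves $dV_{t}=[\eta(V_{t})+\kappa(\delta\pi_{t}+z(V_{t}))]dt+\kappa d\hat{W}_{t}$. For $\pi=\pi^{*}$ the perturbation $\kappa(\delta\pi^{*}+z)$ is bounded, so the dissipative condition (\ref{dissipative}) persists with a modified constant and a standard It\^o/Lyapunov estimate on $e^{\alpha|V_{t}|^{2}/2}$ gives uniform exponential moments; Jensen also supplies a uniform lower bound $E_{\mathbb{Q}^{\pi^{*}}}[e^{-y(V_{T})}]\ge e^{-C}$, so $\tfrac1T\ln E_{\mathbb{P}}[(X_{T}^{\pi^{*}})^{\delta}]\to\lambda$ exactly. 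For general $\pi\in\mathcal{A}$ the perturbation $\kappa\delta\pi_{t}$ is only $L^{2}_{BMO}$, and one must combine a Young splitting $2\delta V_{t}^{T}\kappa\pi_{t}\le\varepsilon|V_{t}|^{2}+C_{\varepsilon}|\pi_{t}|^{2}$ with the fact that the BMO norm of $\pi$ is preserved under the Kazamaki change of measure to close a Gr\"onwall estimate on $E_{\mathbb{Q}^{\pi}}[e^{\alpha|V_{t}|^{2}/2}]$ and conclude $\limsup_{T}\tfrac{1}{T}\ln E_{\mathbb{Q}^{\pi}}[e^{-y(V_{T})}]\le 0$. Combining the two gives $\limsup_{T}\tfrac{1}{T}\ln E_{\mathbb{P}}[(X_{T}^{\pi})^{\delta}/\delta]\le\lambda$ for every $\pi\in\mathcal{A}$, with equality at $\pi^{*}$, which proves both (\ref{ErgodicControlProblem}) and (\ref{ErgodicControlProblem1}) and the optimality of $\pi^{*}$.
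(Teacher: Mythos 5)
Your argument follows the paper's own proof almost step for step. Part (a) is exactly the paper's closing identity $E_{\mathbb{P}}\left[(X_{T}^{\pi})^{\delta}/\delta\right]=\frac{X_{0}^{\delta}}{\delta}E_{\mathbb{P}^{\pi}}\left[e^{\int_{0}^{T}L(V_{s},\pi_{s})ds}\right]$, and parts (b)--(c) are an algebraically equivalent form of the paper's main computation: the paper writes the driver as $F(V_{t},Z_{t})=\sup_{\pi\in\Pi}\left(L(V_{t},\pi)+\delta Z_{t}^{T}\pi\right)+\frac{1}{2}|Z_{t}|^{2}$ (your pointwise bound $F^{\pi}\le F$ with equality at $\pi^{*}$), absorbs the stochastic exponential of $\int(\delta\pi+z(V))^{T}dW$ into the measure $\mathbb{Q}^{\pi}$ of (\ref{missing}), and reduces both directions to the claim that $\frac{1}{T}\ln E_{\mathbb{Q}^{\pi}}\left(e^{-y(V_{T})}\right)\to 0$ (an upper bound for every admissible $\pi$, two-sided at $\pi^{*}$). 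So the structure, the measure changes and the role of the ergodic BSDE are the same.

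The one place you deviate is the justification of that last bound, and there your proposed mechanism for general $\pi$ does not close as written. The paper needs $\frac{1}{C}\le E_{\mathbb{Q}^{\pi}}\left(e^{-Y_{T}}\right)\le C$ uniformly in $T$ and gets it from the linear growth of $y$ together with the exponential ergodicity (\ref{exponentialcondition}), citing \cite{FMc}; your Lyapunov-plus-Jensen treatment at $\pi^{*}$ (bounded drift perturbation) is consistent with that. For arbitrary admissible $\pi$, however, a ``Gr\"onwall estimate on $E_{\mathbb{Q}^{\pi}}[e^{\alpha|V_{t}|^{2}/2}]$'' cannot be closed: after It\^o and the Young splitting, the differential inequality contains $E_{\mathbb{Q}^{\pi}}\left[e^{\alpha|V_{t}|^{2}/2}\,|\pi_{t}|^{2}\right]$, which is not a function of $E_{\mathbb{Q}^{\pi}}[e^{\alpha|V_{t}|^{2}/2}]$ when $\pi$ is only BMO, so Gr\"onwall alone gives nothing; one needs a John--Nirenberg/conditioning argument over short time intervals (and some uniformity of the BMO norm as $T\uparrow\infty$, which the admissibility class (\ref{admissibility}) does not obviously provide), not the step you describe. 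Note also that quadratic exponential moments are more than required: since $y$ has linear growth, a subexponential-in-$T$ bound on $E_{\mathbb{Q}^{\pi}}\left[e^{\alpha|V_{T}|}\right]$ suffices, which is exactly what the ergodicity-based estimate invoked by the paper delivers. Replacing your Gr\"onwall sketch by that argument (or by the citation the paper uses) would make the proof complete and essentially identical to the paper's.
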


\begin{proof}
We first observe that the driver $F(\cdot ,\cdot )$ in (\ref{driver-power})
can be written as
\begin{equation*}
F(V_{t},Z_{t})=\sup_{{\pi }_{t}\in \Pi }\left( L(V_{t},\pi
_{t})+Z_{t}^{T}\delta {\pi }_{t}\right) +\frac{1}{2}|Z_{t}|^{2}.
\end{equation*}%
Therefore, for arbitrary $\tilde{\pi}\in \mathcal{A}$, we rewrite the
ergodic BSDE (\ref{EQBSDE1}) under the probability measure ${\mathbb{P}}^{%
\tilde{\pi}}$ as
\begin{equation*}
dY_{t}=\left( -\sup_{{\pi }_{t}\in \Pi }\left( L(V_{t},\pi
_{t})+Z_{t}^{T}\delta {\pi }_{t}\right) +Z_{t}^{T}\delta \tilde{\pi}%
_{t}+\lambda -\frac{1}{2}|Z_{t}|^{2}\right) dt+Z_{t}^{T}dW_{t}^{{\mathbb{P}}%
^{\tilde{\pi}}},
\end{equation*}%
where the process $W_{t}^{{\mathbb{P}}^{\tilde{\pi}}}:=W_{t}-\int_{0}^{t}%
\delta \tilde{\pi}_{u}du$, $t\geq 0$, is a Brownian motion under ${\mathbb{P}%
}^{\tilde{\pi}}$. In turn,%
\begin{equation*}
e^{\lambda T+Y_{0}}e^{-Y_{T}}\mathcal{E}\left( \int_{0}^{\cdot
}Z_{u}^{T}dW_{u}^{\mathbb{P}^{\tilde{\pi}}}\right) _{T}
\end{equation*}%
\begin{equation*}
=\  \exp \left( \int_{0}^{T}\left( \sup_{{\pi }_{t}\in \Pi }\left(
L(V_{t},\pi _{t})+Z_{t}^{T}\delta {\pi }_{t}\right) -\left( L(V_{t},\tilde{%
\pi}_{t})+Z_{t}^{T}\delta \tilde{\pi}_{t}\right) \right) dt\right)
e^{\int_{0}^{T}L(V_{t},\tilde{\pi}_{t})dt}.
\end{equation*}%
Next, we observe that for any $\tilde{\pi}\in \mathcal{A}$, the first
exponential term on the right hand side is bounded below by $1$. Taking
expectation under $\mathbb{P}^{\tilde{\pi}}$ then yields
\begin{equation*}
e^{\lambda T+Y_{0}}E_{{\mathbb{P}}^{\tilde{\pi}}}\left( e^{-Y_{T}}\mathcal{E}%
\left( \int_{0}^{\cdot }Z_{s}^{T}dW_{s}^{\mathbb{P}^{\tilde{\pi}}}\right)
_{T}\right) \geq E_{{\mathbb{P}}^{\tilde{\pi}}}\left( e^{\int_{0}^{T}L(V_{s},%
\tilde{\pi}_{s})ds}\right) .
\end{equation*}%
Using the measure $\mathbb{Q}^{\tilde{\pi}},$ defined in (\ref{missing}), we
deduce that
\begin{equation*}
\lambda +\frac{Y_{0}}{T}+\frac{1}{T}\ln E_{\mathbb{Q}^{{\tilde{\pi}}}}\left(
e^{-Y_{T}}\right) \geq \frac{1}{T}\ln E_{{\mathbb{P}}^{\tilde{\pi}}}\left(
e^{\int_{0}^{T}L(V_{s},\tilde{\pi}_{s})ds}\right) .
\end{equation*}%
Note, however, that there exists a constant, say $C,$ independent of $T,$
such that
\begin{equation*}
\frac{1}{C}\leq E_{\mathbb{Q}^{\pi }}\left( e^{-Y_{T}}\right) \leq C.
\end{equation*}%
This follows from the linear growth property of the function $y\left( \cdot
\right) $ and the ergodicity condition (\ref{exponentialcondition}) (see,
for example, \cite{FMc}).

Sending $T\uparrow \infty $ then yields that, for any $\tilde{\pi}\in
\mathcal{A}$,
\begin{equation*}
\lambda \geq \limsup_{T\uparrow \infty }\frac{1}{T}\ln E_{{\mathbb{P}}^{%
\tilde{\pi}}}\left( e^{\int_{0}^{T}L(V_{s},\tilde{\pi}_{s})ds}\right) ,
\end{equation*}%
with equality choosing $\tilde{\pi}_{s}=\pi _{s}^{\ast }$, with $\pi
_{s}^{\ast }$ as in (\ref{PowerOptimalStrategy}).

To show that $\lambda $ also solves (\ref{ErgodicControlProblem1}), we
observe that for $\pi \in \mathcal{A}$, we have
\begin{equation*}
E_{\mathbb{P}}\left( \frac{(X_{T}^{\pi })^{\delta }}{\delta }\right) =\frac{%
X_{0}^{\delta }}{\delta }E_{\mathbb{P}}\left( e^{\int_{0}^{T}L(Vs,\pi
_{s})ds}\mathcal{E}\left( \int_{0}^{\cdot }\delta {\pi }_{s}^{T}dW_{s}%
\right) _{T}\right)
\end{equation*}%
\begin{equation*}
=\frac{X_{0}^{\delta }}{\delta }E_{\mathbb{P}^{\pi }}\left(
e^{\int_{0}^{T}L(V_{s},\pi _{s})ds}\right) ,
\end{equation*}%
and the rest of the arguments follow.
\end{proof}

\subsubsection{Connection with an ill-posed multi-dimensional semilinear PDE}

A by-product of the previous result is the construction of a smooth solution
to the ill-posed semilinear PDE given in (\ref{TimeReversedPDE}) below.
Recall that the latter was derived from (\ref{SPDE}) as a necessary
requirement when we seek forward processes of the form (\ref{power-general}%
). We establish below that for an appropriate initial datum, this ill-posed
PDE has a solution, which is separable in time and space.

We note that the well-posed analogue of this semilinear equation, as well as
of the one appearing in the exponential case (cf. (\ref{exponential-equation}%
)), have been extensively analyzed and used for the representation of
indifference prices, risk measures, power and exponential value functions,
and others. To our knowledge, however, their ill-posed versions have not
been studied, with the exception of the one-dimensional case studied in \cite%
{NZ}. This case, on the other hand, can be linearized and the solution is
constructed using an extension of Widder's theorem. We refer in detail to
this case in 3.1.3. However, the multidimensional case cannot be linearized
and, to our knowledge, no results for this case exist to date.

\begin{proposition}
Consider the ill-posed semilinear PDE
\begin{equation}
f_{t}+\mathcal{L}f+F(v,\kappa ^{T}\nabla f)=0,  \label{TimeReversedPDE}
\end{equation}%
$(v,t)\in \mathbb{R}^{d}\times \lbrack 0,\infty )$, with $F\left( \cdot
,\cdot \right) $ as in (\ref{driver-formal}) (or (\ref{driver-power})) and $%
\mathcal{L}$ being the infinitesimal generator of the factor process $V$,
\begin{equation}
\mathcal{L}=\frac{1}{2}\text{Trace}\left( \kappa \kappa ^{T}\nabla
^{2}\right) +\eta (v)^{T}\nabla .  \label{generatorofV}
\end{equation}%
For initial condition $f(v,0)=y(v)$, where $y(\cdot )$ is the function
appearing in the Markovian solution $\left( y\left( V_{t}\right) ,z\left(
V_{t}\right) ,\lambda \right) $ of the ergodic BSDE (\ref{EQBSDE1}),
equation (\ref{TimeReversedPDE}) admits \textit{a smooth solution given by}
\begin{equation*}
f(v,t)=y(v)-\lambda t.
\end{equation*}
\end{proposition}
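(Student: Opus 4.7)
The candidate $f(v,t)=y(v)-\lambda t$ meets the initial condition $f(v,0)=y(v)$ by construction, so verifying (\ref{TimeReversedPDE}) reduces to establishing the stationary identity
\begin{equation*}
\mathcal{L} y(v) + F(v,\kappa^T \nabla y(v)) = \lambda, \quad v \in \mathbb{R}^d,
\end{equation*}
since $f_t = -\lambda$, $\nabla f = \nabla y$, and $\mathcal{L} f = \mathcal{L} y$, whence $f_t + \mathcal{L} f + F(v,\kappa^T \nabla f) = -\lambda + \mathcal{L} y + F(v,\kappa^T \nabla y) = 0$.

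The plan is to read this elliptic identity off the Markovian ergodic BSDE of Proposition \ref{Theorem1_EQBSDE}. Granted that $y \in C^2(\mathbb{R}^d)$, an application of It\^o's formula to $y(V_t)$ using (\ref{factor-SDE}) yields
\begin{equation*}
dy(V_t) = \mathcal{L} y(V_t)\, dt + (\kappa^T \nabla y(V_t))^T dW_t.
\end{equation*}
Because $(Y_t,Z_t)=(y(V_t),z(V_t))$ also satisfies (\ref{EQBSDE1}), uniqueness of the semimartingale decomposition forces $z(v) = \kappa^T \nabla y(v)$ and $\mathcal{L} y(V_t) + F(V_t,z(V_t)) = \lambda$, $\mathbb{P}$-a.s.\ for every $t \geq 0$. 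Since $\kappa \kappa^T$ is positive definite, the law of $V_t$ charges every open subset of $\mathbb{R}^d$, and both sides of the identity are continuous in $v$, so the a.s.\ relation upgrades to the pointwise one displayed above. Substituting back into (\ref{TimeReversedPDE}) concludes.

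The principal obstacle is to justify that $y$ is actually $C^2$: Proposition \ref{Theorem1_EQBSDE} only delivers $y$ as a (continuous) function of at most linear growth. I would address this by returning to the construction underlying that proposition, which as described in the Appendix proceeds via the discounted approximation $\rho y_\rho + \mathcal{L} y_\rho + F(v,\kappa^T \nabla y_\rho) = 0$. These approximating equations are uniformly elliptic (since $\kappa\kappa^T$ is positive definite) with driver $F$ smooth in $z$ and of quadratic growth, so standard interior Schauder/elliptic regularity yields $y_\rho \in C^2(\mathbb{R}^d)$. Uniform a priori bounds on $\nabla y_\rho$ (reflecting the boundedness of $z$ established in Proposition \ref{Theorem1_EQBSDE}) together with local $C^{2,\alpha}$ estimates allow one to pass to the limit $\rho \downarrow 0$ while preserving $C^2$ regularity on compact sets, thereby validating the application of It\^o's formula. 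Once this regularity step is secured, the verification of the PDE is immediate from the ergodic BSDE identity.
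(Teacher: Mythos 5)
Your proposal is correct and follows essentially the same route as the paper: assume $y\in C^{2}(\mathbb{R}^{d})$, apply It\^{o}'s formula to $y(V_{t})$ and match with the ergodic BSDE to get $z=\kappa^{T}\nabla y$ and the stationary identity $\mathcal{L}y+F(v,\kappa^{T}\nabla y)=\lambda$, and then secure the $C^{2}$ regularity by passing to the limit $\rho\downarrow 0$ in the discounted semilinear elliptic approximations using the uniform bounds on $\rho y^{\rho}$ and $\nabla y^{\rho}$ together with local Schauder/H\"{o}lder estimates, exactly as in the paper's proof. Only note the sign slip in your approximating equation: it should read $\rho y^{\rho}=\mathcal{L}y^{\rho}+F\left(v,\kappa^{T}\nabla y^{\rho}\right)$, i.e. $-\rho y^{\rho}+\mathcal{L}y^{\rho}+F=0$, not $+\rho y^{\rho}+\mathcal{L}y^{\rho}+F=0$.
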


\begin{proof}
Firstly, assume that the function $y\left( \cdot \right) $ appearing in
Proposition 2 is in $C^{2}(\mathbb{R}^{d})$. It\^{o}'s formula then gives
\begin{equation*}
dy(V_{t})=\mathcal{L}y(V_{t})dt+\left( \kappa ^{T}\nabla y(V_{t})\right)
^{T}dW_{t},
\end{equation*}%
which combined with (\ref{EQBSDE1}) yields that $Z_{t}=z\left( V_{t}\right)
=\kappa ^{T}\nabla y(V_{t})$ and
\begin{equation*}
-\lambda +\mathcal{L}y(V_{t})+F(V_{t},\kappa ^{T}\nabla y(V_{t}))=0.
\end{equation*}%
It, therefore, remains to show that $y\left( \cdot \right) \in C^{2}(\mathbb{%
R}^{d})$. Indeed, for any $\rho >0$, consider the semilinear elliptic PDE
\begin{equation}
\rho y^{\rho }=\mathcal{L}y^{\rho }+F\left( v,\kappa ^{T}\nabla {y}^{\rho
}\right) .  \label{elliptic}
\end{equation}%
Classical PDE results yield that the above equation admits a unique bounded
solution $y^{\rho }\left( \cdot \right) \in C^{2}(\mathbb{R}^{d})$. Using
arguments similar to the ones in the Appendix, we deduce that $|y^{\rho
}(v)|\leq \frac{K}{\rho }$ and $|\nabla y^{\rho }(v)|\leq \frac{C_{v}}{%
C_{\eta }-C_{v}}.$

Therefore, for any reference point, say $v^{0}\in \mathbb{R}^{d}$, we have
that $\rho y^{\rho }(v_{0})$ is uniformly bounded and, moreover, that the
difference $y^{\rho }(v)-y^{\rho }(v_{0})$ is equicontinuous. Using a
diagonal argument (cf. (\ref{**}) in the Appendix), we deduce that there
exists a subsequence $\rho _{n}\downarrow 0$ such that $\rho _{n}y^{\rho
_{n}}(v_{0})\rightarrow \lambda $ and $y^{\rho _{n}}(v)-y^{\rho
_{n}}(v_{0})\rightarrow y(v),$ uniformly on compact sets of $\mathbb{R}^{d}$%
. Since, however, both $\rho _{n}y^{\rho _{n}}(v)$ and $\nabla y^{\rho
_{n}}(v)$ are bounded uniformly in $\rho _{n}$, $\nabla ^{2}y^{\rho _{n}}(v)$
is also bounded on compact sets, as it follows from equation (\ref{elliptic}%
) above. In turn, this yields a H\"{o}lder estimate for $\nabla y^{\rho
_{n}}(v),$ uniformly on compact sets. Standard arguments for elliptic
equations then give that the limit $y(\cdot )\in C^{2}(\mathbb{R}^{d})$
(see, for example, Theorem 3.3 of \cite{FMc}).
\end{proof}

\subsubsection{Example: Single stock and single stochastic factor}

\label{SectionExample}

For the state equations (\ref{stock-SDE}) and (\ref{factor-SDE}), let $n=1$
and $d=2$. Then, the stock and the stochastic factor processes follow,
respectively,
\begin{equation*}
dS_{t}=b(V_{t})S_{t}dt+\sigma (V_{t})S_{t}dW_{t}^{1}\text{,}
\end{equation*}%
\begin{equation*}
dV_{t}^{1}=\eta (V_{t})dt+\kappa ^{1}dW_{t}^{1}+\kappa ^{2}dW_{t}^{2}\text{
\  \ and \ }dV_{t}^{2}=0,
\end{equation*}%
with $\min (\kappa ^{1},\kappa ^{2})>0$, $|\kappa ^{1}|^{2}+|\kappa
^{2}|^{2}=1$ and $\sigma \left( \cdot \right) $ bounded by a positive
constant.

Let $\Pi =\mathbb{R\times }\left \{ 0\right \} $ so that $\pi _{t}^{2}\equiv
0. $ Then, the wealth equation (\ref{wealth-process}) reduces to $%
dX_{t}^{\pi }=X_{t}^{\pi }\pi _{t}^{1}\left( \theta
(V_{t})dt+dW_{t}^{1}\right) $ with $\theta (V_{t})=b(V_{t})/\sigma (V_{t}).$
In turn, the driver of (\ref{EQBSDE1}) takes the form
\begin{equation*}
F(V_{t},Z_{t}^{1},Z_{t}^{2})=\  \frac{1}{2}\frac{\delta }{1-\delta }%
|Z_{t}^{1}+\theta (V_{t})|^{2}+\frac{1}{2}|Z_{t}^{1}|^{2}+\frac{1}{2}%
|Z_{t}^{2}|^{2}.
\end{equation*}%
By Theorem \ref{Theorem2_ForwardUtility}, the optimal portfolio weights are $%
\left( \pi _{t}^{\ast ,1},\pi _{t}^{\ast ,2}\right) =\left( \frac{%
Z_{t}^{1}+\theta ^{1}(V_{t})}{1-\delta },0\right) $.

Next, note that if $C_{\theta }$ and $K_{\theta }$ are, respectively, the
Lipschitiz constant and the bound for the market price of risk $\theta (v)$
(cf. Assumption 1.\textit{ii}), then
\begin{equation*}
|F(v,z^{1},z^{2})-F(\bar{v},z^{1},z^{2})|\leq \frac{\delta }{1-\delta }%
|z^{1}+\theta (v)||\theta (v)-\theta (\bar{v})|
\end{equation*}%
\begin{equation*}
\leq \frac{\delta }{1-\delta }\max \{1,K_{\theta }\}C_{\theta }(1+|z|)|v-%
\bar{v}|.
\end{equation*}%
Hence, we may take in inequality (\ref{driver0}) the constant $C_{v}$ to be
defined as $C_{v}=\delta \max \{1,K_{\theta }\}C_{\theta }/(1-\delta )$.

To find the processes $Z_{t}^{1}$ and $Z_{t}^{2},$ we set $Z_{t}^{i}=\kappa
^{i}Z_{t},$ $i=1,2$, for some process $Z_{t}$ to be determined. Then,
equation (\ref{EQBSDE1}) further reduces to%
\begin{equation*}
dY_{t}=\left( -\frac{\hat{\delta}}{2}|Z_{t}|^{2}-\frac{\delta \kappa ^{1}}{%
1-\delta }\theta (V_{t})Z_{t}-\frac{\delta }{2(1-\delta )}|\theta
(V_{t})|^{2}+\lambda \right) dt
\end{equation*}%
\begin{equation*}
+Z_{t}\left( \kappa ^{1}dW_{t}^{1}+\kappa ^{2}dW_{t}^{2}\right) ,
\end{equation*}%
with $\hat{\delta}=\frac{1-\delta +\delta |\kappa ^{1}|^{2}}{1-\delta }.$

Next, let $\tilde{Y}_{t}:=e^{\hat{\delta}(Y_{t}-\lambda t)}\  \ $and $\
\tilde{Z}_{t}:=\hat{\delta}\tilde{Y}_{t}Z_{t}.$ Then,
\begin{equation*}
d\tilde{Y}_{t}=-\hat{\delta}\frac{\delta }{2(1-\delta )}|\theta (V_{t})|^{2}%
\tilde{Y}_{t}dt+\tilde{Z}_{t}d\tilde{W}_{t},
\end{equation*}%
where $\tilde{W}_{t}:=\kappa ^{1}W_{t}^{1}+\kappa ^{2}W_{t}^{2}-\int_{0}^{t}%
\frac{\delta \kappa ^{1}}{1-\delta }\theta (V_{u})du$, $t\geq 0$, is a
Brownian motion under some probability measure equivalent to $\mathbb{P}$.

Let $\beta _{t}:=\exp \left( \int_{0}^{t}\hat{\delta}\frac{\delta }{%
2(1-\delta )}|\theta (V_{u})|^{2}du\right) .$ Applying It\^{o}'s formula to $%
\tilde{Y}_{t}\beta _{t}$ yields
\begin{equation*}
\tilde{Y}_{t}=\frac{\beta _{0}}{\beta _{t}}\tilde{Y_{0}}+\int_{0}^{t}\frac{%
\beta _{u}}{\beta _{t}}\tilde{Z}_{u}d\tilde{W}_{u}.
\end{equation*}%
The power forward performance process can be then written as
\begin{equation*}
U(x,t)=\frac{x^{\delta }}{\delta }(\tilde{Y}_{t})^{{1}/{\hat{\delta}}}=\frac{%
x^{\delta }}{\delta }\left( \frac{\beta _{0}}{\beta _{t}}\tilde{Y_{0}}%
+\int_{0}^{t}\frac{\beta _{u}}{\beta _{t}}\tilde{Z}_{u}d\tilde{W}_{u}\right)
^{{1}/{\hat{\delta}}}.
\end{equation*}%
The above result yields an alternative representation to the solution
derived in \cite{NZ}, where the same market model is considered, bypassing
various lengthy steps for the reduced linearized forward SPDE. Indeed, one
can easily deduce that writing $\tilde{Y}_{t}=\tilde{y}\left( V_{t},t\right)
$ and using the dynamics of the stochastic factor (\ref{factor-SDE}), yields
that $\tilde{y}\left( v,t\right) $ must satisfy%
\begin{equation*}
\tilde{y}_{t}\left( v,t\right) +\frac{1}{2}\tilde{y}_{vv}\left( v,t\right)
+\left( \eta \left( v\right) +\frac{\delta \kappa ^{1}}{1-\delta }\theta
\left( v\right) \right) \tilde{y}_{v}\left( v,t\right) +\frac{\hat{\delta}%
\delta }{2\left( 1-\delta \right) }\theta ^{2}\left( v\right) \tilde{y}%
\left( v,t\right) =0,
\end{equation*}%
recovering directly the result of \cite{NZ}.

\subsection{Connection with infinite horizon BSDE}

\label{SubSectionPowerUtility}

In this section, we build a connection between power forward processes in
factor-form and non-Markovian solutions of a family of infinite horizon
BSDE. The contribution is threefold.

Firstly, these solutions are themselves power forward processes, albeit not
in a factor-form. Secondly, we consider their limit as the parameter $\rho ,$
appearing naturally in the infinite horizon BSDE, converges to zero. We
establish that appropriately discounted, they provide an approximation to
the process $U\left( x,t\right) $ as $\rho \downarrow 0.$ Thirdly, we build
a connection with a family of classical value function processes in finite
horizon, say $[0,T],$ when the horizon is long $\left( T\uparrow \infty
\right) $.

We start with some background results on infinite horizon BSDE. Among
others, we recall that \cite{Briand} is one of the first papers in which
Girsanov's transformation is used to solve infinite horizon BSDE with
Lipschitz driver, while the quadratic driver case was solved in \cite%
{Briand0}. We refer the reader to \cite{Briand0} for further references.

\begin{proposition}
Let $\rho >0$, and consider the infinite horizon BSDE
\begin{equation}
dY_{t}^{\rho }=\left( -F(V_{t},Z_{t}^{\rho })+\rho Y_{t}^{\rho }\right)
dt+\left( Z_{t}^{\rho }\right) ^{T}dW_{t},  \label{IQBSDE1}
\end{equation}%
where the driver $F(\cdot ,\cdot )$ is given in (\ref{EQBSDE1}), with $%
\theta \left( \cdot \right) ,$ $\Pi $ and $V$ satisfying the assumptions in
section 1. Then, equation (\ref{IQBSDE1}) admits a unique Markovian solution
$\left( Y_{t}^{\rho },Z_{t}^{\rho }\right) ,$ $t\geq 0.$

Specifically, for each $\rho >0,$ there exist unique functions $y^{\rho }:%
\mathbb{R}^{d}\rightarrow \mathbb{R}$ and $z^{\rho }:\mathbb{R}%
^{d}\rightarrow \mathbb{R}^{d}$ such that $(Y_{t}^{\rho },Z_{t}^{\rho
})=(y^{\rho }(V_{t}),z^{\rho }(V_{t})),$ with $|y^{\rho }\left( \cdot
\right) |\leq \frac{K}{\rho }$ and $|z^{\rho }\left( \cdot \right) |\leq
\frac{C_{v}}{C_{\eta }-C_{v}},$ where $C_{\eta }$ as in (\ref{dissipative}),
and $C_{v}$, $K$ given in (\ref{driver0}) and (\ref{driver2}), respectively.
\end{proposition}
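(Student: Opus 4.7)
The plan is to mirror the treatment used for the ergodic BSDE in Proposition~2, but with the discount term $\rho Y^{\rho}$ taking the role of the unknown constant $\lambda$, and to build $(Y^\rho,Z^\rho)$ as the $T\uparrow\infty$ limit of finite-horizon BSDE with zero terminal condition. Observe first that the target bound $|z^\rho(\cdot)|\le C_v/(C_\eta-C_v)$ lies strictly below the threshold at which the local Lipschitz-in-$z$ estimate of the driver (coming from (\ref{driver0})) is active. I would therefore start by truncating $F(v,z)$ in its $z$-argument outside a ball of slightly larger radius, obtaining a globally Lipschitz driver $\widetilde F$ that agrees with $F$ on the relevant region, and then solve, for each fixed $T>0$, the finite-horizon Lipschitz BSDE
\begin{equation*}
Y^{\rho,T}_t=\int_t^T\!\bigl(\widetilde F(V_s,Z^{\rho,T}_s)-\rho Y^{\rho,T}_s\bigr)\,ds-\int_t^T (Z^{\rho,T}_s)^T dW_s,\qquad t\in[0,T],
\end{equation*}
uniquely by standard Lipschitz BSDE theory. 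The Markov property of $V$ delivers a deterministic representation $Y^{\rho,T}_t=y^{\rho,T}(V_t,t)$.

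Next I would establish the two a~priori bounds. For the $Y$-estimate, a Girsanov transformation absorbs the portion of $\widetilde F$ that is linear in $z$ (the $Z$-process is BMO because it is bounded under $\widetilde F$), so that the equation becomes linear with bounded inhomogeneity $\widetilde F(\cdot,0)$; comparing with the ODE $\dot y=-\widetilde F(\cdot,0)+\rho y$ and using Assumption~1.ii to get a uniform bound $K$ on $|\widetilde F(\cdot,0)|$ then yields $|y^{\rho,T}|\le K/\rho$. For the $Z$-estimate I would run a coupling argument: given $v,\bar v\in\mathbb{R}^d$, let $(Y^{\rho,T,v},Z^{\rho,T,v})$ and $(Y^{\rho,T,\bar v},Z^{\rho,T,\bar v})$ be the solutions associated with $V^{v}$ and $V^{\bar v}$ driven by the same Brownian motion. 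Combining the local Lipschitz estimate on $F$ (expressed through $C_v$ in (\ref{driver0})), the exponential ergodicity (\ref{exponentialcondition}), and a suitable Gronwall/discount balance — here the requirement that $C_\eta$ be ``large enough'' with respect to $C_v$ enters — I obtain
\begin{equation*}
|y^{\rho,T}(v,t)-y^{\rho,T}(\bar v,t)|\le \frac{C_v}{C_\eta-C_v}\,|v-\bar v|,
\end{equation*}
uniformly in $T$ and $\rho$. Via the identification $z^{\rho,T}(v,t)=\kappa^{T}\nabla_v y^{\rho,T}(v,t)$ (justified through the elliptic regularization used for (\ref{elliptic}) in the proof of Proposition~4), this yields $|z^{\rho,T}|\le C_v/(C_\eta-C_v)$. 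Since this bound is strictly below the truncation level, $\widetilde F$ coincides with $F$ along $(Y^{\rho,T},Z^{\rho,T})$, so the approximation actually solves the untruncated equation.

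Finally, I would pass to $T\uparrow\infty$. Time-homogeneity of $V$ and of the driver forces $y^{\rho,T}(v,t)=y^{\rho,T-t}(v,0)$, so the limit will be time-independent, giving the Markovian structure $(y^{\rho}(V),z^{\rho}(V))$. Stability for the finite-horizon Lipschitz BSDE shows that $T\mapsto y^{\rho,T}(\cdot,0)$ is Cauchy at rate $e^{-\rho T}K/\rho$, producing a limit $y^{\rho}$ satisfying $|y^\rho|\le K/\rho$, together with an associated $z^{\rho}$ inheriting the bound. The pair $(y^\rho(V_t),z^\rho(V_t))$ then solves (\ref{IQBSDE1}). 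The main obstacle will be uniqueness within the class of Markovian solutions: given two such solutions $(y^{\rho,i},z^{\rho,i})$, $i=1,2$, I would linearize the difference using the Lipschitz property of $F$ in $z$ (legitimate because both $z^{\rho,i}$ are bounded, hence BMO), perform a Girsanov change of measure to eliminate the $Z$-dependence, and then use the $\rho Y^\rho$ term to generate exponential discounting that, combined with $|y^{\rho,i}|\le K/\rho$, allows sending $t\to\infty$ in the integrated form to conclude $y^{\rho,1}\equiv y^{\rho,2}$, which then forces $z^{\rho,1}\equiv z^{\rho,2}$.
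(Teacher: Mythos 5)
Your proposal is correct and follows essentially the same route as the paper's Appendix argument: truncate the driver in $z$ at (roughly) the level $\frac{C_{v}}{C_{\eta}-C_{v}}$ to get a Lipschitz driver, solve the discounted finite-horizon equations with zero terminal value, obtain $|Y|\leq K/\rho$ and the Lipschitz-in-$v$ estimate for $y^{\rho}$ via a Girsanov change plus the exponential ergodicity of $V$, identify $Z$ with $\kappa^{T}\nabla y^{\rho}$ by PDE regularity to get the $Z$-bound and remove the truncation, pass to the infinite-horizon limit using the $\rho$-discount, and prove uniqueness of bounded Markovian solutions by linearization and discounting (the paper delegates this last step and the limit to Briand--Hu and Briand--Confortola). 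The only differences are cosmetic: you establish the Lipschitz/$Z$-bounds uniformly in $T$ before the limit (the paper does it on the limiting infinite-horizon solution, and the relevant regularity there is parabolic rather than the elliptic equation you cite), and your ``slightly larger'' truncation radius must be chosen small relative to $\rho$ (or taken exactly equal to $\frac{C_{v}}{C_{\eta}-C_{v}}$, as the paper does) so that the self-consistent bound closes at the stated constant.
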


The solvability of (\ref{IQBSDE1}) is an intermediate step to solve (\ref%
{EQBSDE1}), and is included in the proof of Proposition \ref{Theorem1_EQBSDE}
in the Appendix.

\begin{theorem}
\label{Theorem21_ForwardUtility} Let $\left( y^{\rho }\left( V_{t}\right)
,z^{\rho }\left( V_{t}\right) \right) ,$ $t\geq 0,$ be the unique Markovian
solution to the infinite horizon BSDE (\ref{IQBSDE1}). Then,

i) the process $U^{\rho }\left( x,t\right) ,$ $\left( x,t\right) \in \mathbb{%
R}_{+}\times \left[ 0,\infty \right) ,$ given by%
\begin{equation}
U^{\rho }(x,t)=\frac{x^{\delta }}{\delta }e^{y^{\rho }\left( V_{t}\right)
-\int_{0}^{t}\rho y^{\rho }\left( V_{s}\right) ds}
\label{PowerForwardUtility2}
\end{equation}%
is a power forward performance process with volatility
\begin{equation*}
a^{\rho }\left( x,t\right) =\frac{x^{\delta }}{\delta }e^{y^{\rho }\left(
V_{t}\right) -\int_{0}^{t}\rho y^{\rho }\left( V_{s}\right) ds}z^{\rho
}\left( V_{t}\right) .
\end{equation*}

ii) The optimal portfolio weights $\pi _{t}^{\ast ,\rho }$ and the
associated wealth process $X_{t}^{\ast ,\rho }$ (cf. (\ref{policy-normalized}%
),(\ref{wealth-process})), $t\geq 0,$ are given, respectively, by
\begin{equation*}
\pi _{t}^{\ast ,\rho }=Proj_{\Pi }\left( \frac{z^{\rho }(V_{t})+\theta
(V_{t})}{1-\delta }\right) \text{ \ and \ }X_{t}^{\ast ,\rho }=X_{0}\mathcal{%
E}\left( \int_{0}^{\cdot }(\pi _{s}^{\ast ,\rho })^{T}(\theta
(V_{s})ds+dW_{s})\right) _{t}.
\end{equation*}
\end{theorem}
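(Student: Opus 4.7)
The plan is to mirror the proof of Theorem~\ref{Theorem2_ForwardUtility}, replacing the linear discount $\lambda t$ with the stochastic (but bounded) discount $A_t:=\int_0^t\rho y^\rho(V_s)\,ds$. Properties (i) and (ii) of Definition~\ref{def} are immediate from the representation~\eqref{PowerForwardUtility2}: the map $x\mapsto U^\rho(x,t)$ is strictly increasing, strictly concave and homogeneous of degree $\delta$, and $\mathbb{F}$-progressive measurability follows from that of $V$ and the continuity of $y^\rho$. The core task is to verify~\eqref{supermartingale}--\eqref{martingale}.

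First, I would rewrite the infinite horizon BSDE~\eqref{IQBSDE1} between $t$ and $s$ as
\begin{equation*}
Y^\rho_s - A_s = Y^\rho_t - A_t - \int_t^s F(V_u,z^\rho(V_u))\,du + \int_t^s z^\rho(V_u)^T\,dW_u,
\end{equation*}
which is exactly the analogue of~\eqref{Y-solution}, except that $\lambda s$ is replaced by $A_s$. Applying It\^o to $(X^\pi_s)^\delta$ as in the ergodic case and multiplying by $e^{Y^\rho_s-A_s}$ gives
\begin{equation*}
(X_s^\pi)^\delta e^{Y^\rho_s-A_s} = (X_t^\pi)^\delta e^{Y^\rho_t-A_t}\exp\!\left(\int_t^s\!\bigl(F^\pi(V_u,z^\rho(V_u))-F(V_u,z^\rho(V_u))\bigr)du\right)\mathcal{E}(N)^{s}_{t},
\end{equation*}
where $F^\pi$ is defined exactly as in the earlier proof and $N_u=\int_0^u(\delta\pi_r^T+z^\rho(V_r)^T)\,dW_r$. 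The key fact, which I would invoke now, is that $\pi\in\mathcal{A}$ lies in $\mathcal{L}^2_{BMO}[0,s]$ and $z^\rho$ is bounded by $C_v/(C_\eta-C_v)$ (from Proposition~6), so $N$ is a \textit{BMO}-martingale and $\mathcal{E}(N)$ is in Doob's class $\mathcal{D}$, hence a true martingale. Defining $d\mathbb{Q}^\pi/d\mathbb{P}=\mathcal{E}(N)$ and taking conditional expectation as in the ergodic proof reduces the verification to the pointwise inequality $F^\pi\leq F$ with equality for $\pi^{*,\rho}$; this is immediate from the formula
\begin{equation*}
F^\pi(v,z)=-\tfrac{1}{2}\delta(1-\delta)\Bigl|\pi-\tfrac{z+\theta(v)}{1-\delta}\Bigr|^2+\tfrac{1}{2}\tfrac{\delta}{1-\delta}|z+\theta(v)|^2+\tfrac{1}{2}|z|^2,
\end{equation*}
which yields the supermartingale property for arbitrary $\pi\in\mathcal{A}$ and the martingale property when $\pi=\pi^{*,\rho}=\mathrm{Proj}_\Pi((z^\rho(V_t)+\theta(V_t))/(1-\delta))$, establishing (i) and that $\pi^{*,\rho}$ is optimal.

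For the volatility formula in (i), I would note that when $y^\rho\in C^2(\mathbb{R}^d)$ (established by the same elliptic-PDE argument used in Proposition~5 applied to the $\rho$-discounted equation~\eqref{elliptic}), It\^o's formula applied to $U^\rho(x,t)=\frac{x^\delta}{\delta}e^{y^\rho(V_t)-A_t}$ together with the BSDE dynamics gives
\begin{equation*}
dU^\rho(x,t)=U^\rho(x,t)\bigl(-F(V_t,z^\rho(V_t))+\tfrac{1}{2}|z^\rho(V_t)|^2\bigr)dt+U^\rho(x,t)z^\rho(V_t)^T\,dW_t,
\end{equation*}
so the volatility is $a^\rho(x,t)=U^\rho(x,t)z^\rho(V_t)$, as claimed. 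Finally, the optimal wealth expression follows by substituting $\pi^{*,\rho}$ into~\eqref{wealth-process} and solving the resulting linear SDE via Dol\'eans exponential.

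The only place where care is genuinely needed is the \textbf{BMO/uniform integrability step} for the change of measure: one must know a priori that $z^\rho(V_\cdot)$ is bounded (so that $\int z^{\rho}(V_u)^TdW_u$ is \textit{BMO}) and that admissible $\pi$ satisfy the \textit{BMO} condition in~\eqref{admissibility}. Both ingredients are in place (the first from Proposition~6, the second by definition of $\mathcal{A}$), so no new estimates are required; the drift $\rho Y^\rho_t$ contributes only the bounded term $A_t$ and does not affect the argument, unlike the unbounded $\lambda t$ would in an infinite-horizon average sense.
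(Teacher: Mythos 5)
Your proposal is correct and follows exactly the route the paper intends: the paper omits this proof, stating only that it is "similar to the one of Theorem \ref{Theorem2_ForwardUtility}," and your argument carries out precisely that adaptation, replacing $\lambda t$ by $\int_0^t\rho y^{\rho}(V_s)\,ds$ and reusing the boundedness of $z^{\rho}$ together with the BMO property of admissible strategies for the change of measure. The only superfluous step is invoking $C^2$ regularity of $y^{\rho}$ for the volatility computation, which is not needed since the It\^o decomposition of $Y^{\rho}_t-\int_0^t\rho Y^{\rho}_s\,ds$ follows directly from the BSDE (\ref{IQBSDE1}).
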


The proof is similar to the one of Theorem \ref{Theorem2_ForwardUtility},
and it is thus omitted.

The next result relates the factor-from forward process $U\left( x,t\right) $
(cf. Theorem \ref{Theorem2_ForwardUtility}) and the path-dependent one $%
U^{\rho }\left( x,t\right) $ (cf. Theorem \ref{Theorem21_ForwardUtility}),
and their corresponding optimal portfolio strategies.

We use the superscript $v$ to denote dependence on the initial condition.

\begin{proposition}
\label{PropositionPowerUtility} For $\left( x,t\right) \in \mathbb{R}%
_{+}\times \left[ 0,\infty \right) ,$ let $U(x,t)$ and $U^{\rho }(x,t)$ be
the power forward processes given in (\ref{PowerForwardUtility}) and (\ref%
{PowerForwardUtility2}), and $y\left( V_{t}\right) $ the component of the
Markovian solution to the ergodic BSDE (\ref{EQBSDE1}). Then, for an
arbitrary reference point $v_{0}\in \mathbb{R}^{d}$, there exists a
subsequence $\rho _{n}\downarrow 0$ (depending on $v_{0})$ such that, for $%
\left( x,t\right) \in \mathbb{R}_{+}\times \left[ 0,\infty \right) ,$
\begin{equation}
\lim_{\rho _{n}\downarrow 0}\frac{U^{\rho _{n}}(x,t)e^{-y^{\rho _{n}}\left(
v_{0}\right) }}{U(x,t)}=1.  \label{relationU}
\end{equation}%
Moreover, for each $t\geq 0,$ the associated optimal portfolio weights $\pi
^{\ast ,\rho _{n}}$ and $\pi ^{\ast }$ satisfy
\begin{equation}
\lim_{\rho _{n}\downarrow 0}E_{\mathbb{P}}\int_{0}^{t}\left \vert \pi
_{s}^{\ast ,\rho _{n}}-\pi _{s}^{\ast }\right \vert ^{2}ds=0.
\label{portfolios}
\end{equation}
\end{proposition}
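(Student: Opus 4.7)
The plan is to exploit the construction of the ergodic triple $(y,z,\lambda)$ as a subsequential limit of the infinite-horizon Markovian solutions $(y^{\rho},z^{\rho})$ carried out in the proof of Proposition 2 (in the Appendix). That construction, which proceeds by the diagonal argument already invoked in the proof of the ill-posed PDE proposition, supplies a subsequence $\rho_{n}\downarrow 0$ (depending on the reference point $v_{0}$) along which
\begin{equation*}
\rho_{n} y^{\rho_{n}}(v_{0}) \to \lambda, \quad y^{\rho_{n}}(\cdot)-y^{\rho_{n}}(v_{0}) \to y(\cdot), \quad z^{\rho_{n}}(\cdot) \to z(\cdot),
\end{equation*}
with the last two convergences holding locally uniformly on $\mathbb{R}^{d}$, under the normalization $y(v_{0})=0$. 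The quantitative inputs are the uniform bounds from Proposition 5: the gradient estimate $|\nabla y^{\rho_{n}}| \leq \frac{C_{v}}{C_{\eta}-C_{v}}$ (hence the linear growth $|y^{\rho_{n}}(v)-y^{\rho_{n}}(v_{0})| \leq \frac{C_{v}}{C_{\eta}-C_{v}}|v-v_{0}|$), the bound $|z^{\rho_{n}}|,|z| \leq \frac{C_{v}}{C_{\eta}-C_{v}}$, and the exponential ergodicity (\ref{exponentialcondition}) that controls moments of $V$.

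For (\ref{relationU}), I would rewrite the ratio in exponential form as
\begin{equation*}
\frac{U^{\rho_{n}}(x,t)\,e^{-y^{\rho_{n}}(v_{0})}}{U(x,t)} = \exp\Bigl( A_{n}(t) + B_{n}(t) \Bigr),
\end{equation*}
where $A_{n}(t):=[y^{\rho_{n}}(V_{t})-y^{\rho_{n}}(v_{0})]-y(V_{t})$ and $B_{n}(t):=\lambda t - \int_{0}^{t}\rho_{n} y^{\rho_{n}}(V_{s})\,ds$. The term $A_{n}(t)\to 0$ a.s.\ by the locally uniform convergence evaluated at the a.s.\ finite value $V_{t}(\omega)$. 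For $B_{n}(t)$, I would split $\rho_{n} y^{\rho_{n}}(V_{s})=\rho_{n}y^{\rho_{n}}(v_{0})+\rho_{n}(y^{\rho_{n}}(V_{s})-y^{\rho_{n}}(v_{0}))$; the first summand converges to $\lambda$ while the second is dominated by $\rho_{n}\cdot \frac{C_{v}}{C_{\eta}-C_{v}}|V_{s}-v_{0}|$ and hence vanishes a.s.\ pointwise in $s$. Since this dominating quantity is bounded (uniformly in $n$, for $\rho_{n}\le \rho_{1}$) by a $ds\otimes d\mathbb{P}$-integrable function, dominated convergence then gives $B_{n}(t)\to 0$ a.s., and continuity of $\exp$ closes the argument.

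For (\ref{portfolios}), I would use that the Euclidean projection onto the closed convex set $\Pi$ is $1$-Lipschitz, so that
\begin{equation*}
|\pi_{s}^{*,\rho_{n}}-\pi_{s}^{*}|^{2} \leq \frac{1}{(1-\delta)^{2}}\,|z^{\rho_{n}}(V_{s})-z(V_{s})|^{2}.
\end{equation*}
Local uniform convergence of $z^{\rho_{n}}$ to $z$ gives pointwise convergence of the right-hand side to zero on $[0,t]\times \Omega$, while the uniform bound $|z^{\rho_{n}}|,|z|\leq \frac{C_{v}}{C_{\eta}-C_{v}}$ supplies a constant (hence integrable) majorant. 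Bounded convergence on the finite-measure space $[0,t]\times \Omega$ then yields the claimed $L^{2}$-convergence.

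The main obstacle I anticipate is the coupling of the two limits along a \emph{common} subsequence $\rho_{n}$. This is handled at the source: the diagonal extraction in the proof of Proposition 2 simultaneously delivers locally uniform limits of both $y^{\rho_{n}}-y^{\rho_{n}}(v_{0})$ and $z^{\rho_{n}}$, so one and the same $\rho_{n}$ serves both (\ref{relationU}) and (\ref{portfolios}). A minor secondary point is justifying the passage from locally uniform convergence in $v$ to a.s.\ pointwise convergence at the random argument $V_{s}(\omega)$, which is automatic since $V_{s}(\omega)$ is a.s.\ finite and one may restrict to a compact neighborhood of its path on $[0,t]$.
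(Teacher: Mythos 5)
Your treatment of (\ref{relationU}) is essentially the paper's own argument: the same three-term splitting of the exponent, driven by the limits $\rho_{n}y^{\rho_{n}}(v_{0})\to\lambda$ and $y^{\rho_{n}}(\cdot)-y^{\rho_{n}}(v_{0})\to y(\cdot)$ from the diagonal extraction (\ref{**})--(\ref{*}), with the uniform Lipschitz bound $\frac{C_{v}}{C_{\eta}-C_{v}}$ controlling the integral term; your explicit domination step merely spells out what the paper leaves implicit, and the common subsequence issue you worry about is indeed settled at the source, as you say.

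The one place you diverge from the paper, and where your write-up has a gap, is the justification of (\ref{portfolios}). You assert that the diagonal extraction in the Appendix ``simultaneously delivers locally uniform limits of both $y^{\rho_{n}}-y^{\rho_{n}}(v_{0})$ and $z^{\rho_{n}}$.'' It does not: the construction gives locally uniform convergence only for $\bar{y}^{\rho_{n}}=y^{\rho_{n}}-y^{\rho_{n}}(v_{0})$, while for the $Z$-components it establishes only convergence of $Z^{\rho_{n},v}$ to $Z^{v}$ in $\mathcal{L}^{2}[t,s]$ --- which is precisely the limit (\ref{Z-claim}) the paper invokes. So your bounded-convergence step rests on a claim that is nowhere proved. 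The fix is either trivial --- cite (\ref{Z-claim}) directly and conclude from the $1$-Lipschitz projection, exactly as the paper does, which renders the pointwise/dominated argument unnecessary --- or requires genuine extra work: to get locally uniform convergence of $z^{\rho_{n}}=\kappa^{T}\nabla y^{\rho_{n}}$ one would have to use the uniform interior H\"older estimates for $\nabla y^{\rho_{n}}$ (as in the proof of the semilinear PDE proposition) and pass to a further subsequence via Arzel\`a--Ascoli, neither of which you carry out.
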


\begin{proof}
For an arbitrary reference point $v_{0}\in \mathbb{R}^{d}$, from the
representations (\ref{PowerForwardUtility}) and (\ref{PowerForwardUtility2}%
), we have that
\begin{equation*}
\frac{U^{\rho }(x,t)e^{-y^{\rho }\left( v_{0}\right) }}{U(x,t)}=\exp \left(
(y^{\rho }\left( V_{t}^{v}\right) -\int_{0}^{t}\rho y^{\rho }\left(
V_{u}^{v}\right) du)-(y(V_{t}^{v})-\lambda t)-y^{\rho }(v_{0}\right)
\end{equation*}%
\begin{equation*}
=\exp \left( \left( y^{\rho }\left( V_{t}^{v}\right) -y^{\rho }\left(
v_{0}\right) -y(V_{t}^{v})\right) -\int_{0}^{t}\rho \left( y^{\rho }\left(
V_{u}^{v}\right) -y^{\rho }\left( v_{0}\right) \right) du-(\rho y^{\rho
}\left( v_{0}\right) -\lambda )t\right) .
\end{equation*}%
On the other hand, the limits (\ref{**}) and (\ref{*}), established in the
Appendix, yield that there exists a subsequence $\rho _{n}\downarrow 0$ such
that%
\begin{equation*}
\lim_{\rho _{n}\downarrow 0}\left( y^{\rho _{n}}\left( V_{t}^{v}\right)
-y^{\rho _{n}}\left( v_{0}\right) -y(V_{t}^{v}\right) )=0,\text{ \  \ }
\end{equation*}%
\begin{equation*}
\lim_{\rho _{n}\downarrow 0}\rho _{n}\left( y^{\rho _{n}}\left(
V_{t}^{v}\right) -y^{\rho _{n}}\left( v_{0}\right) \right) =0\text{ \  \  \  \
and \  \ }\lim_{\rho _{n}\downarrow 0}\left( \rho _{n}y^{\rho _{n}}\left(
v_{0}\right) -\lambda \right) =0,
\end{equation*}%
and we conclude.

To show assertion (\ref{portfolios}), we use the Lipschitz continuity of the
projection operator on the convex set $\Pi $ and the convergence%
\begin{equation}
\lim_{\rho _{n}\downarrow 0}E_{\mathbb{P}}\int_{0}^{t}\left \vert z^{\rho
_{n}}\left( V_{s}^{v}\right) -z\left( V_{s}^{v}\right) \right \vert ^{2}ds=0,
\label{Z-claim}
\end{equation}%
for $t\geq 0$. The latter is established in the Appendix.
\end{proof}

\subsection{Connection with the classical power expected utility for long
horizons}

\label{SubSectionTraditionalPowerUtility}

We examine whether the forward processes $U\left( x,t\right) $ and $U^{\rho
}\left( x,t\right) $ can be interpreted as long-term limits of the classical
value function process. We show that this is indeed the case for a family of
expected utility models with appropriately chosen terminal random
(multiplicative) payoffs.

To this end, let $\left[ 0,T\right] $ be an arbitrary trading horizon and
introduce, for $\rho >0,$ the value function process
\begin{equation}
u^{\rho }(x,t;T)=ess\sup_{\pi \in \mathcal{A}_{[t,T]}}E_{\mathbb{P}}\left(
\frac{(X_{T}^{\pi }e^{\xi _{T}})^{\delta }}{\delta }|\mathcal{F}%
_{t},X_{t}^{\pi }=x\right) ,  \label{PowerUtility}
\end{equation}%
for $\left( x,t\right) \in \mathbb{R}_{+}\times \left[ 0,T\right] $ and the
wealth process $X_{s}^{\pi }$, $s\in \lbrack t,T]$ solving (\ref%
{wealth-process}).

The payoff $\xi _{T}$ is defined as
\begin{equation}
\xi _{T}:=-\frac{1}{\delta }\int_{0}^{T}\rho Y_{t}^{\rho ,T}dt,
\label{endowment}
\end{equation}%
where $Y_{t}^{\rho ,T}$ is the solution of the finite-horizon quadratic BSDE
\begin{equation}
Y_{t}^{\rho ,T}=\int_{t}^{T}\left( F(V_{s},Z_{s}^{\rho ,T})-\rho Y_{s}^{\rho
,T}\right) ds-\int_{t}^{T}\left( Z_{s}^{\rho ,T}\right) ^{T}dW_{s},
\label{QBSDEPower}
\end{equation}%
with the driver $F(\cdot ,\cdot )$ given in (\ref{driver-power}). The
associated optimal portfolio weights are denoted by $\pi _{s}^{\ast ,\rho
,T} $, $s\in \lbrack t,T]$.

We recall that the classical optimal investment problem with power utility
has been solved using quadratic BSDE methods in \cite{HU0} for a Brownian
motion setting, and in \cite{Morlais} for a general semimartingale framework.

\begin{proposition}
\label{PropositionPowerUtility2} i) Let $u^{\rho }(x,t;T)$ and $U^{\rho
}(x,t)$ be given in (\ref{PowerUtility}) and (\ref{PowerForwardUtility2}),
respectively. Then, for each $\rho >0$ and $\left( x,t\right) \in \mathbb{R}%
_{+}\times \left[ 0,\infty \right) ,$
\begin{equation*}
\lim_{T\uparrow \infty }\frac{u^{\rho }(x,t;T)}{U^{\rho }(x,t)}=1,
\end{equation*}%
and the optimal portfolio weights satisfy, for $s\in \left[ t,T\right] ,$
\begin{equation*}
\lim_{T\uparrow \infty }E_{\mathbb{P}}\int_{t}^{s}\left \vert \pi _{u}^{\ast
,\rho ,T}-\pi _{u}^{\ast ,\rho }\right \vert ^{2}du=0.
\end{equation*}

ii) Let $U\left( x,t\right) $ be the power forward process as in (\ref%
{PowerForwardUtility}). Then, for each arbitrary reference point $v_{0}\in
\mathbb{R}^{d}$, there exists a subsequence $\rho _{n}\downarrow 0$
(depending on $v_{0}$) such that, for $\left( x,t\right) \in \mathbb{R}%
_{+}\times \left[ 0,\infty \right) ,$
\begin{equation*}
\lim_{\rho _{n}\downarrow 0}\lim_{T\uparrow \infty }\frac{u^{\rho
_{n}}(x,t;T)e^{-y^{\rho _{n}}\left( v_{0}\right) }}{U(x,t)}=1,
\end{equation*}%
and the optimal portfolio weights satisfy, for $s\in \left[ t,T\right] ,$
\begin{equation*}
\lim_{\rho _{n}\downarrow 0}\lim_{T\uparrow \infty }E_{\mathbb{P}%
}\int_{t}^{s}\left \vert \pi _{u}^{\ast ,\rho _{n},T}-\pi _{u}^{\ast }\right
\vert ^{2}du=0.
\end{equation*}
\end{proposition}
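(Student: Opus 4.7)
The plan is to derive a closed-form representation of $u^\rho(x,t;T)$ via the finite-horizon BSDE \ref{QBSDEPower}, then compare this BSDE with its infinite-horizon analogue \ref{IQBSDE1} as $T\uparrow\infty$ (for part (i)), and finally combine with Proposition \ref{PropositionPowerUtility} as $\rho_n\downarrow 0$ (for part (ii)). For the representation, I would rewrite the driver as in the proof of Proposition \ref{propositionLambda},
\[
F(V_s,Z)=\sup_{\pi\in\Pi}\bigl(L(V_s,\pi)+\delta Z^T\pi\bigr)+\tfrac12|Z|^2,
\]
and apply It\^o's formula to
\[
M_s^\pi := \frac{(X_s^\pi)^\delta}{\delta}\exp\!\Bigl(Y_s^{\rho,T}-\int_0^s\rho Y_u^{\rho,T}\,du\Bigr),\qquad s\in[t,T].
\]
The computation is almost verbatim that in the proof of Theorem \ref{Theorem2_ForwardUtility}: the $+\rho Y^{\rho,T}$ term in $dY^{\rho,T}$ is cancelled exactly by the $-\rho Y^{\rho,T}$ produced when differentiating the integral in the exponent, leaving a non-positive drift that vanishes precisely at $\pi_s^{\ast,\rho,T}=Proj_\Pi\bigl((Z_s^{\rho,T}+\theta(V_s))/(1-\delta)\bigr)$. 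A Girsanov argument based on the BMO property of $Z^{\rho,T}$ upgrades $M^\pi$ to a true supermartingale (martingale at $\pi^{\ast,\rho,T}$), and using $Y_T^{\rho,T}=0$ together with \ref{endowment} yields
\[
u^\rho(x,t;T)=\frac{x^\delta}{\delta}\exp\!\Bigl(Y_t^{\rho,T}-\int_0^t\rho Y_s^{\rho,T}\,ds\Bigr).
\]

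For part (i), dividing this by \ref{PowerForwardUtility2} gives
\[
\frac{u^\rho(x,t;T)}{U^\rho(x,t)}=\exp\!\Bigl((Y_t^{\rho,T}-y^\rho(V_t))-\int_0^t\rho(Y_s^{\rho,T}-y^\rho(V_s))\,ds\Bigr),
\]
so it suffices to show $Y_s^{\rho,T}\to y^\rho(V_s)$ a.s. and $Z^{\rho,T}\to z^\rho(V_\cdot)$ in $L^2([0,t]\times\Omega)$ as $T\uparrow\infty$. Subtracting \ref{IQBSDE1} from \ref{QBSDEPower}, linearizing $F$ in its second argument by the mean-value theorem (which is legitimate thanks to the a-priori $L^\infty$-bounds $|z^\rho|\le C_v/(C_\eta-C_v)$ and an analogous $T$-uniform gradient estimate for the Markovian PDE behind $Z^{\rho,T}$, obtained by exactly the technique used in the Appendix), and removing the resulting linear $Z$-term by a Girsanov change to some $\mathbb{Q}$, I would obtain
\[
y^\rho(V_s)-Y_s^{\rho,T}=e^{-\rho(T-s)}\,E_{\mathbb{Q}}\bigl[y^\rho(V_T)\mid\mathcal{F}_s\bigr],
\]
which is dominated by $(K/\rho)\,e^{-\rho(T-s)}$; this delivers a.s. exponential convergence together with a $T$-uniform bound, and dominated convergence handles the $\rho$-weighted integral above. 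The $L^2$-convergence of $Z^{\rho,T}$ follows from a standard BSDE stability estimate obtained by applying It\^o to $(Y^{\rho,T}-y^\rho(V_\cdot))^2$, and Lipschitz continuity of $Proj_\Pi$ then yields the portfolio convergence in part (i).

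For part (ii), I would telescope
\[
\frac{u^{\rho_n}(x,t;T)\,e^{-y^{\rho_n}(v_0)}}{U(x,t)}=\frac{u^{\rho_n}(x,t;T)}{U^{\rho_n}(x,t)}\cdot\frac{U^{\rho_n}(x,t)\,e^{-y^{\rho_n}(v_0)}}{U(x,t)},
\]
send $T\uparrow\infty$ first (using part (i)) to drive the first factor to $1$, then $\rho_n\downarrow 0$ along the subsequence from Proposition \ref{PropositionPowerUtility} to drive the second factor to $1$. The portfolio claim then follows from the triangle inequality
\[
|\pi_u^{\ast,\rho_n,T}-\pi_u^\ast|\le|\pi_u^{\ast,\rho_n,T}-\pi_u^{\ast,\rho_n}|+|\pi_u^{\ast,\rho_n}-\pi_u^\ast|,
\]
combined with the $L^2$-convergences established in part (i) and in \ref{portfolios}. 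The main technical obstacle is securing the uniform-in-$T$ a-priori $L^\infty$-bound on $Z^{\rho,T}$ needed both to linearize $F$ and to justify the Girsanov step; the dissipativity hypothesis \ref{dissipative} should furnish this via the same Markovian gradient estimate employed in the Appendix, thereby reducing everything to Lipschitz BSDE techniques in the spirit of \cite{Briand0}.
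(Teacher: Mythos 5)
Your proposal is correct and follows the same overall skeleton as the paper (closed-form representation of $u^{\rho}(x,t;T)$, comparison of the finite- and infinite-horizon BSDE as $T\uparrow \infty $, Lipschitz continuity of $Proj_{\Pi }$ for the portfolios, and telescoping through Proposition \ref{PropositionPowerUtility} for part ii), which the paper leaves to the reader), but you execute the two key steps differently. For the representation, the paper does not redo the verification: it invokes the results of \cite{HU0} (legitimate because $\delta \xi _{T}=-\int_{0}^{T}\rho Y_{u}^{\rho ,T}du$ is bounded, by $|Y^{\rho ,T}|\leq K/\rho $, and $F$ satisfies (\ref{driver1})--(\ref{driver2})) to write $u^{\rho }(x,t;T)=\frac{x^{\delta }}{\delta }e^{Y_{t}}$ with $Y$ the unique solution of the quadratic BSDE (\ref{QBSDE-power}) with terminal datum $\delta \xi _{T}$, and then identifies $Y_{t}=Y_{t}^{\rho ,T}-\int_{0}^{t}\rho Y_{s}^{\rho ,T}ds$ purely by uniqueness, since that pair also solves (\ref{QBSDE-power}); your direct It\^{o}/Girsanov supermartingale verification \`{a} la Theorem \ref{Theorem2_ForwardUtility} reaches the same formula and is self-contained, at the cost of having to justify admissibility of $\pi ^{\ast ,\rho ,T}$ and the measure change (BMO property of $Z^{\rho ,T}$). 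For the limit $T\uparrow \infty $, the paper simply quotes the Appendix convergences (\ref{convergence0}) and (\ref{convergence1}) (the Cauchy-sequence argument adapted from \cite{Briand}), whereas you re-derive the convergence by linearizing $F$ and changing measure, which buys an explicit exponential rate $|y^{\rho }(V_{s})-Y_{s}^{\rho ,T}|\leq (K/\rho )e^{-\rho (T-s)}$ (in fact uniform convergence on $[0,t]$, so no dominated convergence is even needed). The price is exactly the obstacle you flag: a $T$-uniform bound on $Z^{\rho ,T}$ to make the linearization coefficient bounded and Girsanov applicable. This is obtainable by repeating the Appendix's bi-Lipschitz estimate (the Girsanov comparison of two initial conditions, here with zero terminal difference) for the finite-horizon Markovian solution, or by working with the truncated driver $q(\cdot )$ as the Appendix does; the paper's route avoids this issue altogether by relying on uniqueness of the quadratic BSDE with bounded terminal condition and on the already-established Appendix limits. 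Both arguments are sound; yours is more quantitative, the paper's is shorter.
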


\begin{proof}
We only show part i). From Theorem 3.3 in \cite{Briand0} we have that $%
|Y_{t}^{\rho ,T}|\leq \frac{K}{\rho }$, and therefore, the quantity $\delta
\xi _{T}=-\int_{0}^{T}\rho Y_{u}^{\rho ,T}du$ is bounded. On the other hand,
the driver $F(\cdot ,\cdot )$ satisfies properties (\ref{driver1}) and (\ref%
{driver2}). Therefore, using similar arguments to the ones used in section 3
of \cite{HU0}, it follows that the value function process is given by $%
u^{\rho }(x,t;T)=\frac{x^{\delta }}{\delta }e^{Y_{t}}$, with $Y_{t}$ being
the \textit{unique} solution of the quadratic BSDE
\begin{equation}
Y_{t}=\delta \xi _{T}+\int_{t}^{T}F(V_{s},Z_{s})ds-\int_{t}^{T}\left(
Z_{s}\right) ^{T}dW_{s},  \label{QBSDE-power}
\end{equation}%
for $t\in \left[ 0,T\right] .$ In addition, the optimal portfolio weights
are given by $\pi _{t}^{\ast ,\rho ,T}=Proj_{\Pi }(\frac{Z_{t}+\theta (V_{t})%
}{1-\delta })$.

Note, however, that the pair of processes $(Y_{t}^{\rho ,T}-\int_{0}^{t}\rho
Y_{s}^{\rho ,T}ds,Z_{t}^{\rho ,T})$, $t\in \lbrack 0,T]$, with $(Y_{t}^{\rho
,T},Z_{t}^{\rho ,T})$ solving (\ref{QBSDEPower}), also satisfies the above
quadratic BSDE (\ref{QBSDE-power}). Therefore, we must have $%
Y_{t}=Y_{t}^{\rho ,T}-\int_{0}^{t}\rho Y_{s}^{\rho ,T}ds$, $t\in \left[ 0,T%
\right] $ and, as a consequence,
\begin{equation*}
u^{\rho }(x,t;T)=\frac{x^{\delta }}{\delta }\exp \left( Y_{t}^{\rho
,T}-\int_{0}^{t}\rho Y_{s}^{\rho ,T}ds\right) .
\end{equation*}%
In turn,
\begin{equation*}
\frac{u^{\rho }(x,t;T)}{U^{\rho }(x,t)}=\exp \left( (Y_{t}^{\rho
,T}-\int_{0}^{t}\rho Y_{s}^{\rho ,T}ds)-(Y_{t}^{\rho }-\int_{0}^{t}\rho
Y_{s}^{\rho }ds)\right) .
\end{equation*}%
Using (\ref{convergence0}) we deduce that $\lim_{T\uparrow \infty
}Y_{t}^{\rho ,T}=Y_{t}^{\rho }$, and we easily conclude.

The convergence of the optimal portfolio weights follows from the Lipschitz
continuity of the projection operator on the convex set $\Pi $ and the
convergence of $Z^{\rho ,T}$\ to $Z^{\rho }$\ in $\mathcal{L}_{\rho
}^{2}[t,\infty ).$ The space $\mathcal{L}_{\rho }^{2}[t,\infty )$ is defined
in (\ref{L-2-rho}) and the latter limit is shown in (\ref{convergence1}) in
the Appendix.
\end{proof}

\subsection{General (non-Markovian) power forward performance processes and
ergodic BSDE}

Departing from factor-form power forward performance processes, we may still
use the ergodic BSDE approach we developed earlier to construct such
processes of the general form%
\begin{equation*}
U\left( x,t\right) =\frac{x^{\delta }}{\delta }e^{K_{t}},
\end{equation*}%
for some $\mathbb{F}$-progressively measurable process $K_{t},t\geq 0,$
independent of $x.$

Indeed, consider an arbitrary process $Z\in \mathcal{L}_{BMO}^{2},$ and, in
turn, choose $(Y_{t},\lambda ),$ $t\geq 0,$ $\lambda \in \mathbb{R}$ and $Y$
being $\mathbb{F}$-progressively measurable, such that the triplet $%
(Y_{t},Z_{t},\lambda )$ solves the ergodic BSDE (\ref{EQBSDE1}). Using
similar arguments as the ones in the proof of Theorem 1, we may deduce that
the process
\begin{equation}
U\left( x,t\right) =\frac{x^{\delta }}{\delta }e^{Y_{t}-\lambda t},
\label{power}
\end{equation}%
$\left( x,t\right) \in \mathbb{R}_{+}\times \left[ 0,\infty \right) $
satisfies Definition 1. Then, the SPDE\ (\ref{SPDE})\ will yield that the
forward volatility is given by the process $a\left( x,t\right) =U\left(
x,t\right) Z_{t},t\geq 0.$ {One can also develop similar connections with
infinite horizon BSDE\ and the value function processes with terminal
(multiplicative) payoff,} as in sections 3.2 and 3.3.

The analysis of general power forward processes is beyond the scope of this
paper, and will be carried out separately. Herein, we only comment on three
examples, cast in the absence of portfolio constraints, $\Pi =\mathbb{R}%
^{d}. $

\textit{i) Time-monotone case}: Let $Z_{t}\equiv 0$, $t\geq 0$, and choose $%
(Y_{t},\lambda )$ as%
\begin{equation*}
Y_{t}-\lambda t:=Y_{0}-\int_{0}^{t}\frac{1}{2}\frac{\delta }{1-\delta }%
|\theta (V_{s})|^{2}ds,
\end{equation*}%
for any constant $Y_{0}\in \mathbb{R}$. Then, $(Y_{t},0,\lambda )$ satisfies
(\ref{EQBSDE1}). In turn, we deduce, using (\ref{power}) and the above, that
the process
\begin{equation*}
U(x,t):=e^{Y_{0}}\frac{x^{\delta }}{\delta }e^{-\frac{1}{2}\frac{\delta }{%
1-\delta }A_{t}},
\end{equation*}%
with \ $A_{t}=\int_{0}^{t}|\theta (V_{s})|^{2}ds,$ is a power performance.
This process has zero volatility $\left( a\left( x,t\right) \equiv 0\right)
, $ it is decreasing in time and path-dependent (see \cite{MZ2} for a
general study).

Variations of this solution with non-zero forward volatility can be
constructed, as it is shown below. We stress, however, that these forward
processes essentially correspond to a fictitious market with \textit{%
different risk-premia} and, thus, they do \textit{not} constitute genuine
new solutions for the original market.

\textit{ii) Market view case}: Let $Z_{t}=\phi _{t}$ with $\phi \in \mathcal{%
L}_{BMO}^{2}$, and choose $(Y_{t},\lambda ),$ $t\geq 0,$ as
\begin{equation*}
Y_{t}-\lambda t:=Y_{0}-\frac{1}{2}\frac{\delta }{1-\delta }\int_{0}^{t}|\phi
_{s}+\theta (V_{s})|^{2}ds-\frac{1}{2}\int_{0}^{t}|\phi
_{s}|^{2}ds+\int_{0}^{t}\phi _{s}^{T}dW_{s}.
\end{equation*}%
We can then verify that $\left( Y_{t},\phi _{t},\lambda \right) $ satisfies
equation (\ref{EQBSDE1}). Using (\ref{power}) and rearranging terms, we
deduce the representation
\begin{equation*}
U(x,t)=\frac{x^{\delta }}{\delta }e^{Y_{0}-\frac{1}{2}\frac{\delta }{%
1-\delta }\int_{0}^{t}|\phi _{s}+\theta (V_{s})|^{2}ds}\mathcal{E}\left(
\int_{0}^{\cdot }\phi _{s}^{T}dW_{s}\right) _{t}
\end{equation*}%
\begin{equation*}
=e^{Y_{0}}\frac{x^{\delta }}{\delta }e^{-\frac{1}{2}\frac{\delta }{1-\delta }%
A_{t}^{\phi }}M_{t}
\end{equation*}%
with $A_{t}^{\phi }=\int_{0}^{t}|\phi _{s}+\theta (V_{s})|^{2}ds$ and $M_{t}=%
\mathcal{E}(\int_{0}^{\cdot }\phi _{s}^{T}dW_{s})_{t}$.

\textit{iii) Benchmark case}: A different parametrization yields an
alternative representation and interpretation of the solution. Let $%
Z_{t}=\delta \phi _{t}$ with $\phi \in \mathcal{L}_{BMO}^{2}$, and choose $%
\left( Y_{t},\lambda \right) ,$ $t\geq 0,$ as
\begin{equation*}
Y_{t}=Y_{0}+\lambda t-\int_{0}^{t}\frac{1}{2}\frac{\delta }{1-\delta }%
|\delta \phi _{s}+\theta (V_{s})|^{2}ds-\frac{1}{2}\int_{0}^{t}|\delta \phi
_{s}|^{2}ds+\int_{0}^{t}\delta \phi _{s}^{T}dW_{s}.
\end{equation*}%
Then, $\left( Y_{t},\delta \phi _{t},\lambda \right) $ solves equation (\ref%
{EQBSDE1}), and, in turn, (\ref{power}) yields the power forward process%
\begin{equation*}
U(x,t)=\frac{x^{\delta }}{\delta }e^{Y_{0}-\int_{0}^{t}\frac{1}{2}\frac{%
\delta }{1-\delta }|\delta \phi _{s}+\theta (V_{s})|^{2}ds}\mathcal{E}\left(
\int_{0}^{\cdot }\delta \phi _{s}^{T}dW_{s}\right) _{t}
\end{equation*}%
\begin{equation*}
=e^{Y_{0}}\frac{x^{\delta }}{\delta }e^{-\int_{0}^{t}\frac{1}{2}\frac{\delta
}{1-\delta }|\phi _{s}+\theta (V_{s})|^{2}}\left( \mathcal{E}\left(
\int_{0}^{\cdot }-\phi _{s}^{T}(dW_{s}+\theta (V_{s})ds)\right) _{t}\right)
^{-\delta }
\end{equation*}%
\begin{equation*}
=e^{Y_{0}}\frac{1}{\delta }\left( \frac{x}{M_{t}}\right) ^{\delta }e^{-\frac{%
1}{2}\frac{\delta }{1-\delta }A_{t}^{\phi }},
\end{equation*}%
with $A_{t}^{\phi }=\int_{0}^{t}|\phi _{s}+\theta (V_{s})|^{2}ds$ and $M_{t}=%
\mathcal{E}\left( \int_{0}^{\cdot }-\phi _{s}^{T}(\theta
(V_{s})ds+dW_{s})\right) _{t}$. \ We may then view this process as measuring
the performance of investment strategies in relation to a \textquotedblleft
benchmark", represented by the process $M_{t}.$

For more details about the above processes and further interpretations, as
well as the specification of the associated myopic and non-myopic portfolio
components, and the corresponding wealth processes, we refer the reader to
\cite{MZ3}.

\section{Exponential case}

\label{SectionExpUtility}

We examine forward performance processes in the exponential factor-form%
\begin{equation}
U\left( x,t\right) =-e^{-\gamma x+f\left( V_{t},t\right) },
\label{exponential-general}
\end{equation}%
where $f$ is a (deterministic) function to be specified.

For exponential forward performance processes, it is more convenient for the
control policy to represent the discounted amount (and not the proportions
of the discounted wealth) invested in the individual stock accounts. Hence,
we set $\tilde{\alpha}_{t}=\tilde{\pi}_{t}X_{t}^{\pi }$. In turn, we rescale
$\tilde{\alpha}_{t}$ by the stocks' volatility, and deduce that the wealth
process solves, for $t\geq 0$,
\begin{equation}
dX_{t}^{\alpha }=\alpha _{t}^{T}(\theta (V_{t})dt+dW_{t}),
\label{wealth-expo}
\end{equation}%
with $\alpha _{t}^{T}=\tilde{\alpha}_{t}^{T}\sigma (V_{t}).$ The set of
admissible policies is $\mathcal{A},$ and we take the admissible wealth
domain to be $\mathbb{D}=\mathbb{R}$.

As in the power case, (\ref{exponential-general}) and (\ref{SPDE})\ yield
that $f$ must satisfy, for $\left( v,t\right) \in \mathbb{R}_{+}\times \left[
0,\infty \right) ,$ a semilinear PDE, given by
\begin{equation}
f_{t}+\frac{1}{2}Trace\left( \kappa \kappa ^{T}\nabla ^{2}f\right) +\eta
(v)^{T}\nabla f+G(v,\kappa ^{T}\nabla f)=0,  \label{exponential-equation}
\end{equation}%
with
\begin{equation}
G(v,z)=\frac{1}{2}\gamma ^{2}dist^{2}\left( \Pi ,\frac{z+\theta (v)}{\gamma }%
\right) -\frac{1}{2}|z+\theta (v)|^{2}+\frac{1}{2}|z|^{2},
\label{exponential-driver}
\end{equation}%
which is ill-posed with no known solutions to date. On the other hand, as in
the former case, we will construct the process $f\left( V_{t},t\right) $
itself directly from the Markovian solution of an ergodic BSDE whose driver
is of the above form (cf. (\ref{G-driver})).

\subsection{Construction via ergodic BSDE}

The results are similar to the ones derived in the previous section and are,
thus, stated without proofs.

\begin{proposition}
\label{Theorem3_EQBSDE} Assume that the market price of risk vector $\theta
\left( v\right) $ satisfies Assumption 1.ii and let the set $\Pi $ be as in (%
\ref{admissibility}). Then, the ergodic BSDE
\begin{equation}
dY_{t}=(-G(V_{t},Z_{t})+\lambda )dt+Z_{t}^{T}dW_{t},  \label{EQBSDE2}
\end{equation}%
with the driver $G(\cdot ,\cdot )$ is given by
\begin{equation}
G(V_{t},Z_{t})=\frac{1}{2}\gamma ^{2}dist^{2}\left( \Pi ,\frac{Z_{t}+\theta
(V_{t})}{\gamma }\right) -\frac{1}{2}|Z_{t}+\theta (V_{t})|^{2}+\frac{1}{2}%
|Z_{t}|^{2},  \label{G-driver}
\end{equation}%
admits a unique Markovian solution $(Y_{t},Z_{t},\lambda ),t\geq 0.$

Specifically, there exist a unique $\lambda \in \mathbb{R}$ and functions $y:%
\mathbb{R}^{d}\rightarrow \mathbb{R}$ and $z:\mathbb{R}^{d}\rightarrow
\mathbb{R}^{d}$ such that $(Y_{t},Z_{t})=(y(V_{t}),z(V_{t})).$ The function $%
y(\cdot )$ is unique up to a constant and has at most linear growth, and $%
z\left( \cdot \right) $ is bounded with $|z\left( \cdot \right) |\leq \frac{%
C_{v}}{C_{\eta }-C_{v}}$, where $C_{\eta }$ and $C_{v}$ are as in (\ref%
{dissipative}) and (\ref{driver0}), respectively.
\end{proposition}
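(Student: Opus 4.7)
My plan is to mirror the proof of Proposition \ref{Theorem1_EQBSDE} in the power case, since the two ergodic BSDE share the same structural form and differ only in the driver. The first task is to verify that $G(v,z)$ satisfies the same qualitative estimates as $F(v,z)$ does, namely that for each fixed $v$ it is locally Lipschitz in $z$ and that, uniformly for $z$ in any bounded set,
\begin{equation*}
|G(v,z) - G(\bar v, z)| \leq C_v(1+|z|)|v - \bar v|,
\end{equation*}
for a constant $C_v$ depending on the Lipschitz constant and bound of $\theta$, the risk aversion $\gamma$, and the Lipschitz regularity of the projection onto $\Pi$. The distance-squared term is controlled by the non-expansivity of the projection, while the remaining $|z+\theta(v)|^2$ and $|z|^2$ contributions depend on $v$ only through $\theta(v)$, whose Lipschitz constant is finite by Assumption \ref{Assumption1}.

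Second, I would invoke the infinite-horizon step from the Appendix to obtain, for each $\rho > 0$, a unique Markovian solution $(y^\rho(V_t), z^\rho(V_t))$ of the discounted equation
\begin{equation*}
dY_t^\rho = (-G(V_t, Z_t^\rho) + \rho Y_t^\rho)\,dt + (Z_t^\rho)^T dW_t,
\end{equation*}
satisfying $|y^\rho(\cdot)| \leq K/\rho$ and, crucially, the uniform-in-$\rho$ bound $|z^\rho(\cdot)| \leq C_v/(C_\eta - C_v)$. The latter is obtained through the standard coupling argument: considering two copies $V^v, V^{\bar v}$ of the factor process, writing the equation satisfied by the difference $y^\rho(V_t^v) - y^\rho(V_t^{\bar v})$, and applying the dissipative condition (\ref{dissipative}) together with the above Lipschitz estimate on $G$. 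Although $G$ is quadratic in $z$, this a priori bound confines $z^\rho$ to a fixed bounded set, so the effective equation is Lipschitz and the existence theory of \cite{Briand, HU2} applies unchanged.

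Third, the ergodic limit is extracted by fixing a reference point $v_0 \in \mathbb{R}^d$ and applying a diagonal and Arzel\`a--Ascoli argument to the shifted functions $\bar y^\rho(v) := y^\rho(v) - y^\rho(v_0)$, which are equicontinuous and locally bounded by the gradient bound above. Along a subsequence $\rho_n \downarrow 0$ one extracts $\bar y^{\rho_n} \to y$ locally uniformly (with $y$ of at most linear growth), $\rho_n y^{\rho_n}(v_0) \to \lambda \in \mathbb{R}$, and $z^{\rho_n} \to z$ (along a further subsequence, inheriting the same bound). Passing to the limit in the prelimit BSDE identifies $(y(V_t), z(V_t), \lambda)$ as a Markovian solution of (\ref{EQBSDE2}). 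Uniqueness of $\lambda$ and of $y$ up to an additive constant follows from a Girsanov comparison of two putative solutions under the measure determined by the $z$-difference quotient of $G$, combined with the exponential ergodicity (\ref{exponentialcondition}), which forces the discrepancy to vanish.

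The main obstacle, as emphasized in the introduction, is the apparent quadratic growth of $G$ in $z$, which would ordinarily necessitate the heavier machinery of quadratic BSDE. This is neutralized entirely by the a priori estimate on $z^\rho$, a consequence of the dissipative dynamics of $V$, which effectively renders the equation Lipschitz and allows the arguments of the Appendix (developed for the power case) to be transferred without modification.
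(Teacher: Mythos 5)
Your proposal is correct and follows essentially the same route as the paper, which states this proposition without a separate proof precisely because the Appendix treats the drivers $H=F,G$ simultaneously: one checks that $G$ satisfies the structural conditions (\ref{driver0})--(\ref{driver2}), solves the discounted equation, obtains the Lipschitz bound on $y^{\rho}$ via Girsanov and the exponential ergodicity (\ref{exponentialcondition}) (hence the bound on $z^{\rho}$), and passes to the vanishing-discount limit by the diagonal argument. The only presentational difference is that the paper first truncates the $z$-variable via $q(z)$ in (\ref{truncation}) so that the equation is genuinely Lipschitz before any solution exists, and only afterwards shows $|Z_t|\leq C_{v}/(C_{\eta}-C_{v})$ so the truncation is inactive; your appeal to the a priori bound to "render the equation Lipschitz" is the same idea, made non-circular by that truncation step which your invocation of the Appendix implicitly includes.
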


\begin{theorem}
\label{Theorem4_ForwardUtility} Let $(Y_{t},Z_{t},\lambda
)=(y(V_{t}),z(V_{t}),\lambda ),$ $t\geq 0,$ be the unique Markovian solution
of the ergodic BSDE (\ref{EQBSDE2}). Then,

i) the process $U\left( x,t\right) ,$ given, for $\left( x,t\right) \in
\mathbb{R}\times \left[ 0,\infty \right) ,$ by
\begin{equation}
U(x,t)=-e^{-\gamma x+y(V_{t})-\lambda t}  \label{ExponentialForwardUtility}
\end{equation}%
is an exponential forward performance process with volatility%
\begin{equation*}
a\left( x,t\right) =-e^{-\gamma x+y(V_{t})-\lambda t}z(V_{t}).
\end{equation*}

ii) The optimal portfolios $\alpha _{t}^{\ast }$ and the optimal wealth
process $X_{t}^{\ast }$ are given, respectively, by
\begin{equation}
\alpha _{t}^{\ast }=Proj_{\Pi }\left( \frac{z(V_{t})+\theta (V_{t})}{\gamma }%
\right) \text{, }X_{t}^{\ast }=X_{0}+\int_{0}^{t}(\alpha _{t}^{\ast
})^{T}(\theta (V_{t})dt+dW_{t}).  \label{ExponentialOptimalStrategy}
\end{equation}
\end{theorem}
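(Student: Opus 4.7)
My plan is to mimic the proof of Theorem \ref{Theorem2_ForwardUtility}, with the exponential form $U(x,t)=-e^{-\gamma x+y(V_t)-\lambda t}$ playing the role of the power homothetic scaling and with the additive wealth equation (\ref{wealth-expo}) replacing the geometric one. The $\mathbb{F}$-progressive measurability, strict monotonicity and strict concavity of $U$ in $x\in\mathbb{R}$ are immediate from the exponential form and the fact that $y(V_t)$ is progressively measurable. What remains is to verify the super/martingale requirements (\ref{supermartingale})--(\ref{martingale}) and identify the volatility.

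For the key step, I would combine the integrated wealth dynamics $-\gamma X_s^\alpha=-\gamma X_t^\alpha-\int_t^s\gamma\alpha_u^T(\theta(V_u)\,du+dW_u)$ with the ergodic BSDE decomposition $Y_s-\lambda s=Y_t-\lambda t-\int_t^s G(V_u,z(V_u))\,du+\int_t^s z(V_u)^TdW_u$ to obtain
\begin{equation*}
e^{-\gamma X_s^\alpha+Y_s-\lambda s}=e^{-\gamma X_t^\alpha+Y_t-\lambda t}\exp\!\Big(\!\int_t^s\!\big(-\gamma\alpha_u^T\theta(V_u)-G(V_u,z(V_u))\big)du+\int_t^s(-\gamma\alpha_u+z(V_u))^T\,dW_u\Big).
\end{equation*}
Since $\alpha\in\mathcal{L}^2_{BMO}[0,s]$ and $z(V_\cdot)$ is bounded by Proposition \ref{Theorem3_EQBSDE}, the martingale $N_u:=\int_0^u(-\gamma\alpha_r+z(V_r))^TdW_r$ is BMO, and its Doléans exponential $\mathcal{E}(N)$ is uniformly integrable and defines a probability measure $\mathbb{Q}^\alpha\sim\mathbb{P}$. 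Extracting $\mathcal{E}(N)_s/\mathcal{E}(N)_t$ from the right-hand side and completing the square identifies the deterministic remainder as $\int_t^s(G^\alpha(V_u,z(V_u))-G(V_u,z(V_u)))\,du$, where
\begin{equation*}
G^\alpha(v,z):=\tfrac{1}{2}\gamma^2\Big|\alpha-\tfrac{z+\theta(v)}{\gamma}\Big|^2-\tfrac{1}{2}|z+\theta(v)|^2+\tfrac{1}{2}|z|^2.
\end{equation*}

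By definition of the distance function, $G(v,z)\le G^\alpha(v,z)$ for every $\alpha\in\Pi$, with pointwise equality precisely when $\alpha=Proj_\Pi((z+\theta(v))/\gamma)$. Taking conditional expectation under $\mathbb{Q}^\alpha$ therefore gives $E_\mathbb{P}[e^{-\gamma X_s^\alpha+Y_s-\lambda s}\mid\mathcal{F}_t]\ge e^{-\gamma X_t^\alpha+Y_t-\lambda t}$; multiplying by $-1$ yields the supermartingale inequality $E_\mathbb{P}[U(X_s^\alpha,s)\mid\mathcal{F}_t]\le U(X_t^\alpha,t)$. For $\alpha^*_t=Proj_\Pi((z(V_t)+\theta(V_t))/\gamma)$, boundedness of $z$ and $\theta$ together with the closedness of $\Pi$ place $\alpha^*\in\mathcal{A}$, and $G^{\alpha^*}=G$ yields the martingale equality (\ref{martingale}). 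The volatility is then identified by a direct Itô expansion of $U(x,t)=-\exp(-\gamma x+Y_t-\lambda t)$ using $d(Y_t-\lambda t)=-G(V_t,z(V_t))\,dt+z(V_t)^TdW_t$, giving $dU(x,t)=U(x,t)(-G+\tfrac{1}{2}|z(V_t)|^2)dt+U(x,t)z(V_t)^TdW_t$, so $a(x,t)=-e^{-\gamma x+y(V_t)-\lambda t}z(V_t)$. The optimal wealth formula follows by substituting $\alpha^*$ into (\ref{wealth-expo}). I expect the only substantive checkpoint to be the BMO property of $N$ (and hence the legitimacy of the Girsanov change and the conditional expectation rewriting); this is delivered cleanly by the boundedness of $z(\cdot)$ from Proposition \ref{Theorem3_EQBSDE} and the admissibility of $\alpha$, so the argument transcribes from the power case without new technical ingredients.
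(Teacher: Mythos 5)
Your proposal is correct and follows essentially the same route as the paper, which omits the proof of Theorem \ref{Theorem4_ForwardUtility} precisely because it is the direct adaptation of the power-case argument (Theorem \ref{Theorem2_ForwardUtility}) that you carry out: combine the additive wealth dynamics with the ergodic BSDE (\ref{EQBSDE2}), change measure via the BMO martingale $\int(-\gamma\alpha+z(V))^{T}dW$, and compare $G^{\alpha}$ with $G$ through the distance function. You also handle the one genuine difference correctly, namely that since $U<0$ the inequality $G\leq G^{\alpha}$ must be flipped by the final multiplication by $-1$ to yield (\ref{supermartingale}), with equality at $\alpha^{\ast}=Proj_{\Pi}\bigl((z(V_{t})+\theta(V_{t}))/\gamma\bigr)$ giving (\ref{martingale}).
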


An axiomatic construction of exponential performance processes was developed
in \cite{Gordan} for semi-martingale markets. These processes have been also
used for the construction of forward indifference prices (see, among others,
\cite{MSZ}, \cite{MZ0}, \cite{MZ-Carmona} and \cite{Henderson-Hobson}) as
well as for the axiomatic construction and characterization of the so-called
maturity-independent entropy risk measures in \cite{ZZ}.

As in the power case, we may prove the following result.

\begin{proposition}
Consider the ill-posed semilinear PDE
\begin{equation}
f_{t}+\mathcal{L}f+G(v,\kappa ^{T}\nabla f)=0,  \label{expo-eqn}
\end{equation}%
$(v,t)\in \mathbb{R}^{d}\times \lbrack 0,\infty )$, with $G\left( \cdot
,\cdot \right) $ as in (\ref{exponential-driver}) (or (\ref{G-driver})) and $%
\mathcal{L}$ as in (\ref{generatorofV}). For initial condition $f(v,0)=y(v)$%
, where $y(\cdot )$ is the function appearing in the Markovian solution $%
\left( y\left( V_{t}\right) ,z\left( V_{t}\right) ,\lambda \right) $ of the
ergodic BSDE (\ref{EQBSDE2}), equation (\ref{expo-eqn}) admits a smooth
solution given by
\begin{equation*}
f(v,t)=y(v)-\lambda t.
\end{equation*}
\end{proposition}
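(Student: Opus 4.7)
The plan is to mirror the argument given for the analogous proposition in the power case (Section 3.1.2), adapting it to the driver $G(\cdot,\cdot)$ in place of $F(\cdot,\cdot)$. Provisionally assume that the function $y(\cdot)$ appearing in the Markovian solution of the ergodic BSDE (\ref{EQBSDE2}) lies in $C^2(\mathbb{R}^d)$. Then It\^o's formula applied to $y(V_t)$ gives
\begin{equation*}
dy(V_t) = \mathcal{L}y(V_t)\,dt + \bigl(\kappa^T \nabla y(V_t)\bigr)^T dW_t,
\end{equation*}
and comparing with (\ref{EQBSDE2}) together with uniqueness of the It\^o decomposition forces $Z_t = z(V_t) = \kappa^T \nabla y(V_t)$ and the pointwise identity
\begin{equation*}
-\lambda + \mathcal{L}y(v) + G(v,\kappa^T \nabla y(v)) = 0.
\end{equation*}
Setting $f(v,t) := y(v) - \lambda t$, so that $f_t = -\lambda$ and $\nabla f = \nabla y$, one sees directly that $f$ solves (\ref{expo-eqn}) with $f(v,0)=y(v)$.

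The substantive step is therefore to justify that $y(\cdot) \in C^2(\mathbb{R}^d)$. Following the power-case blueprint, I would introduce, for each $\rho > 0$, the discounted semilinear elliptic equation
\begin{equation*}
\rho y^\rho = \mathcal{L} y^\rho + G(v,\kappa^T \nabla y^\rho).
\end{equation*}
Classical PDE theory yields a unique bounded classical solution $y^\rho \in C^2(\mathbb{R}^d)$. The same coupling arguments used in the Appendix (which rely only on the dissipativity (\ref{dissipative}) and the Lipschitz behavior of the driver in $v$, both of which $G$ satisfies by Assumption 1.ii exactly as $F$ does) furnish the uniform estimates $|y^\rho(v)|\le K/\rho$ and $|\nabla y^\rho(v)|\le C_v/(C_\eta - C_v)$.

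Fix an arbitrary reference point $v_0 \in \mathbb{R}^d$. The first estimate gives uniform boundedness of $\rho y^\rho(v_0)$, and the gradient bound yields equicontinuity of the family $\{y^\rho(\cdot) - y^\rho(v_0)\}$. A diagonal extraction therefore produces $\rho_n\downarrow 0$ along which $\rho_n y^{\rho_n}(v_0) \to \lambda$ and $y^{\rho_n}(\cdot) - y^{\rho_n}(v_0) \to y(\cdot)$ uniformly on compact sets. Since both $\rho_n y^{\rho_n}$ and $\nabla y^{\rho_n}$ are locally uniformly bounded, rearranging the elliptic equation shows that $\mathrm{Trace}(\kappa\kappa^T\nabla^2 y^{\rho_n})$ is locally bounded, which via standard Schauder/elliptic interior estimates yields local H\"older control of $\nabla y^{\rho_n}$ and then passes $C^2$ regularity to the limit $y$ (cf.\ Theorem 3.3 in \cite{FMc}).

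The main obstacle I anticipate is verifying that the a priori estimates $|y^\rho|\le K/\rho$ and $|\nabla y^\rho|\le C_v/(C_\eta-C_v)$ used in the power case carry over uniformly to the driver $G$. The quadratic structure of $G$ in $z$ differs from $F$ only through the sign of the linear-in-$\theta$ cross term and the substitution $\delta/(1-\delta)\leftrightarrow -1$, $\delta(1-\delta)\leftrightarrow \gamma^2$, so the Lipschitz-in-$v$ bound of the form $|G(v,z) - G(\bar v,z)| \le C_v(1+|z|)|v-\bar v|$ holds with a constant $C_v$ that can be made small relative to $C_\eta$ under the ``large enough'' requirement following (\ref{dissipative}). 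Once this is in hand, the a priori estimates and diagonal extraction proceed verbatim, and the proof is complete.
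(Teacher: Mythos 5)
Your proposal is correct and follows essentially the same route as the paper, which proves the power-case version (Section 3.1.2) and then states the exponential case with the remark ``as in the power case,'' exactly the adaptation you carry out. The obstacle you anticipate is in fact already settled in the paper's Appendix, where the structural estimates (\ref{driver0})--(\ref{driver2}) are verified simultaneously for both drivers $H=F,G$, so the a priori bounds $|y^{\rho}|\leq K/\rho$ and $|\nabla y^{\rho}|\leq C_{v}/(C_{\eta}-C_{v})$ carry over verbatim.
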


\subsection{Representation via infinite horizon BSDE}

\label{SubSectionExpUtility}

In analogy to the results of section \ref{SubSectionPowerUtility}, we derive
an alternative representation of the exponential forward performance process
using an infinite horizon BSDE. The proof follows along similar arguments
and is, thus, omitted.

\begin{proposition}
Assume that the market price of risk vector $\theta (v)$ satisfies
Assumption 1.ii and let the set $\Pi $ be as in (\ref{admissibility}). Let $%
\rho >0$. Then, the infinite horizon BSDE
\begin{equation}
dY_{t}^{\rho }=\left( -G(V_{t},Z_{t}^{\rho })+\rho Y_{t}^{\rho }\right)
dt+\left( Z_{t}^{\rho }\right) ^{T}dW_{t},  \label{IQBSDE2}
\end{equation}%
$t\geq 0,$ with the driver $G(\cdot ,\cdot )$ as in (\ref{EQBSDE2}), admits
a unique Markovian solution. Specifically, for each $\rho >0,$ there exist
unique\textbf{\ }functions $y^{\rho }:\mathbb{R}^{d}\rightarrow \mathbb{R}$
and $z^{\rho }:\mathbb{R}^{d}\rightarrow \mathbb{R}^{d}$ such that $%
(Y_{t}^{\rho },Z_{t}^{\rho })=(y^{\rho }(V_{t}),z^{\rho }(V_{t})),$ with $%
|y^{\rho }\left( \cdot \right) |\leq \frac{K}{\rho }$ and $|z^{\rho }\left(
\cdot \right) |\leq \frac{C_{v}}{C_{\eta }-C_{v}},$ where $C_{\eta }$ as in (%
\ref{dissipative}), and $C_{v}$, $K$ given in (\ref{driver0}) and (\ref%
{driver2}), respectively.
\end{proposition}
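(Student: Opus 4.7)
The plan is to follow the same strategy used for the analogous statement in the power case, whose proof appears in the Appendix as an intermediate step in the proof of Proposition \ref{Theorem1_EQBSDE}. First, I would verify that the driver $G$ in (\ref{G-driver}) satisfies the two structural conditions (\ref{driver0}) and (\ref{driver2}) required by that abstract framework. After the algebraic rewriting
\[
G(v,z) = \tfrac{1}{2}\gamma^2\, dist^2\!\left(\Pi,\tfrac{z+\theta(v)}{\gamma}\right) - z^T\theta(v) - \tfrac{1}{2}|\theta(v)|^2,
\]
this reduces to: (i) the uniform bound $|G(v,0)|\leq K$, which follows from the boundedness of $\theta$ (Assumption 1.ii) together with $dist(\Pi,\theta(v)/\gamma)\leq |\theta(v)/\gamma| + dist(\Pi,0)$ for any nonempty closed convex $\Pi$; and (ii) the Lipschitz-in-$v$ estimate $|G(v,z)-G(\bar v,z)|\leq C_v(1+|z|)|v-\bar v|$, with $C_v$ proportional to the bound and Lipschitz constant of $\theta$ (and to $\gamma$), obtained from the $1$-Lipschitz property of the projection and of $\theta$.

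Next, I would construct the solution by approximating with finite horizon quadratic BSDE. For each $T>0$, the equation on $[0,T]$ with zero terminal condition,
\[
Y_t^{\rho,T} = \int_t^T \!\left(G(V_s,Z_s^{\rho,T})-\rho Y_s^{\rho,T}\right)ds - \int_t^T (Z_s^{\rho,T})^T dW_s,
\]
admits a unique bounded solution by \cite{Kobylanski}. The comparison theorem combined with $|G(v,0)|\leq K$ yields $|Y_t^{\rho,T}|\leq K/\rho$ uniformly in $T$. The crucial bound $|Z_t^{\rho,T}|\leq C_v/(C_\eta-C_v)$ is obtained via the coupling/gradient argument from the Appendix: for two starting values $v,\bar v$ of the factor process, one applies It\^{o} to $|Y_t^{\rho,T,v}-Y_t^{\rho,T,\bar v}|^2$, linearizes using (\ref{driver0}), and closes the estimate via the exponential ergodicity (\ref{exponentialcondition}). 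The ``large enough'' hypothesis on $C_\eta$, namely $C_\eta>C_v$, is precisely what makes this argument succeed.

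Once $Z^{\rho,T}$ is bounded uniformly in $T$, $G$ acts effectively as a Lipschitz driver on the relevant range, so the limit $T\uparrow\infty$ can be taken using the stability theory for infinite horizon Lipschitz BSDE of \cite{Briand}, producing the pair $(Y_t^\rho,Z_t^\rho)$ with the stated bounds. Its representation as $(y^\rho(V_t),z^\rho(V_t))$ follows from the Markov property of $V$ and the time-homogeneity of the equation. For uniqueness of Markovian solutions satisfying these bounds, the difference solves a linear BSDE with bounded coefficients, Girsanov's change of measure absorbs the $Z$-term, and the strictly positive discount rate $\rho$ forces the difference to vanish. The main obstacle would \emph{a priori} be the quadratic growth of $G$ in $z$, but the a priori bound $|Z^\rho|\leq C_v/(C_\eta-C_v)$ confines the problem to the Lipschitz regime treated in the Appendix, so no genuinely new ideas are required beyond those already used for the power case.
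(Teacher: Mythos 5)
Your overall architecture matches the paper's (finite-horizon approximation with zero terminal condition, the bound $|Y^{\rho,T}|\leq K/\rho$ from $|G(v,0)|\leq K$, a coupling argument driven by the exponential ergodicity (\ref{exponentialcondition}) to bound $Z$, Markovianity from time-homogeneity, and uniqueness via linearization, Girsanov and the positive discount $\rho$), and your verification of (\ref{driver0}) and (\ref{driver2}) for $G$ is correct. However, there is a genuine gap in the step you rely on most heavily: the derivation of the bound $|Z^{\rho}|\leq C_v/(C_\eta-C_v)$ is circular as you have written it. The coupling/Girsanov argument requires linearizing the difference $G(V^{\bar v},Z^{\rho,v})-G(V^{\bar v},Z^{\rho,\bar v})$ by a Girsanov kernel $m_t$, and the estimate for the remaining term $\Delta H_t=G(V^{v},Z^{\rho,v})-G(V^{\bar v},Z^{\rho,v})$ via (\ref{driver0}) produces the factor $C_v(1+|Z^{\rho,v}_t|)$. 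To extract from this the \emph{specific} Lipschitz constant $C_v(1+\tfrac{C_v}{C_\eta-C_v})=\tfrac{C_\eta C_v}{C_\eta-C_v}$ for $y^{\rho}$ --- which is exactly what, combined with $Z^{\rho}=\kappa^{T}\nabla y^{\rho}$ and $|\kappa|=1$, yields $|Z^{\rho}|\leq C_v/(C_\eta-C_v)$ --- you must already know that $|Z^{\rho}|\leq C_v/(C_\eta-C_v)$. Likewise, the boundedness of the kernel $m_t$ itself requires either a Lipschitz-in-$z$ driver or an a priori bound on $Z$. Your closing remark that the quadratic growth is harmless ``because of the a priori bound on $Z^{\rho}$'' assumes the conclusion.

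The paper breaks this circle with a device your proposal omits: the truncation $q$ of (\ref{truncation}). One first replaces $G(v,z)$ by $G(v,q(z))$, which satisfies the global Lipschitz estimates (\ref{driver3})--(\ref{driver4}); for the \emph{truncated} equation the Girsanov kernel is bounded by construction and $\Delta H_t$ is controlled by $\tfrac{C_\eta C_v}{C_\eta-C_v}|V^{v}_t-V^{\bar v}_t|$ unconditionally, so the Lipschitz bound (\ref{Lip}) and hence $|Z^{\rho}|\leq C_v/(C_\eta-C_v)$ follow honestly. Only then does one observe that $q(Z^{\rho})=Z^{\rho}$, so the truncated solution solves the original equation (\ref{IQBSDE2}). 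This also means the finite-horizon approximants are Lipschitz BSDE handled by \cite{Briand} rather than quadratic BSDE requiring \cite{Kobylanski}. A secondary inaccuracy: the paper's coupling step is not an It\^{o}/energy estimate on $|Y^{v}-Y^{\bar v}|^{2}$ (which would face the same quadratic obstruction when absorbing the cross term in $\Delta Z$), but a conditional-expectation representation of $\Delta Y_t$ under the changed measure, discounted at rate $\rho$, with the terminal term killed by $|\Delta Y_s|\leq 2K/\rho$ and $e^{-\rho(s-t)}\to 0$.
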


\begin{theorem}
Let $\left( y^{\rho }\left( V_{t}\right) ,z^{\rho }\left( V_{t}\right)
\right) ,$ $t\geq 0,$ be the unique Markovian solution to the infinite
horizon BSDE (\ref{IQBSDE2}). Then,

i) the process $U^{\rho }\left( x,t\right) ,$ $\left( x,t\right) \in \mathbb{%
R}\times \left[ 0,\infty \right) ,$ given by%
\begin{equation}
U^{\rho }(x,t)=-e^{-\gamma x+y^{\rho }(V_{t})-\int_{0}^{t}\rho y^{\rho
}(V_{u})du}  \label{ExponentialForwardUtility2}
\end{equation}%
is an exponential forward performance process with volatility%
\begin{equation*}
a^{\rho }\left( x,t\right) =-e^{-\gamma x+y^{\rho }(V_{t})-\int_{0}^{t}\rho
y^{\rho }(V_{u})du}z^{\rho }(V_{t}).
\end{equation*}

ii) The optimal portfolios $\alpha _{t}^{\ast ,\rho }$ and optimal wealth
process $X_{t}^{\ast ,\rho }$ (cf. (\ref{wealth-expo})), $t\geq 0,$ are
given, respectively, as in (\ref{ExponentialOptimalStrategy}) with $z\left(
V_{t}\right) $ replaced by $z^{\rho }\left( V_{t}\right) $.
\end{theorem}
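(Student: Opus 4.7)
The plan is to mirror the proof of Theorem \ref{Theorem4_ForwardUtility} (the ergodic exponential case), with the sole structural change being that the deterministic drift $-\lambda t$ in the exponent is replaced by the path-dependent term $-\int_0^t \rho y^{\rho}(V_u)du$. This exactly parallels the relationship between Theorems \ref{Theorem2_ForwardUtility} and \ref{Theorem21_ForwardUtility} in the power case, so no genuinely new ingredient is required. I would first check the baseline requirements of Definition \ref{def}: $\mathbb{F}$-progressive measurability is immediate since $V$ is adapted and $y^{\rho}$ is deterministic (with $|y^{\rho}|\leq K/\rho$, the integral term is well defined), while strict monotonicity and concavity in $x$ of the map $x\mapsto -e^{-\gamma x + \mathrm{const}}$ are obvious.

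The central step is to write the integrated form of the infinite horizon BSDE (\ref{IQBSDE2}) as
\begin{equation*}
\left(Y_s^{\rho}-\int_0^s\rho Y_u^{\rho}du\right)-\left(Y_t^{\rho}-\int_0^t\rho Y_u^{\rho}du\right)=-\int_t^s G(V_u,Z_u^{\rho})du+\int_t^s (Z_u^{\rho})^T dW_u,
\end{equation*}
and combine it with the wealth dynamics (\ref{wealth-expo}). Since $U^{\rho}(X_t^{\alpha},t)$ is $\mathcal{F}_t$-measurable, one obtains
\begin{equation*}
\frac{U^{\rho}(X_s^{\alpha},s)}{U^{\rho}(X_t^{\alpha},t)}=\exp\left(-\int_t^s\bigl(\gamma\alpha_u^T\theta(V_u)+G(V_u,Z_u^{\rho})\bigr)du+\int_t^s(-\gamma\alpha_u+Z_u^{\rho})^T dW_u\right),
\end{equation*}
an identity structurally identical to the one arising in the ergodic exponential case.

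The third step is a Girsanov change of measure $\mathbb{Q}^{\alpha}$ with density $\mathcal{E}\bigl(\int_0^{\cdot}(-\gamma\alpha_u+Z_u^{\rho})^T dW_u\bigr)$. This is a uniformly integrable martingale because $\alpha\in\mathcal{L}_{BMO}^2$ by admissibility and $z^{\rho}(\cdot)$ is bounded by Proposition 7, so the integrand lies in $\mathcal{L}_{BMO}^2$ and the Doléans exponential belongs to Doob's class $\mathcal{D}$. Completing the square inside $G$ gives, for $\alpha\in\Pi$, the pointwise inequality
\begin{equation*}
G^{\alpha}(v,z):=\tfrac{1}{2}\gamma^2\left|\alpha-\tfrac{z+\theta(v)}{\gamma}\right|^2-\tfrac{1}{2}|z+\theta(v)|^2+\tfrac{1}{2}|z|^2\geq G(v,z),
\end{equation*}
with equality iff $\alpha=\mathrm{Proj}_{\Pi}((z+\theta(v))/\gamma)$. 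Combined with the negativity of $U^{\rho}$, this produces the supermartingale condition (\ref{supermartingale}) for any $\alpha\in\mathcal{A}$ and the martingale equality (\ref{martingale}) for the feedback policy (\ref{ExponentialOptimalStrategy}) with $z$ replaced by $z^{\rho}$, yielding part ii) simultaneously. The volatility assertion in part i) then follows by a direct Itô expansion of $U^{\rho}(x,t)$ using the above dynamics for $Y_t^{\rho}-\int_0^t\rho Y_u^{\rho}du$; the coefficient of $dW_t$ is $U^{\rho}(x,t)z^{\rho}(V_t)$, as claimed.

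The main (and only nontrivial) obstacle is justifying the Girsanov step, which rests on the BMO property of the driving integrand $-\gamma\alpha+z^{\rho}(V)$. Admissibility handles the $\alpha$ part, while the uniform bound $|z^{\rho}|\leq C_v/(C_{\eta}-C_v)$ from Proposition 7 handles the second, so standard BMO-martingale theory delivers the required uniform integrability. The additional absolutely continuous drift $-\int_0^t\rho y^{\rho}(V_u)du$ is harmless for this argument, since $y^{\rho}$ is bounded and the drift does not interact with the quadratic variation.
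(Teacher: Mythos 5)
Your proof is correct and is exactly the argument the paper has in mind: the authors omit this proof, stating that it "follows along similar arguments" to the power case, and your write-up is precisely the adaptation of the proof of Theorem \ref{Theorem2_ForwardUtility} — the integrated BSDE dynamics for $Y_t^{\rho}-\int_0^t\rho Y_u^{\rho}du$, the Girsanov change via the BMO-martingale $\int(-\gamma\alpha+Z^{\rho})^TdW$, and the completion of the square $G^{\alpha}\geq G$ with equality at the projection (with the inequality direction correctly flipped by the negativity of $U^{\rho}$).
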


In line with Proposition \ref{PropositionPowerUtility}, we have the
following connection between the ergodic and infinite horizon
representations for exponential forward performance processes.

\begin{proposition}
\label{PropositionExpUtility} Let $U^{\rho }(x,t)$ and $U(x,t)$ be the
exponential forward performance processes (\ref{ExponentialForwardUtility})
and (\ref{ExponentialForwardUtility2}), respectively. Then, for any
reference point $v_{0}\in \mathbb{R}^{d}$, there exists a subsequence $\rho
_{n}\downarrow 0$ (depending on $v_{0}$) such that, for $\left( x,t\right)
\in \mathbb{R}\times \left[ 0,\infty \right) ,$
\begin{equation*}
\lim_{\rho _{n}\downarrow 0}\frac{U^{\rho _{n}}(x,t)e^{-y^{\rho _{n}}\left(
v_{0}\right) }}{U(x,t)}=1.
\end{equation*}%
Moreover, for $t\geq 0,$ the associated optimal portfolios satisfy
\begin{equation*}
\lim_{\rho _{n}\downarrow 0}E_{\mathbb{P}}\int_{0}^{t}\left \vert \alpha
_{u}^{\ast ,\rho _{n}}-\alpha _{u}^{\ast }\right \vert ^{2}du=0.
\end{equation*}
\end{proposition}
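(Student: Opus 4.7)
The plan is to mirror the argument in Proposition \ref{PropositionPowerUtility}, exploiting the structural parallel between the power and exponential cases. First, I would use the explicit representations (\ref{ExponentialForwardUtility}) and (\ref{ExponentialForwardUtility2}) to compute the ratio. The $-\gamma x$ terms and the leading minus signs cancel, yielding
\begin{equation*}
\frac{U^{\rho}(x,t)\,e^{-y^{\rho}(v_{0})}}{U(x,t)} = \exp\!\left( \bigl(y^{\rho}(V_{t})-y^{\rho}(v_{0})-y(V_{t})\bigr) - \int_{0}^{t}\rho\bigl(y^{\rho}(V_{u})-y^{\rho}(v_{0})\bigr)du - \bigl(\rho y^{\rho}(v_{0})-\lambda\bigr)t \right).
\end{equation*}
This has exactly the same three-term structure as in the power case, so the analysis reduces to controlling each term.

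Next, I would invoke the Appendix limits (\ref{**}) and (\ref{*}), which apply equally to the driver $G$ (the required structural properties, namely Lipschitz continuity with constant proportional to $C_v(1+|z|)$ and the bounds $|y^\rho|\le K/\rho$, $|z^\rho|\le C_v/(C_\eta-C_v)$, are stated for $G$ in Propositions 9 and 11). These yield a subsequence $\rho_{n}\downarrow 0$, depending on $v_{0}$, along which $y^{\rho_{n}}(v)-y^{\rho_{n}}(v_{0})\to y(v)$ uniformly on compact sets of $\mathbb{R}^{d}$, $\rho_{n}y^{\rho_{n}}(v_{0})\to\lambda$, and $\rho_{n}(y^{\rho_{n}}(v)-y^{\rho_{n}}(v_{0}))\to 0$ pointwise. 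For the first term I use the pathwise continuity of $V$ which keeps the trajectory in a compact set on $[0,t]$; for the integral term I combine the pointwise limit with the uniform bound $|\rho y^{\rho}|\le K$ and dominated convergence.

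For the convergence of optimal portfolios, the representations $\alpha^{*,\rho}_{t}=\mathrm{Proj}_{\Pi}\!\left(\frac{z^{\rho}(V_{t})+\theta(V_{t})}{\gamma}\right)$ and $\alpha^{*}_{t}=\mathrm{Proj}_{\Pi}\!\left(\frac{z(V_{t})+\theta(V_{t})}{\gamma}\right)$ together with the $1$-Lipschitz property of $\mathrm{Proj}_{\Pi}$ reduce the claim to
\begin{equation*}
\lim_{\rho_{n}\downarrow 0}E_{\mathbb{P}}\!\int_{0}^{t}\bigl|z^{\rho_{n}}(V_{s})-z(V_{s})\bigr|^{2}ds = 0,
\end{equation*}
which is the exponential analogue of (\ref{Z-claim}) and is established by the same arguments given in the Appendix for the power case, since only the generic properties of $G$ (Lipschitz in $v$, bounded $z$-gradient of the ergodic solution) are used.

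The main obstacle, and really the only nontrivial verification, is confirming that the Appendix convergence results and the $L^{2}$ convergence of $z^{\rho_{n}}$ to $z$ go through for $G$ verbatim. Since $G$ satisfies the same Lipschitz-in-$v$, quadratic-in-$z$ decomposition, the same bounds $|y^{\rho}|\le K/\rho$ and $|z^{\rho}|\le C_v/(C_\eta-C_v)$, and the Markovian solution $(y,z,\lambda)$ exists by Proposition \ref{Theorem3_EQBSDE}, the ergodic-limit machinery transfers without modification. Once this is noted, the proposition is obtained by assembling the three limits above in the exponent and taking $\rho_{n}\downarrow 0$.
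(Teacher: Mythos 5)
Your proposal is correct and is essentially the paper's intended argument: the paper omits the proof of this proposition, referring to the power case (Proposition \ref{PropositionPowerUtility}), whose proof you reproduce faithfully — the same three-term decomposition of the exponent, the Appendix limits (\ref{**}) and (\ref{*}) (which are proved for the generic driver $H=F,G$ satisfying (\ref{driver0})--(\ref{driver2})), and the Lipschitz projection plus the $G$-analogue of (\ref{Z-claim}) for the portfolio convergence.
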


\subsection{Connection with the classical exponential expected utility for
long horizons}

\label{SubSectionTraditionalExpUtility}

As in section \ref{SubSectionTraditionalPowerUtility}, we discuss the
relationship between the exponential forward performance process $U\left(
x,t\right) $ and its traditional finite horizon expected utility analogue
with the latter incorporating a terminal random endowment.

To this end, let $\rho >0$ and $\left[ 0,T\right] $ be an arbitrary trading
horizon. Consider a family of maximal expected utility problems
\begin{equation}
u^{\rho }\left( x,t;T\right) =ess\sup_{\alpha \in \mathcal{A}_{{\left[ t,T%
\right] }}}E_{\mathbb{P}}\left( -e^{-\gamma (X_{T}^{\alpha }+\xi _{T})}|%
\mathcal{F}_{t},X_{t}^{\alpha }=x\right) ,  \label{ExponentialUtility}
\end{equation}%
for $\left( x,t\right) \in \mathbb{R}\times \left[ 0,T\right] $ and the
wealth process $X_{s}^{\alpha }$, $s\in \lbrack t,T]$, solving (\ref%
{wealth-expo}). The payoff $\xi _{T}$ is defined as $\xi _{T}=\frac{1}{%
\gamma }\int_{0}^{T}\rho Y_{t}^{\rho ,T}dt,$ where $Y_{t}^{\rho ,T}$ is the
solution of the finite horizon quadratic BSDE
\begin{equation*}
Y_{t}^{\rho ,T}=\int_{t}^{T}\left( G(V_{s},Z_{s}^{\rho ,T})-\rho Y_{s}^{\rho
,T}\right) ds-\int_{t}^{T}\left( Z_{s}^{\rho ,T}\right) ^{T}dW_{s},
\end{equation*}%
with the driver $G(\cdot ,\cdot )$ given in (\ref{G-driver}). The optimal
portfolios are denoted by $\alpha _{s}^{\ast ,\rho ,T}$ for $s\in \lbrack
t,T]$. We have the following convergence result.

\begin{proposition}
\label{PropositionExpUtility2} i) Let $u^{\rho }(x,t;T)$ and $U^{\rho }(x,t)$
be given in (\ref{ExponentialUtility}) and (\ref{ExponentialForwardUtility2}%
), respectively. Then, for each $\rho >0,$ and $\left( x,t\right) \in
\mathbb{R}\times \left[ 0,\infty \right) ,$
\begin{equation*}
\lim_{T\uparrow \infty }\frac{u^{\rho }(x,t;T)}{U^{\rho }(x,t)}=1,
\end{equation*}%
and the optimal portfolios satisfy, for $s\in \left[ t,T\right] ,$
\begin{equation*}
\lim_{T\uparrow \infty }E_{\mathbb{P}}\int_{t}^{s}\left \vert \alpha
_{u}^{\ast ,\rho ,T}-\alpha _{u}^{\ast ,\rho }\right \vert ^{2}du=0.
\end{equation*}

ii) Let $U\left( x,t\right) $ be the exponential forward process as in (\ref%
{ExponentialForwardUtility}). Then, for any reference point $v_{0}\in
\mathbb{R}^{d}$, there exists a subsequence $\rho _{n}\downarrow 0$
(depending on $v_{0})$ such that, for $\left( x,t\right) \in \mathbb{R}%
\times \left[ 0,\infty \right) ,$
\begin{equation*}
\lim_{\rho _{n}\downarrow 0}\lim_{T\uparrow \infty }\frac{u^{\rho
_{n}}(x,t;T)e^{-y^{\rho _{n}}}\left( v_{0}\right) }{U(x,t)}=1,
\end{equation*}%
and the optimal portfolios satisfy, for $s\in \left[ t,T\right] ,$
\begin{equation*}
\lim_{\rho _{n}\downarrow 0}\lim_{T\uparrow \infty }E_{\mathbb{P}%
}\int_{t}^{s}\left \vert \alpha _{u}^{\ast ,\rho _{n},T}-\alpha _{u}^{\ast
}\right \vert ^{2}du=0.
\end{equation*}
\end{proposition}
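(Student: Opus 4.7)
The plan is to follow the proof of Proposition~\ref{PropositionPowerUtility2} mutatis mutandis, with the power driver $F$ replaced by $G$ and the power-of-wealth prefactor replaced by the exponential one. For part~(i), I would first invoke Theorem 3.3 of \cite{Briand0} to ensure $|Y^{\rho,T}_t| \le K/\rho$ for the finite-horizon BSDE defining $\xi_T$, so that the endowment $\gamma\xi_T = \int_0^T \rho Y^{\rho,T}_s\,ds$ is bounded for each fixed $T$. The exponential-utility argument of \cite{HU0} (see also \cite{Morlais}) then characterises the value function as $u^\rho(x,t;T) = -e^{-\gamma x + Y_t}$, where $(Y,Z)$ is the \emph{unique} solution of the quadratic BSDE with driver $G$ and terminal $Y_T = -\gamma \xi_T$, and the optimal control has feedback form $\alpha^{\ast,\rho,T}_s = Proj_\Pi\bigl((Z_s+\theta(V_s))/\gamma\bigr)$.

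The crucial identification is that the pair $\bigl(Y^{\rho,T}_t - \int_0^t\rho Y^{\rho,T}_s\,ds,\; Z^{\rho,T}_t\bigr)$ also satisfies this BSDE: at $t=T$ its value is $-\int_0^T \rho Y^{\rho,T}_s\,ds = -\gamma\xi_T$, and its It\^o differential leaves the drift $-G(V_t,Z^{\rho,T}_t)$ after the two $\rho Y^{\rho,T}$ contributions cancel. Uniqueness of the quadratic BSDE then yields
\[
u^\rho(x,t;T) \;=\; -\exp\!\Bigl(-\gamma x + Y^{\rho,T}_t - \int_0^t \rho Y^{\rho,T}_s\,ds\Bigr),
\]
whence $u^\rho(x,t;T)/U^\rho(x,t) = \exp\!\bigl((Y^{\rho,T}_t - Y^\rho_t) - \int_0^t \rho(Y^{\rho,T}_s - Y^\rho_s)\,ds\bigr)$. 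Sending $T\uparrow\infty$, the pointwise convergence $Y^{\rho,T}_t \to Y^\rho_t$ from (\ref{convergence0}) in the Appendix, together with the uniform bound $|Y^{\rho,T}_s|\le K/\rho$ which enables dominated convergence on $[0,t]$, yields the ratio limit $1$. The portfolio convergence reduces, via Lipschitz continuity of $Proj_\Pi$, to the $\mathcal{L}_\rho^2$-convergence $Z^{\rho,T}\to Z^\rho$ supplied by (\ref{convergence1}).

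Part~(ii) is obtained by composing two limits. For fixed $\rho=\rho_n$, part~(i) gives $u^{\rho_n}(x,t;T)/U^{\rho_n}(x,t)\to 1$ as $T\uparrow\infty$ and $L^2$-convergence of $\alpha^{\ast,\rho_n,T}$ to $\alpha^{\ast,\rho_n}$. By Proposition~\ref{PropositionExpUtility}, there exists a subsequence $\rho_n\downarrow 0$ (depending on $v_0$) along which $U^{\rho_n}(x,t)e^{-y^{\rho_n}(v_0)}/U(x,t)\to 1$ and $\alpha^{\ast,\rho_n}\to \alpha^{\ast}$ in $L^2$; multiplying the two ratios (and chaining the two $L^2$ limits) yields the iterated claims. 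The principal obstacle is justifying the HU-type BSDE characterisation of $u^\rho(x,t;T)$ when the constraint set $\Pi$ is possibly unbounded, and verifying that the candidate optimiser $\alpha^{\ast,\rho,T}$ belongs to $\mathcal{A}_{[t,T]}$ (in particular, satisfies the BMO condition); both are resolved by the uniform bound on $Z^{\rho,T}$ established in the existence proof for (\ref{IQBSDE2}) combined with the boundedness of $\theta(\cdot)$.
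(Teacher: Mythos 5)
Your proof is correct and follows essentially the same route the paper intends: it transplants the argument of Proposition \ref{PropositionPowerUtility2} to the exponential setting (driver $G$, value function $-e^{-\gamma x+Y_t}$, terminal condition $Y_T=-\gamma\xi_T$), identifies $\bigl(Y^{\rho,T}_t-\int_0^t\rho Y^{\rho,T}_s\,ds,\,Z^{\rho,T}_t\bigr)$ as the unique solution of that quadratic BSDE, and then passes to the limits via (\ref{convergence0}), (\ref{convergence1}) and Proposition \ref{PropositionExpUtility}. The sign bookkeeping for $\xi_T=\frac{1}{\gamma}\int_0^T\rho Y^{\rho,T}_s\,ds$ and the admissibility/BMO remarks are handled correctly, so nothing further is needed.
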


\section{Logarithmic case}

\label{SectionLogUtility}

We conclude with logarithmic forward performance processes in factor-form,
namely, of the form
\begin{equation}
U\left( x,t\right) =\ln x+f\left( V_{t},t\right) ,  \label{logarithmic}
\end{equation}%
for a function $f$ to be determined. The ``additive" format is more
appropriate for the logarithmic class, given the ``myopic" character of the
latter in the classical setting. Then, (\ref{logarithmic}) and (\ref{SPDE})\
yield that $f:\left( v,t\right) \in \mathbb{R}^{d}\times \left[ 0,\infty
\right) $ must satisfy the ill-posed linear equation
\begin{equation}
f_{t}+\frac{1}{2}Trace\left( \kappa \kappa ^{T}\nabla ^{2}f\right) +\eta
(v)^{T}\nabla f+\tilde{F}(v)=0,  \label{logarithmic-eqn}
\end{equation}%
with
\begin{equation*}
\tilde{F}(v)=-\frac{1}{2}dist^{2}\left \{ \Pi ,\theta (v)\right \} +\frac{1}{%
2}|\theta (v)|^{2}.
\end{equation*}%
The results that follow are similar to the ones in section 3 and, for this,
they are stated in an abbreviated manner. To this end, the associated
ergodic BSDE\ is given by
\begin{equation}
dY_{t}=(-\tilde{F}(V_{t})+\lambda )dt+Z_{t}^{T}dW_{t}  \label{log-EBSDE}
\end{equation}%
with the driver
\begin{equation*}
\tilde{F}(V_{t})=-\frac{1}{2}dist^{2}\left \{ \Pi ,\theta (V_{t})\right \} +%
\frac{1}{2}|\theta (V_{t})|^{2},
\end{equation*}%
as it is easily guessed by the form of the operator appearing in the
equation (\ref{logarithmic-eqn}) above. Working as in the proof of
Proposition \ref{Theorem1_EQBSDE} we deduce that (\ref{log-EBSDE}) has a
unique Markovian solution, say $\left( Y_{t},Z_{t},\lambda \right) =\left(
y\left( V_{t}\right) ,z\left( V_{t}\right) ,\lambda \right) ,$ for some
functions $y\left( \cdot \right) $ and $z\left( \cdot \right) $ with similar
properties to the ones therein.

We verify that the process
\begin{equation}
U(x,t):=\ln x+y\left( V_{t}\right) -\lambda t  \label{log-forward}
\end{equation}%
is a logarithmic forward performance process in factor-form. The SPDE (\ref%
{SPDE}) then yields volatility $a\left( x,t\right) =z\left( V_{t}\right) .$
Moreover, the optimal policy and the wealth it generates are given,
respectively, by $\pi _{t}^{\ast }=Proj_{\Pi }\theta (V_{t}),$ and \
\begin{equation*}
X_{t}^{\ast }=X_{0}\mathcal{E}\left( \int_{0}^{\cdot }(Proj_{\Pi }\theta
(V_{s}))^{T}(\theta (V_{s})ds+dW_{s})\right) _{t}.
\end{equation*}%
The constant $\lambda $ has the interpretation
\begin{equation*}
\lambda =\sup_{{\pi }\in \mathcal{A}}\limsup_{T\uparrow \infty }\frac{1}{T}%
E_{\mathbb{P}}\left( \ln X_{T}^{\pi }\right) .
\end{equation*}%
A by-product of this result is that the ill-posed linear PDE\ (\ref%
{logarithmic-eqn}) has a smooth solution for initial data $f\left(
v,0\right) =y\left( v\right) ,$ given by $f\left( v,t\right) =y\left(
v\right) -\lambda t.$

There is also a connection with infinite horizon BSDE. Indeed, we easily
deduce that the infinite horizon BSDE
\begin{equation}
dY_{t}^{\rho }=\left( -\tilde{F}(V_{t})+\rho Y_{t}^{\rho }\right) dt+\left(
Z_{t}^{\rho }\right) ^{T}dW_{t},  \label{log-BSDE}
\end{equation}%
has a unique Markovian solution $\left( y^{\rho }\left( V_{t}\right)
,z^{\rho }\left( V\right) \right) ,$ and, in turn, the process $U^{\rho
}(x,t)$, $(x,t)\in \mathbb{R}_{+}\times \lbrack 0,\infty )$, defined as
\begin{equation}
U^{\rho }(x,t):=\ln x+y^{\rho }\left( V_{t}\right) -\int_{0}^{t}\rho y^{\rho
}\left( V_{s}\right) ds,  \label{LogForwardUtility2}
\end{equation}%
is a path-dependent logarithmic forward performance process.

The process $U(x,t)$ and $U^{\rho }(x,t)$ in (\ref{log-forward}) and (\ref%
{LogForwardUtility2}) are connected in a similar way as their power
analogues in Proposition \ref{PropositionPowerUtility}. Namely, for an
arbitrary reference point $v_{0}\in \mathbb{R}^{d}$, there exists a
subsequence $\rho _{n}\downarrow 0$ (depending on $v_{0})$ such that, for $%
(x,t)\in \mathbb{R}_{+}\times \lbrack 0,\infty )$,
\begin{equation*}
\lim_{\rho _{n}\downarrow 0}\left( U^{\rho _{n}}(x,t)-y^{\rho _{n}}\left(
v_{0}\right) -U(x,t)\right) =0.
\end{equation*}

Finally, in order to connect $U(x,t)$ and $U^{\rho }(x,t)$ with their
classical counterparts, we introduce the logarithmic expected utility
problem
\begin{equation}
u^{\rho }(x,t;T)=ess\sup_{\pi \in \mathcal{A}_{\left[ t,T\right] }}{E}_{%
\mathbb{P}}\left( \ln X_{T}^{\pi }+\xi _{T}|\mathcal{F}_{t},X_{t}^{\pi
}=x\right) ,  \label{LogUtility}
\end{equation}%
where $\xi _{T}=-\int_{0}^{T}\rho Y_{u}^{\rho ,T}du$ and $Y_{t}^{\rho ,T},$ $%
t\in \left[ 0,T\right] ,$ is the unique solution of the BSDE on $[0,T],$
\begin{equation*}
Y_{t}^{\rho ,T}=\int_{t}^{T}\left( \tilde{F}(V_{u})-\rho Y_{u}^{\rho
,T}\right) du-\int_{t}^{T}\left( Z_{u}^{\rho ,T}\right) ^{T}dW_{u}.
\end{equation*}%
Using similar arguments to the ones in Proposition \ref%
{PropositionPowerUtility2}, we deduce that for any reference point $v_{0}\in
\mathbb{R}^{d}$, there exists a subsequence $\rho _{n}\downarrow 0$
(depending on $v_{0})$ such that, for $(x,t)\in \mathbb{R}_{+}\times \lbrack
0,\infty )$,
\begin{equation*}
\lim_{\rho _{n}\downarrow 0}\lim_{T\uparrow \infty }\left( u^{\rho
_{n}}(x,t;T)-y^{\rho _{n}}\left( v_{0}\right) -U(x,t)\right) =0.
\end{equation*}

\label{SectionRemark}

\appendix

\section{Appendix: Solving ergodic and infinite horizon BSDE}

\label{SectionEQBSDE}

We present background results for Markovian solutions of the ergodic BSDE (%
\ref{EQBSDE1}) and (\ref{EQBSDE2}). We also obtain existence and uniqueness
of bounded Markovian solutions to the infinite horizon BSDE (\ref{IQBSDE1})
and (\ref{IQBSDE2}) as intermediate steps in the proofs of Propositions \ref%
{Theorem1_EQBSDE} and \ref{Theorem3_EQBSDE}. The equations (\ref{log-EBSDE})
and (\ref{log-BSDE}) appearing in the logarithmic case are degenerate
versions on (\ref{EQBSDE1}) and (\ref{IQBSDE1}), so they are not discussed.

We start with the key observation that, using Assumption 1.\textit{ii} on
the market price of risk process as well as the definition of the admissible
set $\mathcal{A}$ and the Lipschitz continuity of the distance function $%
dist\left( \Pi ,\cdot \right) $, we deduce that the drivers $H=F$, $G$
appearing in (\ref{driver-power}) and (\ref{G-driver}) satisfy
\begin{equation}
|H(v,z)-H(\bar{v},z)|\leq C_{v}(1+|z|)|v-\bar{v}|,  \label{driver0}
\end{equation}%
\begin{equation}
|H(v,z)-H(v,\bar{z})|\leq C_{z}(1+|z|+|\bar{z}|)|z-\bar{z}|,  \label{driver1}
\end{equation}%
and
\begin{equation}
|H(v,0)|\leq K,  \label{driver2}
\end{equation}%
for any $v,\bar{v},z,\bar{z}\in \mathbb{R}^{d}$, and $C_{v}$, $C_{z}$, $K>0$
being positive constants.

The main ideas for establishing existence and uniqueness of solutions come
from Theorem 3.3 in \cite{Briand0}, Theorem 3.3 in \cite{Briand}, Theorem
4.4 in \cite{HU2}, and Theorem 2.3 in \cite{Kobylanski}. To this end, we
first define the truncation function $q:$ $\mathbb{R}^{d}\rightarrow \mathbb{%
R}^{d},$
\begin{equation}
q(z):=\frac{\min \left( |z|,C_{v}/(C_{\eta }-C_{v})\right) }{|z|}z\mathbf{1}%
_{\{z\neq 0\}},  \label{truncation}
\end{equation}%
and consider the truncated ergodic BSDE,
\begin{equation}
d{Y}_{t}=\left( -{H}(V_{t},q({Z}_{t}))+\lambda \right) dt+{Z}_{t}^{T}dW_{t},
\label{EQBSDE3}
\end{equation}%
$t\geq 0,$ where $q$ is as in (\ref{truncation}), and the driver ${H}(\cdot
,\cdot )$ satisfies conditions (\ref{driver0})-(\ref{driver2}). We easily
obtain the Lipschitz continuity conditions
\begin{equation}
|{H}(v,q(z))-{H}(\bar{v},q(z)|\leq \frac{C_{\eta }C_{v}}{C_{\eta }-C_{v}}|v-%
\bar{v}|,  \label{driver3}
\end{equation}%
and%
\begin{equation}
|H(v,q(z))-{H}(v,q(\bar{z})|\leq C_{z}\frac{C_{\eta }+C_{v}}{C_{\eta }-C_{v}}%
|z-\bar{z}|.  \label{driver4}
\end{equation}%
If, therefore, we can show that the BSDE (\ref{EQBSDE3}) admits a Markovian
solution denoted, say by $(Y_{t},Z_{t},{\lambda })$ with $|{Z}_{t}|\leq
\frac{C_{v}}{C_{\eta -C_{v}}}$, $t\geq 0$, then $q({Z}_{t})={Z}_{t}$, $t\geq
0$. In turn, this process $({Y}_{t},{Z}_{t},{\lambda })$ would also solve
the ergodic BSDE (\ref{EQBSDE1}) in Proposition \ref{Theorem1_EQBSDE} and (%
\ref{EQBSDE2}) in Proposition \ref{Theorem3_EQBSDE}, respectively.

We first establish existence of Markovian solutions of (\ref{EQBSDE3}). For
this, we adapt the perturbation technique and the Girsanov's transformation
used in Section 4 of \cite{HU2} in an infinite dimensional setting. To this
end, let $n>0,$ and consider the discounted BSDE with a small discount
factor, say $\rho >0$, on the finite horizon $[0,n]$,
\begin{equation}
Y_{t}^{\rho ,v,n}=\int_{t}^{n}\left( H(V_{s}^{v},q(Z_{s}^{\rho ,v,n}))-\rho
Y_{s}^{\rho ,v,n}\right) ds-\int_{t}^{n}\left( Z_{s}^{\rho ,v,n}\right)
^{T}dW_{s},  \label{QBSDE}
\end{equation}%
where we use the superscript $v$ to emphasize the initial dependence of the
stochastic factor process on its initial data $V_{0}^{v}=v$.

From Section 3.1 of \cite{Briand}, we deduce that BSDE (\ref{QBSDE}) admits
a unique solution $(Y_{t}^{\rho ,v,n},Z_{t}^{\rho ,v,n})\in \mathcal{L}%
_{[0,n]}^{2}$ with $|Y_{t}^{\rho ,v,n}|\leq \frac{K}{\rho }$, $0\leq t\leq
n, $ where
\begin{equation*}
\mathcal{L}^{2}\left[ 0,n\right] =\left \{ (Y_{t})_{t\in \lbrack 0,n]}:Y\
\text{is}\  \mathbb{F}\text{-progressively\ measurable}\right. \
\end{equation*}%
\begin{equation*}
\left. \text{and \ }E_{\mathbb{P}}(\int_{0}^{n}|Y_{t}|^{2}dt)<\infty \right
\} .
\end{equation*}%
On the other hand, parameterizing (\ref{QBSDE}) by the auxiliary horizon $n$%
, we obtain (cf. section 3.1 of \cite{Briand}) that there exists a process $%
Y_{t}^{\rho ,v},$ $t\geq 0,$ such that
\begin{equation}
\lim_{n\uparrow \infty }Y_{t}^{\rho ,v,n}=Y_{t}^{\rho ,v},
\label{convergence0}
\end{equation}%
for a.e. $(t,\omega )\in \lbrack 0,\infty )\times \Omega $, and moreover,
that for each $\rho >0,$ both $\{Y_{t}^{\rho ,v,n}\}$ and $\{Z_{t}^{\rho
,v,n}\}$ are Cauchy sequences in $\mathcal{L}_{\rho }^{2}\left[ 0,\infty %
\right] $, where
\begin{equation}
\mathcal{L}_{\rho }^{2}[0,\infty )=\left \{ (Y_{t})_{t\in \lbrack 0,\infty
)}:Y\  \text{is}\  \mathbb{F}\text{-progressively\ measurable}\right.
\label{L-2-rho}
\end{equation}%
\begin{equation*}
\left. \text{and\  \ }E_{\mathbb{P}}(\int_{0}^{\infty }e^{-2\rho
t}|Y_{t}|^{2}dt)<\infty \right \} .
\end{equation*}%
Therefore, there exist limiting processes $(Y_{t}^{\rho ,v},Z_{t}^{\rho
,v}), $ $t\geq 0,$ belonging to $\mathcal{L}_{\rho }^{2}[0,\infty ),$ such
that
\begin{equation}
\lim_{n\uparrow \infty }(Y_{t}^{\rho ,v,n},Z_{t}^{\rho ,v,n})=(Y_{t}^{\rho
,v},Z_{t}^{\rho ,v})  \label{convergence1}
\end{equation}%
in $\mathcal{L}_{\rho }^{2}[0,\infty )$ with $|Y_{t}^{\rho ,v}|\leq \frac{K}{%
\rho }.$ It is, then, easy to show that the process $(Y_{t}^{\rho
,v},Z_{t}^{\rho ,v})$, $t\geq 0$, is a solution to the {infinite horizon }%
BSDE
\begin{equation}
dY_{t}^{\rho ,v}=\left( -H(V_{t}^{v},q(Z_{t}^{\rho ,v}))+\rho Y_{t}^{\rho
,v}\right) dt+\left( Z_{t}^{\rho ,v}\right) ^{T}dW_{t}.  \label{QBSDE2}
\end{equation}%
Moreover, we recall that the solution is Markovian in the sense that there
exist functions, say $y^{\rho }(\cdot )$ and $z^{\rho }(\cdot )$, such that
\begin{equation*}
\left( Y_{t}^{\rho ,v},Z_{t}^{\rho ,v}\right) =\left( y^{\rho
}(V_{t}^{v}),z^{\rho }(V_{t}^{v})\right) .
\end{equation*}

{Next, using the Girsanov's transformation and adapting the argument in
Lemma 4.3 in \cite{HU2}, we claim that the Lipschitz continuity property}%
\textbf{\ }%
\begin{equation}
|y^{\rho }(V_{t}^{v})-y^{\rho }(V_{t}^{\bar{v}})|\leq \frac{C_{v}}{C_{\eta
}-C_{v}}|V_{t}^{v}-V_{t}^{\bar{v}}|  \label{Lip}
\end{equation}%
holds, {for any }$v,\bar{v}\in \mathbb{R}^{d}${, with the constants }$C_{v}$
and $C_{\eta }$ as in (\ref{driver0}) and (\ref{dissipative}), respectively.

Indeed, define, for $t\geq 0,$
\begin{equation*}
\Delta Y_{t}:=Y_{t}^{\rho ,v}-Y_{t}^{\rho ,\bar{v}}\text{ \  \  \ and \  \  \ }%
\Delta Z_{t}:=Z_{t}^{\rho ,v}-Z_{t}^{\rho ,\bar{v}}.
\end{equation*}%
Then,
\begin{equation*}
d\left( \Delta Y_{t}\right) =-\left( H(V_{t}^{v},q(Z_{t}^{\rho
,v}))-H(V_{t}^{\bar{v}},q(Z_{t}^{\rho ,\bar{v}}))\right) dt+\rho \Delta
Y_{t}dt+\left( \Delta Z_{t}\right) ^{T}dW_{t}
\end{equation*}%
\begin{equation*}
=-\Delta H_{t}dt+\rho \Delta Y_{t}dt+\left( \Delta Z_{t}\right) ^{T}\left(
dW_{t}-m_{t}dt\right) ,
\end{equation*}%
where $\Delta H_{t}:=H(V_{t}^{v},q(Z_{t}^{\rho ,v}))-H(V_{t}^{\bar{v}%
},q(Z_{t}^{\rho ,v}))$ and%
\begin{equation*}
m_{t}:=\frac{H(V_{t}^{\bar{v}},q(Z_{t}^{\rho ,v}))-H(V_{t}^{\bar{v}%
},q(Z_{t}^{\rho ,\bar{v}}))}{|\Delta Z_{t}|^{2}}\Delta Z_{t}\mathbf{1}_{\{
\Delta Z_{t}\neq 0\}}.
\end{equation*}%
The process $m_{t}$ is bounded, as it follows from (\ref{driver4}).
Therefore, we can define the process $\bar{W}_{t}:=W_{t}-\int_{0}^{t}m_{u}du$%
, $t\geq 0$, which is a Brownian motion under some measure $\mathbb{Q}^{m}$
equivalent to $\mathbb{P}$. Hence, for $0\leq t\leq s<\infty $, taking
conditional expectation on $\mathcal{F}_{t}$ under $\mathbb{Q}^{m}$ yields
\begin{equation*}
\Delta Y_{t}=\frac{\beta _{s}}{\beta _{t}}E_{\mathbb{Q}^{m}}\left( \Delta
Y_{s}|\mathcal{F}_{t}\right) +E_{\mathbb{Q}^{m}}\left( \left. \int_{t}^{s}%
\frac{\beta _{u}}{\beta _{t}}\left( \Delta H_{u}\right) du\right \vert
\mathcal{F}_{t}\right) ,
\end{equation*}%
where $\beta _{t}=e^{-\rho t}$. Note, however, that the first expectation
above is bounded by $2K/\rho $, and thus, it converges to zero as $s\uparrow
\infty $. Moreover, by (\ref{driver3}), the second expectation is bounded by%
\begin{equation*}
E_{\mathbb{Q}^{m}}\left( \left. \int_{t}^{s}\frac{\beta _{u}}{\beta _{t}}%
\left( \Delta H_{u}\right) du\right \vert \mathcal{F}_{t}\right) \leq \frac{%
C_{\eta }C_{v}}{C_{\eta }-C_{v}}E_{\mathbb{Q}^{m}}\left( \left.
\int_{t}^{s}e^{-\rho (u-t)}|V_{u}^{v}-V_{u}^{\bar{v}}|du\right \vert
\mathcal{F}_{t}\right)
\end{equation*}%
\begin{equation*}
\leq \  \frac{C_{\eta }C_{v}}{C_{\eta }-C_{v}}\frac{e^{{\rho }t}\left(
e^{-(\rho +C_{\eta })t}-e^{-(\rho +C_{\eta })s}\right) }{\rho +C_{\eta }}|v-%
\bar{v}|,
\end{equation*}%
where we used the exponential ergodicity condition (\ref%
{exponentialcondition}). Then, inequality (\ref{Lip}) follows by letting $%
s\uparrow \infty $.

Next, assume that $y^{\rho }(\cdot )\in C^{2}(\mathbb{R}^{d})$. Applying It%
\^{o}'s formula to $y^{\rho }(V_{t}^{v})$ yields
\begin{equation}
dy^{\rho }(V_{t}^{v})=\mathcal{L}y^{\rho }(V_{t}^{v})dt+\left( \kappa
^{T}\nabla y^{\rho }(V_{t}^{v})\right) ^{T}dW_{t},  \label{ITO}
\end{equation}%
where $\mathcal{L}$ is as in (\ref{generatorofV}). In turn, from (\ref%
{QBSDE2}) and (\ref{ITO}) we deduce that
\begin{equation}
\kappa ^{T}\nabla y^{\rho }(V_{t}^{v})=Z_{t}^{\rho ,v},  \label{relationZ}
\end{equation}%
and (with a slight abuse of notation) that
\begin{equation}
\rho y^{\rho }(v)=\mathcal{L}y^{\rho }(v)+H\left( v,q\left( \kappa
^{T}\nabla {y}^{\rho }(v)\right) \right) ,  \label{ellipticPDE}
\end{equation}%
for $v\in \mathbb{R}^{d}.$ The above equation (\ref{ellipticPDE}) is a
standard semilinear elliptic PDE, and classical PDE results yield that it
admits a unique bounded solution $y^{\rho }(\cdot )\in C^{2}(\mathbb{R}^{d})$%
, with $|y^{\rho }(v)|\leq \frac{K}{\rho }$. In addition, recall that (\ref%
{Lip}) yields $|\nabla y^{\rho }(v)|\leq \frac{C_{v}}{C_{\eta }-C_{v}},$ and
thus, using (\ref{relationZ}) and Assumption 2 on the matrix $\kappa $, we
obtain that, for $t\geq 0$,
\begin{equation}
|Z_{t}^{\rho ,v}|\leq \frac{C_{v}}{C_{\eta }-C_{v}}.  \label{estimateZ}
\end{equation}

Next, we fix a reference point, say $v_{0}\in \mathbb{R}^{d}$. Define the
process $\bar{Y}_{t}^{\rho ,v}:=Y_{t}^{^{\rho ,v}}-Y_{0}^{\rho ,v_{0}},$ and
consider the perturbed version of the {infinite horizon} BSDE (\ref{QBSDE2}%
), namely,
\begin{equation*}
\bar{Y}_{t}^{\rho ,v}=\bar{Y}_{s}^{\rho ,v}+\int_{t}^{s}\left(
H(V_{u}^{v},q(Z_{u}^{\rho ,v}))-\rho \bar{Y}_{u}^{^{\rho ,v}}-\rho
Y_{0}^{\rho ,v_{0}}\right) du-\int_{t}^{s}\left( Z_{u}^{\rho ,v}\right)
^{T}dW_{u},
\end{equation*}%
for $0\leq t\leq s<\infty $. Then $\bar{Y}_{t}^{\rho ,v}=\bar{y}^{\rho
}(V_{t}^{v})$ with $\bar{y}^{\rho }(\cdot )$ $=y^{\rho }(\cdot )-y^{\rho
}(v_{0})$.

Since, on the other hand, $y^{\rho }(\cdot )$ is Lipschitz continuous,
uniformly in $\rho ,$ we deduce that $\left \vert \bar{y}^{\rho
}(v)\right
\vert \leq \frac{C_{v}}{C_{\eta }-C_{v}}\left \vert
v-v_{0}\right \vert .$ Moreover, $|\rho y^{\rho }(v)|\leq K.$\textbf{\emph{\
}}Hence, there exists a sequence $\rho _{0n}\downarrow 0$ such that
\begin{equation*}
\lim_{\rho _{0n}\downarrow 0}\rho _{0n}y^{\rho _{0n}}(v_{0})=\lambda ,
\end{equation*}%
for some constant $\lambda $.

Next, we take a dense subset, say $S=\{v_{1},\cdots ,v_{n},\cdots \} \in
\mathbb{R}^{d}$. Since $\bar{y}^{\rho _{0n}}(v_{1})$ is bounded, there
exists a subsequence of $\{ \rho _{0n}\}$, denoted as $\{ \rho _{1n}\}$,
such that
\begin{equation*}
\lim_{\rho _{1n}\downarrow 0}\bar{y}^{\rho _{1n}}(v_{1})=y(v_{1}),
\end{equation*}%
for some $y(v_{1})$. Proceeding this way, we obtain a sequence $\{ \rho
_{0n}\} \supset \{ \rho _{1n}\} \supset \cdots $. Taking its diagonal
sequence $\{ \rho _{nn}\}$, denoted as $\{ \rho _{n}\}$, we deduce that, for
$v\in S$,
\begin{equation}
\lim_{\rho _{n}\downarrow 0}\rho _{n}y^{\rho _{n}}(v_{0})=\lambda \text{ \  \
and \ }\lim_{\rho _{n}\downarrow 0}\bar{y}^{\rho _{n}}(v)=y\left( v\right) .%
\text{ \  \ }  \label{**}
\end{equation}

Moreover, since the function $\bar{y}^{\rho }(\cdot )$ is Lipschitz
continuous uniformly in $\rho $, the limit $y(\cdot )$ can be extended to a
Lipschitz continuous function defined for all $v\in \mathbb{R}^{d}$, \
\begin{equation}
\lim_{\rho _{n}\downarrow 0}\bar{y}^{\rho _{n}}(v)=y\left( v\right) .
\label{*}
\end{equation}

Thus, we have $\lim_{\rho _{n}\downarrow 0}\bar{Y}_{t}^{\rho _{n},v}=y\left(
V_{t}^{v}\right) $ \  \ and \ $\lim_{\rho _{n}\downarrow 0}\left( \rho _{n}%
\bar{Y}_{t}^{\rho _{n}}\right) =0.$

Next, define the process ${Y}_{t}^{v}=y(V_{t}^{v}),$ $t\geq 0.$ It is then
standard to show that there exists ${Z}_{u}^{v}=z(V_{u}^{v})$, $u\in \lbrack
t,s]$, in $\mathcal{L}^{2}[t,s]$ such that $\lim_{\rho _{n}\downarrow
0}Z^{\rho _{n},v}={Z}^{v}$ in $\mathcal{L}^{2}[t,s]$, and moreover, that the
triplet $({Y}_{t}^{v},{Z}_{t}^{v},{\lambda })=(y(V_{t}^{v}),z(V_{t}^{v}),%
\lambda )$ is a solution to the truncated ergodic BSDE (\ref{EQBSDE3}).

Finally, using the latter limit and the fact that $|Z_{t}^{\rho ,v}|\leq
C_{v}/(C_{\eta }-C_{v}),$ as it follows from (\ref{estimateZ}), we obtain
that $|{Z}_{t}^{v}|\leq C_{v}/(C_{\eta }-C_{v})$. Therefore, $q({Z}_{t}^{v})=%
{Z}_{t}^{v}$, $t\geq 0$, and in turn, the triplet $({Y}^{v},{Z}^{v},{\lambda
})$ is also a solution to the ergodic BSDEs (\ref{EQBSDE1}) and (\ref%
{EQBSDE2}) in Propositions \ref{Theorem1_EQBSDE} and \ref{Theorem3_EQBSDE},
respectively.

From the above arguments, it follows, as a by-product, the existence of
Markovian solutions to the infinite horizon BSDEs (\ref{IQBSDE1}) and (\ref%
{IQBSDE2}), respectively.

It remains to show the uniqueness of Markovian solutions to the ergodic BSDE
(\ref{EQBSDE1}) and (\ref{EQBSDE2}). Indeed, since $Z_{t}$, $t\geq 0$, is
bounded by $C_{v}/(C_{\eta }-C_{v})$ for both equations (\ref{EQBSDE1}) and (%
\ref{EQBSDE2}), the uniqueness can be proved along similar arguments used in
Theorem 4.6 in \cite{HU2} and Theorem 3.11 in \cite{HU1}.

The uniqueness of the Markovian solutions to the infinite horizon BSDE (\ref%
{IQBSDE1}) and (\ref{IQBSDE2}) follows easily from Section 3.1 in \cite%
{Briand} and Theorem 3.3 in \cite{Briand0}.


\newif \ifabfull \abfulltrue
\providecommand{\bysame}{\leavevmode \hbox to3em{\hrulefill}\thinspace} %
\providecommand{\MR}{\relax \ifhmode \unskip \space \fi MR }
\providecommand{\MRhref}[2]{
\href{http://www.ams.org/mathscinet-getitem?mr=#1}{#2} } \providecommand{%
\href}[2]{#2}

\end{document}